\pgfplotsset{width=10cm,compat=1.9}
\newfont{\rams}{msbm10 scaled\magstep1}
\newtheorem{theorem}{Theorem}[section]
\def\ps@pprintTitle{%
 \let\@oddhead\@empty
 \let\@evenhead\@empty
 \def\@oddfoot{}%
 \let\@evenfoot\@oddfoot}
\begin{document}
%%%%%%%%%%%%%%%%%%%%%%%

\begin{frontmatter}
\title{Discounting and Impatience}

\author[Eco]{\rm Salvatore Greco}
\ead{salgreco@unict.it}
\author[Eco]{\rm Diego Rago}
\ead{diego.rago@phd.unict.it}

\address[Eco]{Department of Economics and Business, University of Catania, Corso Italia, 55, 95129  Catania, Italy}
%\address[por]{University of Portsmouth, Portsmouth Business School, Centre of Operations Research and Logistics (CORL), Richmond Building, Portland Street, Portsmouth PO1 3DE, United Kingdom}

\date{} \journal{arXiv}
%\maketitle
%\vspace{-1cm}

%%%%%%%%%%%%%%%%%%%%
\begin{abstract}
%\begin{resumeT}
%{\large {\bf Abstract:}}  

Understanding how people actually trade off time for money 
is perhaps the major question in the field of time discounting.
There is indeed a vast body of work devoted to explore the underlying mechanisms of the individual decision making process in an intertemporal context. 

This paper presents a family of new discount functions whereof we derive a formal axiomatization.
Applying the framework proposed by Bleichrodt, Rohde and Wakker, we further extend their formulation of CADI and CRDI functions, making discounting a function not only of time delay but, simultaneously, also of time distortion.
Our main purpose is, in practice, to provide a tractable setting within which individual intertemporal preferences can be outlined.

Furthermore, we apply our models to study the relation between individual time preferences and personality traits. For the CADI-CADI, results show that the habit of smoking is heavily related with both impatience and time perception. Within the Big-Five framework, conscientiousness, agreeableness and openness are positively related with patience (low r, initial discount rate).

%\vspace{1cm}
\end{abstract}
%%%%%%%%%%%%%%%%%%%%
%\vspace{0,3cm}

%\noindent{\bf Keywords}: {Discounting function, time preferences, hyperbolic discounting}
%\end{resumeT}

\begin{keyword}
discount function \sep impatience \sep time preferences \sep hyperbolic discounting
\end{keyword}

\end{frontmatter}
%%%%%%%%%%%%%%%%%%%%%%%%%%%%%%%%%%%%%%%%%%%%%%%%%%%%%%%
%%%%%%%%%%%%%%%%%%%%%%%%%%%%%%%%%%%%%%%%%%%%%%%%%%%%%%%

%%%%%%%%%%%%%%%%%%%%%%%%%%%%%%%%%%%%%%%%%%%%%%%%%

%%%%%%%%%%%     INTRODUCTION     %%%%%%%%%%%%%%%%

%%%%%%%%%%%%%%%%%%%%%%%%%%%%%%%%%%%%%%%%%%%%%%%%%

\section{Introduction}

% There is an enormous body of work devoted exclusively to ...
% There is indeed a substantial literature on how to measure discounting ...
Time discounting has been investigated extensively by researchers, who proposed several models, attempting to replicate how decision makers evaluate intertemporal preferences through the correct functional form\footnote{For an overview, let us mention as valuable literature reviews the works of Frederick, Loewenstein and O'Donoghue \cite{frederick2002time}, Doyle \cite{doyle2013survey} and Cohen et al. \cite{cohen2020measuring}.}.
At a first approximation, the related literature can be condensed into two main paradigms: the exponential discounting and the hyperbolic discounting. The former, developed by Samuelson \cite{samuelson1937note}, assumes individuals discount time at a constant rate: no matter how far from today outcomes are, the discount rate would be same. The exponential model works smoothly for theoretical purposes and it is easy to implement but has a major drawback, as it fails to explain a plethora of inconsistencies that arise from the evidence.

Indeed, it would be more reasonable to assume that people apply higher discount in outcomes closer to the present than outcomes that are further in time, a phenomenon referred to as \emph{decreasing impatience}. That was precisely the insight underlying the hyperbolic discounting. Psychologists studying behavior in both human and animal subjects pioneered hyperbolic discounting, proposing various functional form of the discount functions in order to represent a time decaying discount framework \cite{ainslie1992hyperbolic}. 

Within the hyperbolic framework one can distinguish two sub-families of models: the standard hyperbolic formulation of Loewestein and Prelec \cite{loewenstein1992anomalies} and the quasi-hyperbolic models, firstly proposed by Phelps and Pollak studying intergenerational altruism \cite{phelps1968second}, and later developed by Laibson who analyzed the behavior of hyperbolic consumers in financial decisions \cite{laibson1997golden}. Laibson results suggested that people exhibit preference for immediate reward, revealing what he described as \emph{present-bias}.

Various anomalies \cite{loewenstein1992anomalies} observed in experiments justify and support this class of models. In particular, the \emph{common difference effect} entails a switching preference between two delayed outcomes, when both delays are incremented by a given constant amount: namely, between two rewards close to the present, the majority of population prefers the smaller-sooner but when both rewards are deferred the most are willing to wait choosing the larger-later outcome. Such a behavior seems to uphold the idea that individuals discount the future hyperbolically.

Despite its popularity, the hyperbolic discounting does not allow enough flexibility in its applications. 
In practice, there is no room for increasing impatience or strong decreasing impatience. Indeed, empirical studies observed a \emph{deviation} from the standard decreasing impatience, with individuals exhibiting increasing instead of decreasing impatience, at least at the aggregate level \cite{abdellaoui2010intertemporal}.

In recent years, some researchers developed more sophisticated models that could embed increasing impatience as well. Moreover, most of discounting models had focused on the valuation of monetary outcomes at different points, while little has been said regarding how individuals perceive time. There are models that can accommodate the subjectivity of time \cite{zauberman2009discounting}.

Among the very few studies that make discounting a function of two variables, the work of Read and Scholten is noteworthy. They further develop the \emph{subadditive} Read's model \cite{read2001time}, which claims that declining impatience could arise due to total discounting being greater when the overall time horizon is partitioned into subintervals: in practice, discount rates tend to be higher the closer the outcomes are to one another.
In an elegant paper \cite{scholten2006discounting} Read and Scholten formulate a generalized model of intertemporal choices, the Discounting by Interval model, assuming the discount rate to be a function not only of the time delay, but also of the interval between each future alternative. Their main result was that when the interval between outcomes is very short, discount rate tends to increase with interval length, exhibiting \emph{superadditive} discounting.

Deviations from simple decreasing impatience have been explored by Bleichrodt, Rohde and Wakker \cite{bleichrodt2009non}.
They proposed two classes of discount functions: one with constant absolute (CADI) and the other with constant relative (CRDI) decreasing impatience in order to embed flexibly decreasing or increasing impatience. The CADI class, in comparison with the risky choice models, is presented in fact as the analog of the constant absolute risk averse function (CARA)\footnote{In the same way, the constant relative decreasing impatience is the analog of the constant relative risk aversion in the utility theory.}.

The present work attempts to further develop the discounting framework built by Bleichrodt, Rohde and Wakker.
Our perspective is inspired by the Scholten and Read's insight: namely, making a discount function depending on time delay and time interval, we present a more complete technique to measure time discounting, providing a solid axiomatic architecture.

With discounting being a function of such two variables, we formally derive four families of discount function starting from the Wakker's CADI and CRDI functions.
For example, the CADI-CADI function will be the discount function exhibiting constant absolute decreasing impatience with respect to time delay $t$, and, again, constant absolute decreasing impatience with respect to time interval $T$. In turn, a CRDI-CADI function exhibits constant \emph{relative} decreasing impatience for time delay $t$, but constant \emph{absolute}
decreasing impatience for the time interval $T$. %We will demonstrate that a special case of our model return exactly the formulation proposed by Wakker.

The remaining of this paper is organized as follows. %In Section \ref{basics} we recall some basic concepts on discounting.
The formal model is in Section \ref{familyoffunction}. Section \ref{prelec} presents the Prelec's measure for each family of discount function. Next, through a field experiment we empirically test the family of CADI-CRDI models in Section \ref{Experiment}. Finally, Section \ref{conclusion} presents conclusions.
%%%%%%%%%%%%%%%%%%%%%%%%%%%%%%
%\newpage
%%%%%%%%%%%%%%%%%%%%%%%%%%%%%%

%%%%%%%%%%%%%%%%%%%%%%%%%%%%%%%%%%%%%%%%%%%%%%%%%

%%%%%%%%%%%%      SECTION 2      %%%%%%%%%%%%%%%%

%%%%%%%%%%%%%%%%%%%%%%%%%%%%%%%%%%%%%%%%%%%%%%%%%

\section{Discounting as a function of both time delay and time interval}\label{familyoffunction}

In this section, we present four classes of discount functions, further extending the  formulation proposed by Wakker et al. concerning CADI and CRDI functions, as dependent not only on time but also on intervals between different times.
We consider the set of time epochs $\mathcal{T}=[0,+\infty[$, a set of outcomes $X$ and the set of timed outcomes $\mathcal{X}=\{(t:x) \in \mathcal{T} \times X\}$. We consider a preference $\succsim$ over $\mathcal{X}$, with, as usually, $\succ$ and $\sim$ representing, respectively, the asymmetric and the symmetric part of $\succsim$. We assume that for all $(t:x),(t+T:y)\in\mathcal{X}$, with $T \ge 0$,
\begin{equation}\label{model1}
(t:x) \succsim (t+T:y) \mbox{ $\iff$ } U(x) \ge F(t,T)\cdot U(y) \notag
\end{equation}
with $U:X \rightarrow \mathbb{R}_{+}$ being a utility function and $F:\mathcal{T}\times \mathbb{R}_+ \rightarrow [0,1]$ being a relative discount function related to the analogous concept introduced in \citep{ok2007theory} for which 
\begin{equation}\label{model2}
(t:x) \succsim (t+T:y) \mbox{ $\iff$ } U(x) \ge \eta(t+T,t)\cdot U(y) \notag
\end{equation}
so that $F(\cdot,\cdot)$ is a reformulation of $\eta(\cdot,\cdot)$ such that for all $t_1,t_2 \in \mathcal{T}, t_1 \le t_2$
\begin{equation}\label{model3}
\eta(t_2,t_1)=F(t_1,t_2-t_1)
\end{equation}
In the following we assume that $U$ is an increasing homeomorphism and $F$ is a continuous map. Preference conditions for the relative discount function $\eta$ that on the basis of (\ref{model3}) can be extended to $F$ have been provided by \citep{ok2007theory}. In the following, as detailed in Section \ref{axioms} we assume that $F(t,T)$ is increasing in $t$ and decreasing in $T$.

%Within our framework, time preference $F(t,T)$ is indeed assumed to be a function of two variables accommodating time: $t$, representing a point in time and $T$, that is the time elapsing between two outcomes.

%%%%%%%%%%%%%%%%%%%%%%%%%%%%%%%%%%%%%%%%%%%%%%%%%

%%%%%%%%%    DISCOUNT FUNCTIONS     %%%%%%%%%%%%%

%%%%%%%%%%%%%%%%%%%%%%%%%%%%%%%%%%%%%%%%%%%%%%%%%

\subsection{New classes of discount functions}

We consider the following relative discount functions related to the Constant Absolute Decreasing Impatience (CADI) and Constant Relative Decreasing Impatience (CRDI) introduced in \citep{bleichrodt2009non}, in which $r>0$:
\begin{itemize}
	\item CADI-CADI discount function:
\begin{equation}\label{formula1}
F(t,T)=e^{-\int_0^T r e^{-\delta t} e^{-\gamma \tau} d \tau} 
\end{equation}
that is
\begin{equation}\label{formula2}
F(t,T)=e^{-r e^{-\delta t} \frac{1-e^{-\gamma T}}{\gamma}} \text{ for }\gamma\neq 0
\end{equation}
and 
\begin{equation}\label{formula3}
F(t,T)=e^{-r e^{-\delta t} T}  \text{ for }\gamma=0
\end{equation}
The CADI-CADI discount function is defined for $t\ge 0$ and $T \ge 0$. The monotonicity with respect to $t$ and $T$ implies that $\delta>0$ and $\gamma$ can take any value in $\mathbb{R}$. In the above formulation of relative discount factor $F(t,T)$, the parameter $r$ is a sort of initial discount rate, while, if they assume a positive value, $\delta$ and $\gamma$ are  decay rate of $r$ with respect to $t$ and $T$, respectively. 
As detailed in Section \ref{prelec}, the above mentioned formulations (\ref{formula2}) and (\ref{formula3}) exhibit constant absolute decreasing impatience with respect to time $t$ and  time interval measure $T$.
%Observe that we can rewrite (\ref{formula1}) as 
%\begin{equation}\label{formula1bis} \notag
%F(t,T)=e^{-\int_0^T \rho(t,\tau) d \tau }
%\end{equation}
%with $\rho(t,\tau)$ denoting the instantaneous discount rate at time %$t+\tau$, such that
%\begin{equation}\label{formula2bis} \notag
%\rho(t,\tau)=r e^{-\delta t} e^{-\gamma \tau}
%\end{equation}
%Considering (\ref{formula2bis}), $\delta$ measures how the instantaneous discount rate $\rho(t,\tau)$ decreases with respect to how far the outcomes are removed from the present, $t$, while $\gamma$ measures how $\rho(t,\tau)$ decreases with respect to how far the outcomes are removed one from another, $T$. 
%Consequently, in our CADI-CADI formulation we assume that the instantaneous discount rate $\rho(t)=r\cdot e^{-\delta t}$
%For this family of functions, notice that the parameter $r$ is a sort of initial discount rate, while $\delta$ and $\gamma$ are sensitivity parameters, measuring the intensity whereby discount decays for the former with respect to $t$ and for the latter with respect to $T$.
       \item CRDI-CRDI discount function
\begin{equation}\label{formula7}
F(t,T)=e^{-\int_0^T r t^\alpha \tau^\beta} d \tau
\end{equation}
that is
\begin{equation}\label{formula8}
F(t,T)=e^{-r t^\alpha \frac{T^{\beta+1}}{\beta+1}}
\end{equation}
The CRDI-CRDI discount function is defined for $t>0$ and $T \ge 0$. The monotonicity with respect to $t$ and $T$ implies $\alpha<0$ and $\beta \neq -1$.
In the  formulation (\ref{formula7}) of relative discount factor $F(t,T)$, the parameter $r$ is again an initial discount rate, while $\alpha$ and $\beta$ can be seen as the degree of nonlinear scaling future time perception \cite{kim2013can}  with respect to $t$ and $T$, respectively. 
As detailed in Section \ref{prelec}, the above mentioned formulation (\ref{formula8}) exhibits constant relative decreasing impatience with respect to both time $t$ and time interval measure $T$.
%Observe that we can rewrite (\ref{formula1}) as 
%\begin{equation}\label{formula1bis}
%F(t,T)=e^{-\int_0^T \rho(t,\tau) d \tau }
%\end{equation}
%with $\rho(t,\tau)$ denoting the instantaneous discount rate at time $t+\tau$, such that
%\begin{equation}\label{formula2bis}
%\rho(t,\tau)=r e^{-\delta t} e^{-\gamma \tau}
%\end{equation}
%Considering (\ref{formula2bis}), $\delta$ measures how the instantaneous discount rate $\rho(t,\tau)$ decreases with respect to how far the outcomes are removed from the present while $\gamma$ measures how $\rho(t,\tau)$ decreases with %respect to how far the outcomes are removed one from another. 

       \item CADI-CRDI discount function
\begin{equation}\label{cadicrdi1}
F(t,T)=e^{-\int_0^T r e^{-\delta t} \tau^\beta} d \tau
\end{equation}
that is
\begin{equation}\label{cadicrdi2}
F(t,T)=e^{-r e^{-\delta t} \frac{T^{\beta+1}}{\beta+1}}
\end{equation}
The CADI-CRDI discount function is defined for $t\ge 0$ and $T \ge 0$.The monotonicity with respect to $t$ and $T$ implies that $\delta>0$ and $\beta \neq -1$.
In the  formulation (\ref{cadicrdi1}) of relative discount factor $F(t,T)$, the parameter $\delta$ is a decay rate of the initial discount rate $r$ with respect to $t$, while  $\beta$ is the degree of nonlinear scaling  time perception  with respect to $T$, respectively.
As detailed in Section The above mentioned formulation (\ref{cadicrdi2}) exhibits constant absolute decreasing impatience with respect to time $t$ and constant relative decreasing impatience for time interval measure $T$.
\item CRDI-CADI discount function
\begin{equation}\label{formula5}
F(t,T)=e^{-\int_0^T r t^\alpha e^{-\gamma \tau}} d \tau
\end{equation}
that is
\begin{equation}\label{formula6}
F(t,T)=e^{-r t^\alpha \frac{1-e^{-\gamma T}}{\gamma}} \text{ for }\gamma\neq 0
\end{equation}
and 
\begin{equation}\label{formula6_bis}
F(t,T)=e^{-r t^\alpha T} \text{ for }\gamma=0
\end{equation}
The CRDI-CADI discount function is defined for $t> 0$ and $T \ge 0$. The monotonicity with respect to $t$ and $T$ implies $\alpha<0$ and $\gamma$ can take any value in $\mathbb{R}$. In a similar manner with respect to the previous case (\ref{cadicrdi1}), this family of discount function $F(t,T)$ presents a decay rate $\gamma$ of the discount rate with respect to $T$ and a nonlinear scaling  time perception $\alpha$ with respect to $t$.
Besides, CRDI-CADI family exhibits constant absolute decreasing impatience with respect to $T$, with the $\gamma$ parameter.
\end{itemize}

%%%%%%%%%%%%%%%%%%%%%%%%%%%%%%%%%%%%%%%%%%%%%%%%%

%%%%%%%%            AXIOMS          %%%%%%%%%%%%%

%%%%%%%%%%%%%%%%%%%%%%%%%%%%%%%%%%%%%%%%%%%%%%%%%

\subsection{Structural axioms}\label{axioms}

To characterize the above discount functions we introduce the following axioms:
\begin{itemize}

 \item \textbf{Axiom 1.}  \emph{Time interval monotonicity.} For all $T_1, T_2$ and for all $t$, %if
\[
T_1 < T_2 \Longrightarrow F(t, T_1)>F(t, T_2)
\]

As the time span $T$ increases, an outcome further in time is more heavily discounted. This means that, the discount factor gets smaller and such an outcome would be, in fact, less appealing. Formally, the discount factor is, monotonically, strictly decreasing with respect to the variable $T$, the time span. As in our framework discounting is a two-variable function, perhaps there is need to further illustrate this point. 
%Suppose that you like football and you regularly enjoy a match with some fellows. If your schedule is free, it is reasonable to assume that you're indifferent whether playing today or tomorrow, that is with a span of $T_1 = 1$ day. However, if your friends ask you to choose between playing today or in a week from now, that is $T_2 = 7$ days,  today it is likely to be the first choice. 
Let us consider a decision maker that has to deal with binary trade-off decisions between positive monetary outcomes. She can pick the monetary outcome already available in $t$ or, alternatively, she can cash in the monetary outcome in two specific future times: in $t + T_1$ or in $t + T_2$. If $T_1 < T_2$,  the monetary outcome in $t + T_1$, would be available earlier and, therefore, would be less discounted if compared to the outcome available in a later time, as in $t + T_2$.
The reason is indeed that a larger time span $T$ elicits larger discount and, therefore, entails outcomes more distant from a certain time $t$ to be more heavily discounted.
 \item \textbf{Axiom 2.} \emph{Delay monotonicity.} For all $t_1, t_2$ and for all $T$, %if 
 \[
 t_1 < t_2 \Longrightarrow F(t_1, T)<F(t_2, T)
 \]

 Let us now rather focus on the delay $t$, considering situations whereby there is a different starting point but the same time span $T$. This axiom posits that the discount factor is, monotonically, strictly increasing as the delay increases. Let us suppose that a decision maker is indifferent between receiving the outcome $x$ today or $y$ in a week from now. Delay monotonicity axiom implies that the same decision maker will prefer receiving $x$ tomorrow over receiving $y$ in a week and one day.
	\item \textbf{Axiom 3.}\label{Axiom_3_intro} \emph{Time interval CADI condition}. For all $T_1,T_2$ and $T_3$, for all $t$ and for all $\rho$
	\[
\frac{F(t,T_1)}{F(t,T_2)}=\frac{F(t,T_2)}{F(t,T_3)} \Longrightarrow \frac{F(t,T_1+\rho)}{F(t,T_2+\rho)}=\frac{F(t,T_2+\rho)}{F(t,T_3+\rho)} 
\]

This axiom reflects an additivity condition for discount function ratios. In a CADI framework for the time span $T$, the equality of discount function ratios should be not affected if it is introduced a common extra time span of size $\rho$ for each time interval of the discount functions, remaining unchanged all delays.

\item \textbf{Axiom 3'.}\label{Axiom_3bis_intro} \emph{Time interval CRDI condition}. For all $T_1,T_2$ and $T_3$, for all $t$ and for all $\rho>0$
	\[
\frac{F(t,T_1)}{F(t,T_2)}=\frac{F(t,T_2)}{F(t,T_3)} \Longrightarrow \frac{F(t,T_1\cdot\rho)}{F(t,T_2\cdot\rho)}=\frac{F(t,T_2\cdot\rho)}{F(t,T_3\cdot\rho)} 
\]

Let us now introduce one of the fundamental properties that characterizes the CRDI discounting. The explanation is similar the previous one, but with here scalar multiplication of the time intervals by factor $\rho$. Again, the equality of discount function ratios is not affected after the introduction of the common factor $\rho$.

The following axioms 4 and 4' introduce, respectively, the CADI additivity and CRDI multiplicative conditions applied to the delay $t$. The interpretation is similar to that of 3 and 3' and it can be derived from there.

\item \textbf{Axiom 4.}\label{Axiom_4_intro} \emph{Delay CADI condition}. For all $t_1,t_2$ and $t_3$, for all $T$ and for all $\sigma$
	\[
\frac{F(t_1,T)}{F(t_2,T)}=\frac{F(t_2,T)}{F(t_3,T)} \Longrightarrow \frac{F(t_1+\sigma,T)}{F(t_2+\sigma,T)}=\frac{F(t_2+\sigma,T)}{F(t_3+\sigma,T)} 
\]
\item \textbf{Axiom 4'.}\label{Axiom_4bis_intro} \emph{Delay CRDI condition}. For all $t_1,t_2$ and $t_3$, for all $T$ and for all $\sigma>0$
	\[
\frac{F(t_1,T)}{F(t_2,T)}=\frac{F(t_2,T)}{F(t_3,T)} \Longrightarrow \frac{F(t_1\cdot\sigma,T)}{F(t_2\cdot\sigma,T)}=\frac{F(t_2\cdot\sigma,T)}{F(t_3\cdot\sigma,T)} 
\]
\item \textbf{Axiom 5.} \emph{Weak Total Delay CADI condition}. For all $t_1,t_2,T_1,T_2,\sigma$, %if $
\[
F(t_1,T_1)=F(t_2,T_2) \Longrightarrow F(t_1+\sigma,T_1)=F(t_2+\sigma,T_2)
 \] %$.

So far we considered different values of just one variable, the delay or, in turns, the time span. Let us consider to make varies them both. This property posits that when two discount functions with different delay and interval are equal, such an equality holds even when is added a common factor $\sigma$ to both delays.
 
\item \textbf{Axiom 5'.} \emph{Weak Total Delay CRDI condition}. For all $t_1,t_2,T_1,T_2$ and for all $\sigma>0$, %if $
\[
F(t_1,T_1)=F(t_2,T_2) \Longrightarrow F(t_1\cdot\sigma,T_1)=F(t_2\cdot\sigma,T_2)
 \]%$.

Analogously, in case of an equality between two discount functions, when each delay is multiplied by a common factor $\sigma$ the equality holds.

\item \textbf{Axiom 6.} \emph{Motionless Time interval}. For all $t$,  $F(t,0)=1$.

There is little need to motivate the latter property. In the event of $T=0$, there is no time span at all between two outcomes, namely the latter are referred to the same time and the discount is not triggered amid the two outcomes. Thus, the discount factor equals one.
\item \textbf{Axiom 7.} \emph{Squeeze Delay}. For all $T$, $\lim_{t \to + \infty} F(t,T)=1$.

\end{itemize}

%%%%%%%%%%%%%%%%%%%%%%%%%%%%%%%%%%%%%%%%%%%%%%%%%

%%%%%%%%    REFORMULATED AXIOMS     %%%%%%%%%%%%%

%%%%%%%%%%%%%%%%%%%%%%%%%%%%%%%%%%%%%%%%%%%%%%%%%

\subsection{Analytical derivation of the axioms}

The above axioms can be reformulated in terms of the preference $\succsim$ on the set of timed outcomes $\mathcal{X}$ as follows.
\begin{itemize}

%
%%%%%%  Axiom 1  %%%%
%

\item \textbf{Axiom 1.} For all $T_1, T_2$, for all $t$ and for all $x,y \in X$, if $T_1 < T_2$ 
$$(t : x)\sim(t+T_1 : y) \text{ implies } (t : x) \succ (t+T_2: y).$$
Indeed, $(t : x)\sim(t+T_1 : y)$ means $u(x)=F(t,T_1)\cdot u(y)$ and $(t : x) \succ (t+T_2: y)$ means $u(x) > F(t,T_2)\cdot u(y)$, from which we get $F(t,T_1)>F(t,T_2)$. 

%
%%%%%%  Axiom 2  %%%%
%

\item \textbf{Axiom 2.} For all $t_1, t_2$, for all $T$ and for all $x,y \in X$, if $t_1 <t_2$, 
$$(t_1 :x) \sim (t_1+T : y) \text{ implies } (t_2 : x) \succ (t_2+T : y).$$
Indeed, $(t_1 :x) \sim (t_1+T : y)$ means $u(x)=F(t_1,T)\cdot u(y)$ and $(t_2 : x) \succ (t_2+T : y)$ means $u(x) > F(t_2,T)\cdot u(y)$, from which we get $F(t_1,T)<F(t_2,T)$.

%
%%%%%%  Axiom 3  %%%%
%

\item \textbf{Axiom 3.} For all $x,y,z,t$ and $\rho$, 
	if
\[(t : z) \sim (t + T : y),
\]
\[(t : w) \sim (t + T : z),
\]
\[(t : x) \sim (t + T_1 : y),
\]
\[ (t : x) \sim (t + T_2 : z),
\]
\[ (t : x) \sim (t + T_3 : w),
\]
\[(t : x)  \sim (t + T_1+\rho : \bar{y}),
\]
\[(t : x)  \sim (t + T_2+\rho : \bar{z}),
\]
\[(t : x)  \sim (t + T_3+\rho : \bar{w}),
\]
\[(t : \bar{z})  \sim (t + \bar{T}  : \bar{y}),
\]
then
\[(t : \bar{w})  \sim (t + \bar{T}  : \bar{z}).
\]
Indeed, 
\begin{itemize}
	\item $(t : z) \sim (t + T : y)$ means $u(z)=F(t,T) \cdot u(y)$, and, consequently, 
\begin{equation}\label{expr1}
F(t,T)=\frac{u(z)}{u(y)} 
\end{equation}
\item $(t : w) \sim (t + T : z)$ means $u(w)=F(t,T) \cdot u(z)$, and, consequently, 
\begin{equation}\label{expr2}
F(t,T)=\frac{u(w)}{u(z)} 
\end{equation}
\item $(t : x) \sim (t + T_1 : y)$ means $u(x)=F(t,T_1) \cdot u(y)$, and, consequently, 
\begin{equation}\label{expr3}
F(t,T_1)=\frac{u(x)}{u(y)} 
\end{equation}
\item $(t : x) \sim (t + T_2 : z)$ means $u(x)=F(t,T_2) \cdot u(z)$, and, consequently, 
\begin{equation}\label{expr4}
F(t,T_2)=\frac{u(x)}{u(z)} 
\end{equation}
\item $(t : x) \sim (t + T_3 : w)$ means $u(x)=F(t,T_3) \cdot u(w)$, and, consequently, 
\begin{equation}\label{expr5}
F(t,T_3)=\frac{u(x)}{u(w)} 
\end{equation}
\item $(t : x)  \sim (t + T_1+\rho : \bar{y})$ means $u(x)=F(t,T_1+\rho) \cdot u(\bar{y})$, and, consequently, 
\begin{equation}\label{expr6}
F(t,T_1+\rho)=\frac{u(x)}{u(\bar{y})} 
\end{equation}
\item $(t : x)  \sim (t + T_2+\rho : \bar{z})$ means $u(x)=F(t,T_2+\rho) \cdot u(\bar{z})$, and, consequently, 
\begin{equation}\label{expr7}
F(t,T_2+\rho)=\frac{u(x)}{u(\bar{z})} 
\end{equation}
\item $(t : x)  \sim (t + T_3+\rho : \bar{w})$ means $u(x)=F(t,T_3+\rho) \cdot u(\bar{w})$, and, consequently, 
\begin{equation}\label{expr8}
F(t,T_3+\rho)=\frac{u(x)}{u(\bar{w})} 
\end{equation}
\item $(t : \bar{z})  \sim (t + \bar{T}  : \bar{y})$ means $u(\bar{z})=F(t,\bar{T}) \cdot u(\bar{y})$, and, consequently, 
\begin{equation}\label{expr9}
F(t,\bar{T})=\frac{u(\bar{z})}{u(\bar{y})} 
\end{equation}
\item $(t : \bar{w})  \sim (t + \bar{T}  : \bar{z})$ means $u(\bar{w})=F(t,\bar{T}) \cdot u(\bar{z})$, and, consequently, 
\begin{equation}\label{expr10}
F(t,\bar{T})=\frac{u(\bar{w})}{u(\bar{z})} 
\end{equation}
\end{itemize}
From (\ref{expr3}) and (\ref{expr4}) we get
\begin{equation}\label{expr11}
\frac{F(t,T_1)}{F(t,T_2)}=\frac{u(z)}{u(y)} 
\end{equation}
From (\ref{expr4}) and (\ref{expr5}) we get
\begin{equation}\label{expr12}
\frac{F(t,T_2)}{F(t,T_3)}=\frac{u(w)}{u(z)} 
\end{equation}
By (\ref{expr1}) and (\ref{expr2}), from (\ref{expr11}) and (\ref{expr12}) we get  
\begin{equation}\label{expr13} \notag
\frac{F(t,T_1)}{F(t,T_2)}=\frac{u(z)}{u(y)}=F(t,T)=\frac{u(w)}{u(z)}=\frac{F(t,T_2)}{F(t,T_3)} 
\end{equation}
From (\ref{expr6}) and (\ref{expr7}) we get
\begin{equation}\label{expr11_1}
\frac{F(t,T_1+\rho)}{F(t,T_2+\rho)}=\frac{u(\bar{z})}{u(\bar{y})} 
\end{equation}
From (\ref{expr7}) and (\ref{expr8}) we get
\begin{equation}\label{expr12_1}
\frac{F(t,T_2+\rho)}{F(t,T_3+\rho)}=\frac{u(\bar{w})}{u(\bar{z})} 
\end{equation}
Taking into account (\ref{expr11_1}) and (\ref{expr12_1})
\begin{equation}\label{expr14}
\frac{F(t,T_1+\rho)}{F(t,T_2+\rho)}=\frac{F(t,T_2+\rho)}{F(t,T_3+\rho)} 
\end{equation}
if and only if 
\begin{equation}\label{expr15}
\frac{u(\bar{z})}{u(\bar{y})}=\frac{u(\bar{w})}{u(\bar{z})} 
\end{equation}
As by (\ref{expr9})  
\begin{equation}\label{expr16}
\frac{u(\bar{z})}{u(\bar{y})}=F(t,\bar{T}) 
\end{equation}
(\ref{expr15}), and, consequently, (\ref{expr14}) require 
\begin{equation}\label{expr17}
\frac{u(\bar{w})}{u(\bar{z})}=F(t,\bar{T}) \notag
\end{equation}
that is, 
\begin{equation}\label{expr18}
(t : \bar{w})  \sim (t + \bar{T}  : \bar{z}). \notag
\end{equation}

%
%%%%%%  Axiom 3 bis  %%%%
%

\item \textbf{Axiom 3'.} This axiom represents an analogue case of the previous one, having here the product in place of the sum between $T$, the time elapsing between two outcomes, and $\rho$. Hence, as before, for all $x,y,z,t$ and $\rho$, 
	if
\[(t : z) \sim (t + T : y),
\]
\[(t : w) \sim (t + T : z),
\]
\[(t : x) \sim (t + T_1 : y),
\]
\[ (t : x) \sim (t + T_2 : z),
\]
\[ (t : x) \sim (t + T_3 : w),
\]
\[(t : x)  \sim (t + T_1 \cdot \rho : \bar{y}),
\]
\[(t : x)  \sim (t + T_2 \cdot \rho : \bar{z}),
\]
\[(t : x)  \sim (t + T_3 \cdot \rho : \bar{w}),
\]
\[(t : \bar{z})  \sim (t + \bar{T}  : \bar{y}),
\]
then
\[(t : \bar{w})  \sim (t + \bar{T}  : \bar{z}).
\]

Indeed, standing all other inequalities same as the previous axiom, with regard to the ones with $\rho$:
\begin{itemize}
 \item $(t : x)  \sim (t + T_1 \cdot \rho : \bar{y})$ means $u(x)=F(t,T_1 \cdot \rho) \cdot u(\bar{y})$, and, consequently, 
\begin{equation}\label{axiom3bis_expr1}
F(t,T_1 \cdot \rho)=\frac{u(x)}{u(\bar{y})} 
\end{equation}
\item $(t : x)  \sim (t + T_2 \cdot \rho : \bar{z})$ means $u(x)=F(t,T_2 \cdot \rho) \cdot u(\bar{z})$, and, consequently, 
\begin{equation}\label{axiom3bis_expr2}
F(t,T_2 \cdot \rho)=\frac{u(x)}{u(\bar{z})} 
\end{equation}
\item $(t : x)  \sim (t + T_3 \cdot \rho : \bar{w})$ means $u(x)=F(t,T_3 \cdot \rho) \cdot u(\bar{w})$, and, consequently, 
\begin{equation}\label{axiom3bis_expr3}
F(t,T_3 \cdot \rho)=\frac{u(x)}{u(\bar{w})} 
\end{equation}
\end{itemize}

Proceeding in the same manner as before, from (\ref{axiom3bis_expr1}) and (\ref{axiom3bis_expr2}) we get
\begin{equation}\label{axiom3bis_expr4}
\frac{F(t,T_1 \cdot \rho)}{F(t,T_2 \cdot \rho)}=\frac{u(\bar{z})}{u(\bar{y})} 
\end{equation}
From (\ref{axiom3bis_expr2}) and (\ref{axiom3bis_expr3}) we get
\begin{equation}\label{axiom3bis_expr5}
\frac{F(t,T_2 \cdot \rho)}{F(t,T_3 \cdot \rho)}=\frac{u(\bar{w})}{u(\bar{z})} 
\end{equation}
Taking into account (\ref{axiom3bis_expr4}) and (\ref{axiom3bis_expr5})
\begin{equation}\label{axiom3bis_expr6}
\frac{F(t,T_1 \cdot \rho)}{F(t,T_2 \cdot \rho)}=\frac{F(t,T_2 \cdot \rho)}{F(t,T_3 \cdot \rho)} 
\end{equation}
if and only if 
\begin{equation}\label{axiom3bis_expr7} \notag
\frac{u(\bar{z})}{u(\bar{y})}=\frac{u(\bar{w})}{u(\bar{z})} 
\end{equation}

As by (\ref{expr9}) expression (\ref{expr16}) holds, we have that
%\begin{equation}\label{expr16}
%\frac{u(\bar{z})}{u(\bar{y})}=F(t,\bar{T}) 
%\end{equation}
(\ref{axiom3bis_expr5}), and, consequently, (\ref{axiom3bis_expr6}) require 
\begin{equation}\label{axiom3bis_expr8}
\frac{u(\bar{w})}{u(\bar{z})}=F(t,\bar{T}) \notag
\end{equation}
that is, 
\begin{equation}\label{axiom3bis_expr9}
(t : \bar{w})  \sim (t + \bar{T}  : \bar{z}). \notag
\end{equation}

%
%%%%%%  Axiom 4  %%%%
%

\black

\item \textbf{Axiom 4.} For all $t_1,t_2$ and $t_3$, for all $T$, for all $\sigma$ and for all $x,y,z,\bar{x}, \bar{y},\bar{z},\in X$
	if
\[(t : z) \sim (t + T : y),
\]
\[(t : w) \sim (t + T : z),
\]
\[(t_1 : x) \sim (t_1 + T : y),
\]
\[ (t_2 : x) \sim (t_2 + T : z),
\]
\[ (t_3 : x) \sim (t_3 + T : w),
\]
\[(t_1+\sigma : x)  \sim (t_1 + T+\sigma : \bar{y}),
\]
\[(t_2+\sigma : x)  \sim (t_2 + T+\sigma : \bar{z}),
\]
\[(t_3+\sigma : x)  \sim (t_3 + T+\sigma : \bar{w}),
\]
\[(t : \bar{z})  \sim (t + \bar{T}  : \bar{y}),
\]
then
\[(t : \bar{w})  \sim (t + \bar{T}  : \bar{z}).
\]

Indeed,

\begin{itemize}
    \item $(t_1 : x) \sim (t_1 + T : y)$ means $u(x) = F(t_1,T) \cdot u(y)$ and, consequently,
\begin{equation}\label{axiom4exp3}
F(t_1, T) = \frac{u(x)}{u(y)}
\end{equation}
    \item $(t_2 : x) \sim (t_2 + T : z)$ means
\begin{equation}\label{axiom4exp4}
F(t_2, T) = \frac{u(x)}{u(z)}
\end{equation}
    \item $(t_3 : x) \sim (t_3 + T : w)$ means
\begin{equation}\label{axiom4exp5}
F(t_3, T) = \frac{u(x)}{u(w)}
\end{equation}    
    \item $(t_1 + \sigma : x) \sim (t_1 + T + \sigma: \bar{y})$ means
\begin{equation}\label{axiom4exp6}
F(t_1 + \sigma, T) = \frac{u(x)}{u(\bar{y})}
\end{equation}
    \item $(t_2 + \sigma : x) \sim (t_2 + T + \sigma : \bar{z})$ means
\begin{equation}\label{axiom4exp7}
F(t_2 + \sigma, T) = \frac{u(x)}{u(\bar{z})}
\end{equation}
    \item $(t_3 + \sigma : x) \sim (t_3 + T + \sigma : \bar{w})$ means
\begin{equation}\label{axiom4exp8}
F(t_3 + \sigma, T) = \frac{u(x)}{u(\bar{w})}
\end{equation}
    \item $(t : \bar{z})  \sim (t + \bar{T}  : \bar{y})$ means
\begin{equation}\label{axiom4exp9}
F(t, \bar{T}) = \frac{u(\bar{z})}{u(\bar{y})}
\end{equation}

    \item $(t : \bar{w})  \sim (t + \bar{T}  : \bar{z})$ means
\begin{equation}\label{axiom4exp10}
F(t, \bar{T}) = \frac{u(\bar{w})}{u(\bar{z})}
\end{equation}
\end{itemize}

In the same manner as before, from (\ref{axiom4exp3}) and (\ref{axiom4exp4}) we get 
\begin{equation}\label{axiom4_proof_1}
\frac{F(t_1,T)}{F(t_2,T)}=\frac{u(z)}{u(y)} 
\end{equation}

From (\ref{axiom4exp4}) and (\ref{axiom4exp5}) we get
\begin{equation}\label{axiom4_proof_2}
\frac{F(t_2,T)}{F(t_3,T)}=\frac{u(w)}{u(z)}
\end{equation}

By (\ref{expr1}) and (\ref{expr2}), from (\ref{axiom4_proof_1}) and (\ref{axiom4_proof_2}) we get
\begin{equation}\label{axiom4_proof_3}
\frac{F(t_1,T)}{F(t_2,T)}=\frac{u(z)}{u(y)} = F(t,T) = \frac{u(w)}{u(z)} = \frac{F(t_2,T)}{F(t_3,T)}
\end{equation}

Taking into account (\ref{axiom4exp6}) and (\ref{axiom4exp7})
\begin{equation}\label{axiom4_proof_4}
\frac{F(t_1 + \sigma, T)}{F(t_2 + \sigma, T)}=\frac{F(t_2 + \sigma, T)}{F(t_3 + \sigma, T)} 
\end{equation}

if and only if 

\begin{equation}\label{axiom4_proof_5} \notag
\frac{u(\bar{z})}{u(\bar{y})}=\frac{u(\bar{w})}{u(\bar{z})} 
\end{equation}

As by (\ref{axiom4exp9}), relation (\ref{axiom4_proof_4}) requires

\begin{equation}\label{axiom4_proof_6}
\frac{u(\bar{w})}{u(\bar{z})}= F (t, \bar{T}) \notag
\end{equation}

that is

\begin{equation}\label{axiom4_proof_7}
    (t: \bar{w}) \sim (t + \bar{T} : \bar{z} ) \notag
\end{equation}

%
%%%%%%  Axiom 4 bis  %%%%
%

\item \textbf{Axiom 4'.} This axiom represents an analogue of the previous one, having here the product in place of the sum between $t$, the time delay and $\sigma>0$.

%
%%%%%%  Axiom 5  %%%%
%
\black

\item \textbf{Axiom 5.} For all $t_1,t_2,T_1,T_2,\sigma$ and for all $x,y,\bar{x},\bar{y} \in X$,
if
\[(t_1 : x) \sim (t_1 + T_1 : y),
\]
\[(t_2 : x) \sim (t_2 + T_2 : y),
\]
\[(t_1+\sigma : \bar{x}) \sim (t_1 + T_1+\sigma : \bar{y}),
\]
then
\[(t_2+\sigma : \bar{x}) \sim (t_2 + T_2+\sigma : \bar{y}).
\]
Indeed,

\begin{itemize}
    \item $(t_1 : x) \sim (t_1 + T_1 : y)$ means $u(x) = F(t_1,T_1) \cdot u(y)$ and, consequently,
\begin{equation}\label{axiom5exp1}
F(t_1, T_1) = \frac{u(x)}{u(y)}
\end{equation}
    \item $(t_2 : x) \sim (t_2 + T_2 : y)$ means $u(x) = F(t_2,T_2) \cdot u(y)$ and, consequently,
\begin{equation}\label{axiom5exp2}
F(t_2, T_2) = \frac{u(x)}{u(y)}
\end{equation}
    \item $(t_1 + \sigma : \bar{x}) \sim (t_1 + T_1 + \sigma : \bar{y})$ means
\begin{equation}\label{axiom5exp3}
F(t_1 + \sigma, T_1) = \frac{u(\bar{x})}{u(\bar{y})}
\end{equation}
    \item $(t_2 + \sigma : x) \sim (t_2 + T_2 + \sigma : \bar{y})$ means
\begin{equation}\label{axiom5exp4}
F(t_2 + \sigma, T_2) = \frac{u(\bar{x})}{u(\bar{y})}
\end{equation}
  \end{itemize}

From (\ref{axiom5exp1}) and (\ref{axiom5exp2}) we get

\begin{equation}\label{axiom5_proof1}
    F(t_1,T_1) = F(t_2,T_2)
\end{equation}

and, therefore, from (\ref{axiom5exp3}) and (\ref{axiom5exp4}) we get also that

\begin{equation}\label{axiom5_proof2}
    F(t_1 + \sigma,T_1) = F(t_2 +\sigma,T_2) \notag
\end{equation}
%
%%%%  Axiom 5 bis  %%%%
% 
%\red

\item \textbf{Axiom 5'.} For all $t_1,t_2,T_1,T_2,\sigma$ and for all $x,y,\bar{x},\bar{y} \in X$,
if
\[(t_1 : x) \sim (t_1 + T_1 : y),
\]
\[(t_2 : x) \sim (t_2 + T_2 : y),
\]
\[(t_1 \cdot \sigma : \bar{x}) \sim (t_1 + T_1 \cdot \sigma : \bar{y}),
\]
then
\[(t_2 \cdot \sigma : \bar{x}) \sim (t_2 + T_2 \cdot \sigma : \bar{y}).
\]

Indeed,

\begin{itemize}
    \item $(t_1 \cdot \sigma : \bar{x}) \sim (t_1 \cdot \sigma + T_1  : \bar{y})$ means
\begin{equation}\label{axiom5bisexp3}
F(t_1 \cdot \sigma, T_1) = \frac{u(\bar{x})}{u(\bar{y})}
\end{equation}
    \item $(t_2 \cdot \sigma : x) \sim (t_2 \cdot \sigma + T_2  : \bar{y})$ means
\begin{equation}\label{axiom5bisexp4}
F(t_2 \cdot \sigma, T_2) = \frac{u(\bar{x})}{u(\bar{y})}
\end{equation}
  \end{itemize}

As done for the proof of the previous Axiom, let us consider (\ref{axiom5_proof1}). From (\ref{axiom5bisexp3}) and (\ref{axiom5bisexp4}) we get also that

\begin{equation}\label{axiom5bis_proof}
    F(t_1 \cdot \sigma,T_1) = F(t_2 \cdot \sigma, T_2) \notag
\end{equation}

%
%%%%  Axiom 6  %%%%
%  
\black

\item \textbf{Axiom 6.} For all $t$ and for all $x \in X$,  
\[(t : x) \sim (t : x).
\]
Indeed,
\[(t : x) \sim (t : x)\]
means
\begin{equation}\label{axiom6exp1}
F(t, 0) = 1 \notag
\end{equation}

%
%%%%%%  Axiom 7  %%%%
%

\item \textbf{Axiom 7.} For all $T$ and all $x,y \in X$ with $x \succ y$, there is a $\bar{t}$ such that for all $t>\bar{t}$ 
\[(t +T: x) \succ (t: y)
\]
Indeed,
\[(t +T: x) \succ (t: y)
\]
means that 
\begin{equation}\label{axiom7exp1}
1>F(t , T ) > \frac{u(y)}{u(x)} \notag
\end{equation}
Consider a sequence $\{y_n\}$ of outcomes $y_n \in X$ such that $x\succ y_n$ and $y_{n+1} \succ y_n$ for all $n$ and $lim_{n \rightarrow \infty}u(y_n)=u(x)$. The sequence $\{ \frac{u(y_n)}{u(x)}\}$ is increasing and $lim_{n \rightarrow \infty}\frac{u(y_n)}{u(x)}=1$. By the continuity of function $F(t,T)$, for each $y_n$ there is $t_n$ such that $F(t_n,T)=\frac{u(y_n)}{u(x)}$. Moreover,by the monotonicity of $F(t,T)$ with respect to $t$, we get that $t_{n+1}>t_n$ for all $n$. Consequently we have 
\begin{equation}\label{axiom7exp2}
1>F(t_n , T ) > \frac{u(y_n)}{u(x)} \notag
\end{equation}
which, using the squeeze theorem, gives $lim_{n \rightarrow \infty}F(t_n,T)=1$. The latter, in turn, gives $lim_{t \rightarrow \infty}F(t,T)=1$. 
  \end{itemize}

%%%%%%%%%%%%%%%%%%%%%%%%%%%%%%%%%%%%%%%%%%%%%%%%%

%%%%%%%%%%      Theorem and proofs    %%%%%%%%%%%

%%%%%%%%%%%%%%%%%%%%%%%%%%%%%%%%%%%%%%%%%%%%%%%%%
%\subsection{Fundamental theorem of }

\textbf{Theorem.} The formulation of a discount function is
%\begin{itemize}
%	\item CADI-CADI if and only if Axioms 1-7 hold,
%	\item CADI-CRDI if and only if Axioms 1-3, 4',5-7 hold, % da sistemare
%    \item CRDI-CADI if and only if Axioms 1,2,3',4,5',6,7 hold, % da sistemare
%	\item CRDI-CRDI if and only if Axioms 1,2,3',4',5',6,7 hold.
%	\end{itemize}
 \begin{enumerate}[label = {(\alph*)}]
 \item \label{thm:CADI_CADI} CADI-CADI if and only if Axioms 1-7 hold,
 \item \label{thm:CRDI_CRDI} CRDI-CRDI if and only if Axioms 1,2,3',4',5',6,7 hold,
 \item \label{thm:CADI_CRDI} CADI-CRDI if and only if Axioms 1-3, 4',5-7 hold,
 \item \label{thm:CRDI_CADI} CRDI-CADI if and only if Axioms 1,2,3',4,5',6,7 hold. 
 \end{enumerate}
%%%%%%%%%%%%%%%%%%%%%%%%%%%%%%%%%%%%%%%%%%%%
%%%%%%%%%%%%%%   CADI-CADI    %%%%%%%%%%%%%%
%%%%%%%%%%%%%%%%%%%%%%%%%%%%%%%%%%%%%%%%%%%% 

\textbf{Proof.} Here we derive for each family of discount function the proof related to the statements of the above-mentioned theorem.

\textbf{\ref{thm:CADI_CADI}} Let us start by proving that if Axioms 1-7 are satisfied, then the discount function \ref{formula1}, the \emph{CADI-CADI}, holds. Observe that, taking into consideration Axiom 1 and Axiom 3, by Theorem 5.3 in \cite{bleichrodt2009non}, for each $t \ge 0$, there exist $g(t)>0$, $k(t)>0$ and $c(t)$ such that 
\begin{itemize}
	\item for $c(t)>0$, 
	$$F(t,T)=k(t)e^{g(t)e^{-c(t)T}} $$
	\item for $c(t)=0$, 
	$$F(t,T)=k(t)e^{-g(t)T}$$
	\item for $c(t)<0$, 
	$$F(t,T)=k(t)e^{-g(t)e^{-c(t)T}}$$
\end{itemize}
By Axiom 6 we get 
\begin{itemize}
	\item for $c(t)>0$, 
	\begin{equation}\label{propr1}
	F(t,0)=k(t)e^{g(t)}=1 \Rightarrow k(t)=e^{-g(t)} \Rightarrow F(t,T)=e^{-g(t)}e^{g(t)e^{-c(t)T}}=e^{g(t)[e^{-c(t)T}-1]}
	\end{equation}
		\item for $c(t)=0$, 
	\begin{equation}\label{propr2}
	F(t,0)=k(t)=1 \Rightarrow F(t,T)=e^{-g(t)T} 
	\end{equation}
	\item for $c(t)<0$, 
	\begin{equation}\label{propr3}
	F(t,0)=k(t)e^{-g(t)}=1\Rightarrow k(t)=e^{g(t)} \Rightarrow F(t,T)=e^{g(t)}e^{-g(t)e^{-c(t)T}}=e^{g(t)[1-e^{-c(t)T}]} 
		\end{equation}
		\end{itemize}
On the basis of \cite{miyamoto1983axiomatization}, by Axiom 2 and Axiom 4, for each $T>0$, there exist $\hat{g}(t)>0$, $\hat{k}(t)>0$ and $\hat{c}(t)$ such that 
\begin{itemize}
	\item for $\hat{c}(T)>0$, 
	$$F(t,T)=\hat{k}(T)e^{\hat{g}(T)e^{\hat{c}(T)t}} $$
	\item for $\hat{c}(T)=0$, 
	$$F(t,T)=\hat{k}(T)e^{\hat{g}(T)t}$$
	\item for $\hat{c}(T)<0$, 
	$$F(t,T)=\hat{k}(T)e^{-\hat{g}(T)e^{\hat{c}(T)t}}$$
\end{itemize}
Observe that for $\hat{c}(T)>0$, 
$$\lim_{t \to + \infty} F(t,T)=\hat{k}(T)e^{\hat{g}(T)e^{\hat{c}(T)t}} = + \infty $$
which is not acceptable by Axiom 7.  
For $\hat{c}(T)=0$,
$$\lim_{t \to + \infty} F(t,T)=\hat{k}(T)e^{\hat{g}(T)t} = + \infty $$
which is again not acceptable by Axiom 7.  
For $\hat{c}(T)<0$,
$$\lim_{t \to + \infty} F(t,T)=\hat{k}(T)e^{-\hat{g}(T)e^{\hat{c}(T)t}} = \hat{k}(T) $$
which respects Axiom 7 under the condition that $\hat{k}(T)=1$.  
Consequently, by Axioms 2, 4 and 7, for each $T>0$, there exist $\hat{g}(T)>0$ and $\hat{c}(T)<0$ such that
\begin{equation}\label{Discountt}
	F(t,T)=e^{-\hat{g}(T)e^{\hat{c}(T)t}} 
	\end{equation}
%%%%% Switch - questa veniva dopo
Now, let us prove that for all $T_1$, $T_2$ and for all $t_1$, there exists $t_2$ such that:
\begin{equation}\label{Discount4} \notag
F(t_1,T_1) = F(t_2, T_2)
\end{equation}
Without loss of generality, suppose that $ 0 < T_1 < T_2$. Hence, by Axiom 1 and Axiom 6
\begin{equation}\label{Discount5} \notag
1 > F(t_1, T_1) > F(t_1, T_2)
\end{equation}
As by Axiom 2, $F(t,T_2)$ is increasing with respect to $t$, by basic assumption $F(t,T_2)$ is also continuous and by Axiom 7 $\lim_{t \to \infty} F(t,T_2) = 1$, then there exists $\overline{t}$ such that:
\begin{equation}\label{Discount6} \notag
F(\overline{t}, T_2) > F(t_1, T_1)  
\end{equation}
Since $F(\overline{t},T_2) > F(t_1,T_1) > F(t_1, T_2)$, by the continuity of $F(t,T)$, we get that there exists $t_2$ such that
\begin{equation}\label{Discount7}
F(t_2,T_2) = F(t_1,T_1)
\end{equation}

%%%%% Switch - questa veniva prima

Taking $t_1,t_2,T_1,T_2$ such that $F(t_1,T_1)=F(t_2,T_2)$, by Axiom 5 we have $F(t_1+\sigma,T_1)=F(t_2+\sigma,T_2)$, so that, considering (\ref{Discountt}), we obtain
\begin{equation}\label{Discount1}
F(t_1,T_1)=e^{- \hat{g}(T_1)e^{\hat{c}(T_1)t_1}}=e^{- \hat{g}(T_2)e^{\hat{c}(T_2)t_2}}=F(t_2,T_2)
\end{equation}
\begin{equation}\label{Discount2}
	F(t_1+\sigma,T_1)=e^{- \hat{g}(T_1)e^{\hat{c}(T_1)(t_1+\sigma)}}=e^{- \hat{g}(T_2)e^{
\hat{c}(T_2)(t_2+\sigma)}}=F(t_2+\sigma,T_2)
\end{equation}
From (\ref{Discount1}) and (\ref{Discount2}) we obtain:
$$\frac{e^{-\hat{g}(T_2)e^{\hat{c}(T_2)t_2}}}{e^{-\hat{g}(T_1)e^{\hat{c}(T_1)t_1}}}=\frac{e^{-\hat{g}(T_2)e^{\hat{c}(T_2)(t_2+\sigma)}}}{e^{-\hat{g}(T_1)e^{\hat{c}(T_1)(t_1+\sigma)}}}=
\frac{(e^{-\hat{g}(T_2)e^{\hat{c}(T_2)t_2}})^{e^{\hat{c}(T_2)\sigma}}}{(e^{-\hat{g}(T_1)e^{\hat{c}(T_1)t_1}})^{e^{\hat{c}(T_1)\sigma}}}$$ %=\left(\frac{(e^{\overline{g}(T_2)e^{-\overline{c}(T_2)t_2}})^{e^{-\overline{c}(T_2)}}}{(e^{\overline{g}(T_1)e^{-\overline{c}(T_1)t_1}})^{e^{-\overline{c}(T_1)}}}\right)^\sigma$$
which holds only if 
\begin{equation}\label{Discount3} \notag
\hat{c}(T_1)=\hat{c}(T_2) 
\end{equation}
%%%%%%%%%%%%
%Now, let us prove that for all $T_1$, $T_2$ and for all $t_1$, there exists $t_2$ such that:
%\begin{equation}\label{Discount4}
%F(t_1,T_1) = F(t_2, T_2)
%\end{equation}
%Without loss of generality, suppose that $ 0 < T_1 < T_2$. Hence, by Axiom 1 and Axiom 6
%\begin{equation}\label{Discount5}
%1 > F(t_1, T_1) > F(t_1, T_2)
%\end{equation}
%As by Axiom 2, $F(t,T_2)$ is increasing with respect to $t$, by basic assumption $F(t,T_2)$ is also continuous and by Axiom 7 $\lim_{t \to \infty} F(t,T_2) = 1$, then there exists $\overline{t}$ such that:
%\begin{equation}\label{Discount6}
%F(\overline{t}, T_2) > F(t_1, T_1)  
%\end{equation}
%Since $F(\overline{t},T_2) > F(t_1,T_1) > F(t_1, T_2)$, by the continuity of $F(t,T)$, we get that there exists $t_2$ such that
%\begin{equation}\label{Discount7}
%F(t_2,T_2) = F(t_1,T_1)
%\end{equation}
%%%%%%%%%%%%%%%%%%
On the basis of the latter%(\ref{Discount3})
, (\ref{Discount7}) implies that for all $T_1$ and $T_2$ 
\begin{equation}\label{Discount8} \notag
\hat{c}(T_1)=\hat{c}(T_2)= \bar{c}
\end{equation}
with $\bar{c}<0$.
Consequently,
\begin{equation}\label{Discount9}
F(t,T) = e^{-\hat{g}(T)e^{ \bar{c}t}}
\end{equation}
In case $c(t)>0$, from (\ref{propr1}) and (\ref{Discount9}), we get 
\begin{equation}\label{Discount10}
	F(t,T)=e^{g(t)[e^{-c(t)T}-1]}= e^{-\hat{g}(T)e^{ \bar{c}t}}
	\end{equation}
and, consequently,
\begin{equation}\label{Discount10bis}
g(t)[e^{-c(t)T}-1]= -\hat{g}(T)e^{ \bar{c}t}
	\end{equation}
which is equivalent to 
\begin{equation}\label{Discount11} \notag
\frac{g(t)}{e^{ \bar{c}t}}= \frac{-\hat{g}(T)}{e^{-c(t)T}-1}
	\end{equation}
As $\frac{g(t)}{e^{ \bar{c}t}}$ depends on $t$ and $\frac{-\hat{g}(T)}{e^{-c(t)T}-1}$ depends on $t$ and $T$, then
\begin{itemize}
	\item there exists some $k>0$ such that 
\begin{equation}\label{Discount12}
\frac{g(t)}{e^{ \bar{c}t}}= \frac{-\hat{g}(T)}{e^{-c(t)T}-1}=k
	\end{equation}
\item there exists some $\widetilde{c}>0$ such that 
\begin{equation}\label{Discount13}
c(t)=\widetilde{c}
	\end{equation}
\end{itemize}
From (\ref{propr1}), (\ref{Discount12}) and (\ref{Discount13}) we get
\begin{equation}\label{Discount14} \notag
	F(t,T)=e^{ke^{ \bar{c}t[e^{-\widetilde{c}T}-1]}}
	\end{equation}
Taking into account (\ref{formula2}) we get
\begin{equation}\label{Discount15} \notag
F(t,T)=e^{ke^{ \bar{c}t[e^{-\widetilde{c}T}-1]}}=e^{-r e^{-\delta t} \frac{1-e^{-\gamma T}}{\gamma}} 
\end{equation}
with $k=\frac{r}{\gamma}$, $\bar{c}=-\delta$ and $\widetilde{c}=\gamma$ with $r>0, \delta>0$ and $\gamma>0$. 
 
In case $c(t)=0$, from (\ref{propr2}) and (\ref{Discount9}), we get 
\begin{equation}\label{Discount16} \notag
	F(t,T)=e^{-g(t)T}= e^{-\hat{g}(T)e^{\bar{c}t}}
	\end{equation}
and, consequently,
\begin{equation}\label{Discount17} \notag
g(t)T= \hat{g}(T)e^{\bar{c}t}
	\end{equation}
which is equivalent to 
\begin{equation}\label{Discount18} \notag
\frac{g(t)}{e^{\bar{c}t}}= \frac{\hat{g}(T)}{T}
	\end{equation}
As $\frac{g(t)}{e^{\bar{c}t}}$ depends on $t$ and $\frac{\hat{g}(T)}{T}$ depends on $T$, then there exists some $k > 0$ such that 
\begin{equation}\label{Discount20}
\frac{g(t)}{e^{\bar{c}t}}= \frac{\hat{g}(T)}{T}=k
	\end{equation}

From (\ref{propr2}) and (\ref{Discount20}) we get
\begin{equation}\label{Discount21} \notag
	F(t,T)=e^{-ke^{\bar{c}t}T}
	\end{equation}
Taking into account (\ref{formula3}) we get
\begin{equation}\label{Discount22} \notag
F(t,T)=e^{-ke^{\bar{c}t}T}=e^{-r e^{-\delta t} T} 
\end{equation}
with $k=r$ and $\bar{c}=-\delta$ with $r$ and $\delta$ both positive.

In case $c(t)<0$, from (\ref{propr3}) and (\ref{Discount9}), we get 
\begin{equation}\label{Discount23} \notag
	F(t,T)=e^{g(t)[1-e^{-c(t)T}]}= e^{-\hat{g}(T)e^{\bar{c}t}}
	\end{equation}
and, consequently,
\begin{equation}\label{Discount24} \notag
g(t)[1-e^{-c(t)T}]= -\hat{g}(T)e^{\bar{c}t}
	\end{equation}
which is equivalent to 
\begin{equation}\label{Discount25} \notag
\frac{g(t)}{e^{\bar{c}t}}= \frac{-\hat{g}(T)}{1-e^{-c(t)T}}
	\end{equation}
As $\frac{g(t)}{e^{\bar{c}t}}$ depends on $t$ and $\frac{-\hat{g}(T)}{1-e^{-c(t)T}}$ depends on $t$ and $T$, then
\begin{itemize}
	\item there exists some $k>0$ such that 
\begin{equation}\label{Discount26}
\frac{g(t)}{e^{\bar{c}t}}= \frac{-\hat{g}(T)}{1-e^{-c(t)T}}=k
	\end{equation}
\item there exists some $\widetilde{c}<0$ such that 
\begin{equation}\label{Discount27}
c(t)=\widetilde{c}
	\end{equation}
\end{itemize}
From (\ref{propr3}), (\ref{Discount26}) and (\ref{Discount27}) we get
\begin{equation}\label{Discount28} \notag
	F(t,T)=e^{ke^{\bar{c}t[1-e^{-\widetilde{c}T}]}}
	\end{equation}
Taking into account (\ref{formula1}) we get 
\begin{equation}\label{Discount29} \notag
F(t,T)=e^{ke^{\bar{c}t[1-e^{-\widetilde{c}T}]}}=e^{-r e^{-\delta t} \frac{1-e^{-\gamma T}}{\gamma}} 
\end{equation}
with $k=-\frac{r}{\gamma}$, $\bar{c}=\delta$ and $\widetilde{c}=\gamma$, with $r$ and $\delta$ positive and $\gamma$ negative.

%%%%%%%%%%%%%%%%%%%%%%%%%%%%%%%%%%%%%%%%%%%%
%%%%%%%%%%%%%%   CRDI-CRDI    %%%%%%%%%%%%%%
%%%%%%%%%%%%%%%%%%%%%%%%%%%%%%%%%%%%%%%%%%%% 

\bigskip

\textbf{\ref{thm:CRDI_CRDI}} Let us prove now that if Axioms 1,2,3',4',5',6,7 hold, then the discount function CRDI-CRDI in (\ref{formula7}) holds. Observe that, taking into consideration Axiom 1 and Axiom 3', by Theorem 6.3 in \cite{bleichrodt2009non}, for each $t \ge 0$, there exist $g(t)>0$, $k(t)>0$ and 
\begin{itemize}
	\item for $c(t)>0$, 
	$$F(t,T)=k(t)e^{-g(t)T^{c(t)}} $$
	\item for $c(t)=0$, 
	$$F(t,T)=k(t)T^{-g(t)}$$
	\item for $c(t)<0$, 
	$$F(t,T)=k(t)e^{g(t)T^{c(t)}}$$
\end{itemize}
By Axiom 6 we get that for $c(t)>0$ we get 
	\begin{equation}\label{propr1_}
	F(t,0)=k(t)=1 
	\end{equation}
Instead, for $c(t)=0$ as well as for $c(t)<0$, $F(t,0)$ is not defined. Consequently, $c(t)=0$ and $c(t)<0$ cannot be accepted and we have
	\begin{equation}\label{T_CRDI}
	F(t,0)=F(t,T)=e^{-g(t)T^{c(t)}}  
	\end{equation}

. 
	
	By Axiom 2 and Axiom 4', again by Theorem 6.3 in \cite{bleichrodt2009non}, for each $T>0$, there exist $\hat{g}(T)>0$, $\hat{k}(T)>0$ and $\hat{c}(T)$ such that 
\begin{itemize}
	\item for $\hat{c}(T)>0$, 
	$$F(t,T)=\hat{k}(T)e^{\hat{g}(T)t^{\hat{c}(T)}} $$
	\item for $\hat{c}(T)=0$, 
	$$F(t,T)=\hat{k}(T)t^{\hat{g}(T)}$$
	\item for $\hat{c}(T)<0$, 
	$$F(t,T)=\hat{k}(T)e^{-\hat{g}(T)t^{\hat{c}(T)}}$$
\end{itemize}
Observe that for $\hat{c}(T)>0$, 
$$\lim_{t \to + \infty} F(t,T)=\lim_{t \to + \infty}\hat{k}(T)e^{\hat{g}(T)t^{\hat{c}(T)}}  = + \infty $$
which is not acceptable by Axiom 7.  
For $\hat{c}(T)=0$,
$$\lim_{t \to + \infty} F(t,T)=\lim_{t \to + \infty}\hat{k}(T)t^{\hat{g}(T)}  = + \infty $$
which is again not acceptable by Axiom 7.  
For $\hat{c}(T)<0$,
$$\lim_{t \to + \infty} F(t,T)=\lim_{t \to + \infty} \hat{k}(T)e^{-\hat{g}(T)t^{\hat{c}(T)}} = \hat{k}(T) $$
which respects Axiom 7 under the condition that $\hat{k}(T)=1$.  
Consequently, by Axioms 2, 4' and 7, for each $T>0$, there exist $\hat{g}(T)>0$ and $\hat{c}(T)<0$ such that
\begin{equation}\label{Discountt_}
	F(t,T)=e^{-\hat{g}(T)t^{\hat{c}(T)}} 
	\end{equation}
With the same proof for the case CADI-CADI, we have that for all $T_1$, $T_2$ and for all $t_1$, there exists $t_2$ such that $F(t_1,T_1) = F(t_2, T_2)$. Taking $t_1,t_2,T_1,T_2$ such that $F(t_1,T_1)=F(t_2,T_2)$, by Axiom 5' we have $F(t_1\cdot\sigma,T_1)=F(t_2\cdot\sigma,T_2)$, so that, considering (\ref{Discountt_}), we obtain
\begin{equation}\label{Discount1_}
F(t_1,T_1)=e^{- \hat{g}(T_1)t_1^{\hat{c}(T_1)}}=e^{- \hat{g}(T_2)t_2^{\hat{c}(T_2)}}=F(t_2,T_2)
\end{equation}
\begin{equation}\label{Discount2_}
	F(t_1\cdot\sigma,T_1)=e^{- \hat{g}(T_1)(t_1\cdot\sigma)^{\hat{c}(T_1)}}=e^{- \hat{g}(T_2)(t_2\cdot\sigma)^{\hat{c}(T_2)}}=F(t_2\cdot\sigma,T_2)
\end{equation}
From (\ref{Discount1_}) and (\ref{Discount2_}) we obtain:
$$\frac{e^{\hat{g}(T_2)t_2^{\hat{c}(T_2)}}}{e^{\hat{g}(T_1)t_1^{\hat{c}(T_1)}}}=\frac{e^{\hat{g}(T_2)(t_2\cdot\sigma)^{\hat{c}(T_2)}}}{e^{\hat{g}(T_1)(t_1\cdot\sigma)^{\hat{c}(T_1)}}}=
\frac{(e^{\hat{g}(T_2)t_2^{\hat{c}(T_2)}})^{\sigma^{\hat{c}(T_2)}}}{(e^{\hat{g}(T_1)t_1^{\hat{c}(T_1)}})^{\sigma^{\hat{c}(T_1)}}}$$ %=\left(\frac{(e^{\overline{g}(T_2)e^{-\overline{c}(T_2)t_2}})^{e^{-\overline{c}(T_2)}}}{(e^{\overline{g}(T_1)e^{-\overline{c}(T_1)t_1}})^{e^{-\overline{c}(T_1)}}}\right)^\sigma$$
which holds only if 
\begin{equation}\label{Discount3_}
\hat{c}(T_1)=\hat{c}(T_2) 
\end{equation}
since, as already mentioned, for all $T_1$, $T_2$ and for all $t_1$, there exists $t_2$ such that $F(t_1,T_1) = F(t_2, T_2)$, 
(\ref{Discount3_}) implies that for all $T_1$ and $T_2$ 
\begin{equation}\label{Discount8_} \notag
\hat{c}(T_1)=\hat{c}(T_2)= \bar{c} < 0,
\end{equation}
and, consequently,
\begin{equation}\label{Discount9_} 
F(t,T) = e^{-\hat{g}(T)t^{\bar{c}}}
\end{equation}
From the latter and (\ref{propr1_}), we get 
	\begin{equation}\label{Discount10_1}
	F(t,T)=e^{-g(t)T^{c(t)}}= e^{-\hat{g}(T)t^{\bar{c}}}
	\end{equation}
	with $c(t)>0$ and $\bar{c}<0$. 
	Taking into account (\ref{Discount10_1}), we get
\begin{equation}\label{Discount10_1bis}
g(t)T^{c(t)}=\hat{g}(T)t^{\bar{c}}
	\end{equation}
which is equivalent to 
\begin{equation}\label{Discount11_1bis} \notag
\frac{g(t)}{t^{\bar{c}}}= \frac{\hat{g}(T)}{T^{c(t)}}
	\end{equation}
As $\frac{g(t)}{t^{\bar{c}}}$ depends on $t$ and $\frac{\hat{g}(T)}{T^{c(t)}}$ depends on $t$ and $T$, then
\begin{itemize}
	\item there exists some $k$ such that 
\begin{equation}\label{Discount12_1}
\frac{g(t)}{t^{\bar{c}}}= \frac{\hat{g}(T)}{T^{c(t)}}=k>0
	\end{equation}
\item there exists some $\widetilde{c}>0$ such that 
\begin{equation}\label{Discount13_1}
c(t)=\widetilde{c}
	\end{equation}
\end{itemize}
From (\ref{propr1_}), (\ref{Discount12_1}) and (\ref{Discount13_1}) we get
\begin{equation}\label{Discount14_} \notag
	F(t,T)=e^{-kt^{\bar{c}}T^{\widetilde{c}}}
	\end{equation}
Taking into account (\ref{formula8}) we get
\begin{equation}\label{Discount15_}
F(t,T)=e^{-kt^{\bar{c}}T^{\widetilde{c}}}=e^{-r t^{-\alpha} \frac{T^{\beta+1}}{\beta+1}} 
\end{equation}
with $k=\frac{r}{\beta+1}$, $\bar{c}=-\alpha$ and $\widetilde{c}=\beta+1$, with $r>0$, $\alpha>0$ and $\beta>-1$.
%%%%%%%%%%%%%%%%%%%%%%%%%%%%%%%%%%%%%%%%%%%%
%%%%%%%%%%%%%%   CADI-CRDI    %%%%%%%%%%%%%%
%%%%%%%%%%%%%%%%%%%%%%%%%%%%%%%%%%%%%%%%%%%% 

\bigskip

\textbf{\ref{thm:CADI_CRDI}} We prove now that if  Axioms 1-3, 4',5-7 hold then the discount function has a \emph{CADI-CRDI} formulation (\ref{cadicrdi2}).
By Axioms 1, 3' and 7, which imply (\ref{Discountt_}), and by Axioms 2, 4 and 7, which imply (\ref{T_CRDI})
\begin{equation}\label{Discount_CADI_CRDI_1} \notag
	F(t,T)=e^{-\hat{g}(T)e^{\bar{c}t}}=e^{-g(t)T^{c(t)}} 
	\end{equation}
with $\bar{c}<0$ and $c(t)>0$. 	
	
	Consequently,
%\begin{equation}\label{Discount_CADI_CRDI_2} \notag
%F(t,T) = e^{-\hat{g}(T)e^{- \bar{c}t}}
%\end{equation}
%
%Now, combining (\ref{Discount10bis}), for the CADI part, with (\ref{Discount10_1bis}), for the CRDI part, we derive the following formulations in order to prove the hybrid discount functions:

\begin{equation}\label{hybrid_1}
    \hat{g}(T)e^{\bar{c}t} = g(t)T^{c(t)}
\end{equation}
from which we get
\begin{equation}\label{Discount_CADI-CRDI_3} \notag
    \frac{\hat{g}(T)}{T^{c(t)}} = \frac{g(t)}{e^{\bar{c}t}}.
\end{equation}
As $\frac{\hat{g}(T)}{T^{c(t)}}$ depends on $t$ and $T$ and  $\frac{g(t)}{e^{\bar{c}t}}$ depends on $t$, then
\begin{itemize}
	\item there exists some $k$ such that 
\begin{equation}\label{Discount_cadi_crdi_1}
\frac{\hat{g}(T)}{T^{c(t)}} = \frac{g(t)}{e^{\bar{c}t}}=k>0
	\end{equation}
\item there exists some $\widetilde{c}>0$ such that 
\begin{equation}\label{Discount_cadi_crdi_2}
c(t)=\widetilde{c}
	\end{equation}
\end{itemize}
From (\ref{propr1_}), (\ref{Discount_cadi_crdi_1}) and (\ref{Discount_cadi_crdi_2}) we get
\begin{equation}\label{Discount_cadi_crdi_final} \notag
	F(t,T)=e^{-ke^{\bar{c}t}T^{\widetilde{c}}}
	\end{equation}
Taking into account (\ref{formula8}) we get
\begin{equation}\label{Discount15_}
F(t,T)=e^{-ke^{\bar{c}t} T^{\widetilde{c}}}=e^{-r e^{-\delta t} \frac{T^{\beta+1}}{\beta+1}} 
\end{equation}
with $k=\frac{r}{\beta+1}$, $\bar{c}=-\delta$ and $\widetilde{c}=\beta+1$, with $r>0$, $\delta>0$ and $\beta>-1$.

%
%
%which holds if 
%and
%
%\begin{equation}\label{hybrid_2}
    %- \hat{g}(T)e^{- \bar{c}t} = g(t)T^{c(t)}
%\end{equation}
%
%For the CADI-CRDI proof, let us focus on expression (\ref{hybrid_2}). From that we get
%
%\begin{equation}\label{Discount_CADI_CRDI_3} \notag
    %\frac{e^{- \bar{c}t }}{g(t)} = \frac{T^{c(t)}}{-\hat{g}(T) }
%\end{equation}
%
%so that
%\begin{itemize}
    %\item there exists some $k$ such that
    %\begin{equation}\label{Discount_CADI_CRDI_4}
        %\frac{e^{- \bar{c}t }}{g(t)} = \frac{T^{c(t)}}{-\hat{g}(T) } = k
    %\end{equation}
    %\item there exists some $\hat{c}$ such that
    %\begin{equation}\label{Discount_CADI_CRDI_5} \notag
        %c(t) = \hat{c}
    %\end{equation}
%\end{itemize}
%
%On the basis of (\ref{Discount_CADI_CRDI_4}), we have 
%\begin{equation}\label{Discount_CADI_CRDI_6} \notag
    %g(t) = \frac{e^{- \bar{c}t }}{k}
%\end{equation}
%
%From (\ref{propr1_}), we get
%\begin{equation}\label{Discount_CADI_CRDI_7} \notag
    %F(t,T) = e^{g(t) T^{c(t)}}= e^{\frac{1}{k}e^{-\bar{c}t }T^{\hat{c} } }
%\end{equation}
%
%Taking into account the CADI-CRDI discount function (\ref{cadicrdi2}), we get
%
%\begin{equation}\label{Discount_CADI_CRDI_8} \notag
    %F(t,T) =e^{\frac{1}{k}e^{-\bar{c}t }T^{\hat{c} } } = e^{-r e^{- \delta t} \frac{T^{\beta +1}}{\beta +1} }
%\end{equation}
%
%with $r = - \frac{\beta +1 }{k} = \frac{\hat{c}}{k}$, $\delta= \bar{c}$ and $\beta = \hat{c}-1$
%

%%%%%%%%%%%%%%%%%%%%%%%%%%%%%%%%%%%%%%%%%%%%
%%%%%%%%%%%%%%   CRDI-CADI    %%%%%%%%%%%%%%
%%%%%%%%%%%%%%%%%%%%%%%%%%%%%%%%%%%%%%%%%%%% 

\bigskip

\textbf{\ref{thm:CRDI_CADI}} We prove now that if  Axioms 1-3, 4',5',6,7, then the discount function has a \emph{CRDI-CADI} formulation (\ref{formula6}) or (\ref{formula6_bis}).
By Axioms 1,3 and 6, on the basis of (\ref{propr1}), (\ref{propr2}) and (\ref{propr3}), and  by Axioms 2, 4' and 5' which imply (\ref{Discount9_}), we have the following three cases, with $\bar{c}<0, g(t)>0$ and $\hat{g}(T)>0$,  
\
\begin{itemize}
	\item for $c(t)>0$
	\begin{equation}\label{propr1_}
	F(t,T)=e^{g(t)[e^{-c(t)T}-1]}=e^{-\hat{g}(T)t^{\bar{c}}}
	\end{equation}
		\item for $c(t)=0$, 
	\begin{equation}\label{propr2_}
	F(t,T)=e^{-g(t)T}=e^{-\hat{g}(T)t^{\bar{c}}} 
	\end{equation}
	\item for $c(t)<0$, 
	\begin{equation}\label{propr3_}
	F(t,T)=e^{g(t)[1-e^{-c(t)T}]}=e^{-\hat{g}(T)t^{\bar{c}}} 
		\end{equation}
 \end{itemize}

From (\ref{propr1_}) we get 
\begin{equation}\label{propr1_A}
	g(t)[e^{-c(t)T}-1]=-\hat{g}(T)t^{\bar{c}}
	\end{equation}
 which is equivalent to 
\begin{equation}\label{propr1_B}
\frac{g(t)}{t^{\bar{c}}}=\frac{-\hat{g}(T)}{e^{-c(t)T}-1}
	\end{equation}

As $\frac{g(t)}{t^{\bar{c}}}$ depends on $t$ and $\frac{-\hat{g}(T)}{e^{-c(t)T}-1}$ depends on $t$ and $T$, then
\begin{itemize}
	\item there exists some $k$ such that 
\begin{equation}\label{Discount_1}
\frac{g(t)}{t^{\bar{c}}}=\frac{-\hat{g}(T)}{e^{-c(t)T}-1}=k>0
	\end{equation}
\item there exists some $\widetilde{c}>0$ such that 
\begin{equation}\label{Discount_2}
c(t)=\widetilde{c}
	\end{equation}
\end{itemize}
From (\ref{propr1_}), (\ref{Discount_1}) and (\ref{Discount_2}) we get
\begin{equation}\label{Discount_4} \notag
	F(t,T)=e^{kt^{\bar{c}}[e^{-\widetilde{c}T}-1]}
	\end{equation}
Taking into account (\ref{formula6}) we get
\begin{equation}\label{Discount_5}
F(t,T)=e^{-kt^{\bar{c}}[1-e^{-\widetilde{c}T}]}=e^{-r t^{\alpha}\frac{1-e^{-\gamma T}}{\gamma}} 
\end{equation}
with $k=\frac{r}{\gamma}$, $\bar{c}=\alpha$ and $\widetilde{c}=\gamma$, with $r>0$, $\alpha<0$ and $\gamma>0$.

From (\ref{propr2_}) we get 
\begin{equation}\label{propr2_A}
g(t)T=\hat{g}(T)t^{\bar{c}}
	\end{equation}
 which is equivalent to 
\begin{equation}\label{propr2_B}
\frac{g(t)}{t^{\bar{c}}}=\frac{\hat{g}(T)}{T}
	\end{equation}

As $\frac{g(t)}{t^{\bar{c}}}$ depends on $t$ and $\frac{\hat{g}(T)}{T}
$ depends on $T$, then there exists some $k$ such that 
\begin{equation}\label{Discount_1_}
\frac{g(t)}{t^{\bar{c}}}=\frac{\hat{g}(T)}{T}=k>0.
	\end{equation}
From (\ref{propr2_}) and (\ref{Discount_1_}) we get
\begin{equation}\label{Discount_4_} \notag
	F(t,T)=e^{-kt^{\bar{c}}T}
	\end{equation}
Taking into account (\ref{formula6_bis}) we get
\begin{equation}\label{Discount_5_}
F(t,T)=e^{-kt^{\bar{c}}T}=e^{-r t^\alpha T}  
\end{equation}
with $k=r$, $\bar{c}=\alpha$,  with $r>0$ and  $\alpha<0$.

From (\ref{propr3_}) we get 
\begin{equation}\label{propr3_A}
	g(t)[1-e^{-c(t)T}]=-\hat{g}(T)t^{\bar{c}}
	\end{equation}
 which is equivalent to 
\begin{equation}\label{propr3_B}
\frac{g(t)}{t^{\bar{c}}}=\frac{-\hat{g}(T)}{1-e^{-c(t)T}}
	\end{equation}

As $\frac{g(t)}{t^{\bar{c}}}$ depends on $t$ and $\frac{-\hat{g}(T)}{1-e^{-c(t)T}}$ depends on $t$ and $T$, then
\begin{itemize}
	\item there exists some $k$ such that 
\begin{equation}\label{Discount_1__}
\frac{g(t)}{t^{\bar{c}}}=\frac{-\hat{g}(T)}{1-e^{-c(t)T}}=k>0
	\end{equation}
\item there exists some $\widetilde{c}<0$ such that 
\begin{equation}\label{Discount_2__}
c(t)=\widetilde{c}
	\end{equation}
\end{itemize}
From (\ref{propr3_}), (\ref{Discount_1__}) and (\ref{Discount_2__}) we get
\begin{equation}\label{Discount_4-} \notag
	F(t,T)=e^{kt^{\bar{c}}[1-e^{-\widetilde{c}(t)T}]}
	\end{equation}
Taking into account (\ref{formula6}) we get
\begin{equation}\label{Discount_5}
F(t,T)=e^{kt^{\bar{c}}[1-e^{-\widetilde{c}T}]}=e^{-r t^{\alpha}\frac{1-e^{-\gamma T}}{\gamma}} 
\end{equation}
with $k=-\frac{r}{\gamma}$, $\bar{c}=\alpha$ and $\widetilde{c}=\gamma$, with $r>0$, $\alpha<0$ and $\gamma<0$.

\section{Properties of the discount functions}\label{prelec}

In this section we will derive for each family of discount functions the Prelec's measure \cite{prelec2004decreasing}. It is a criterion playing an analogous role to the Pratt-Arrow measure \citep{pratt1978risk} for the utility function: indeed, in the latter case it identifies the risk aversion, while in intertemporal choice it identifies the dynamic inconsistencies of agents.
For the sake of simplicity, in the following we consider the Pratt–Arrow degree of convexity of the logarithm of the discount function, namely the Prelec's measure known also as \emph{decreasing impatience}, that is defined as:

\begin{equation}\label{Prelec_1}\notag
\lambda (\tau) = - \frac{\big[\ln \ \phi(\tau)\big]''}{\big[\ln \ \phi(\tau) \big]'}
\end{equation}

where $\phi: \mathbb{R}^+ \rightarrow \mathbb{R}^+$ is  a ``classical'' discount function such that $\phi(t)$ gives the present value of one unit (of money or utility, according to the interpretations) available after $t$ years.

Considering a discount function $F(t,T)$, we can define Prelec's measure to determine %define
the degree of impatience both with respect to the time $t$ and with respect to the interval $T$, as
\begin{equation}\label{Prelec_t}
\lambda_1 (t,T) = - \frac{\big[\frac{\partial^2\ln \ F(t,T)}{\partial t^2}\big]}{\big[\frac{\partial\ln \ F(t,T)}{\partial t}\big]} \notag
\end{equation}

\begin{equation}\label{Prelec_T}
\lambda_2 (t,T) = - \frac{\big[\frac{\partial^2\ln \ F(t,T)}{\partial T^2}\big]}{\big[\frac{\partial\ln \ F(t,T)}{\partial T}\big]} \notag
\end{equation}

%Besides, as one can reasonably expect, the kind of discount function picked matters. In other words, there will be a peculiar Prelec's measure for a certain family of discounting.

%
\subsection{Constant Absolute Decreasing Impatience for t and T}
%

%%% CADI - CADI

%%
% t
%%
\begin{theorem} If Axioms 1,2,6 and 7 hold, then a discount function $F(t,T)$ belongs to the CADI-CADI family if and only if  the Prelec's measures with respect to $t$ and $T$ are constant, that is $\lambda_1 (t,T)=k_1$ and $\lambda_2(t,T)=k_2$, with $k_1>0$ and $k_2 \in \mathbb{R}$.
\bigskip
\end{theorem}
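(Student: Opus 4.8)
The plan is to prove the equivalence by direct computation in one direction and by an integration argument in the other. First I would compute the two Prelec measures for a generic CADI-CADI discount function. Taking $\ln F(t,T) = -r e^{-\delta t}\frac{1-e^{-\gamma T}}{\gamma}$ (the case $\gamma \neq 0$; the case $\gamma = 0$ is handled by the analogous limit), I would differentiate with respect to $t$ twice: $\frac{\partial \ln F}{\partial t} = \delta r e^{-\delta t}\frac{1-e^{-\gamma T}}{\gamma}$ and $\frac{\partial^2 \ln F}{\partial t^2} = -\delta^2 r e^{-\delta t}\frac{1-e^{-\gamma T}}{\gamma}$, so that $\lambda_1(t,T) = \delta$, a positive constant. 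Similarly, $\frac{\partial \ln F}{\partial T} = -r e^{-\delta t} e^{-\gamma T}$ and $\frac{\partial^2 \ln F}{\partial T^2} = \gamma r e^{-\delta t} e^{-\gamma T}$, giving $\lambda_2(t,T) = \gamma \in \mathbb{R}$. This establishes the "only if" direction and, moreover, pins down the interpretation of the parameters as the two Prelec constants.

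For the converse, I would assume $\lambda_1(t,T) = k_1 > 0$ and $\lambda_2(t,T) = k_2 \in \mathbb{R}$ identically and recover the functional form. Writing $\psi(t,T) = \ln F(t,T)$, the condition $\lambda_1 = k_1$ says $\frac{\partial^2 \psi}{\partial t^2} = -k_1 \frac{\partial \psi}{\partial t}$, i.e. the function $t \mapsto \frac{\partial \psi}{\partial t}(t,T)$ satisfies a linear first-order ODE in $t$ with solution $\frac{\partial \psi}{\partial t}(t,T) = A(T) e^{-k_1 t}$ for some function $A(T)$. Integrating in $t$ gives $\psi(t,T) = -\frac{A(T)}{k_1} e^{-k_1 t} + B(T)$. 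Symmetrically, the condition $\lambda_2 = k_2$ forces the $T$-dependence: if $k_2 \neq 0$, then $\frac{\partial \psi}{\partial T} = C(t) e^{-k_2 T}$, hence $\psi(t,T) = -\frac{C(t)}{k_2} e^{-k_2 T} + D(t)$; if $k_2 = 0$ it is linear in $T$, $\psi(t,T) = C(t) T + D(t)$. Matching the two representations of $\psi$ and using Axiom 6 ($F(t,0) = 1$, i.e. $\psi(t,0) = 0$) to eliminate the additive pieces $B(T)$, $D(t)$, and Axiom 7 ($\lim_{t\to\infty} F(t,T) = 1$, i.e. $\psi(t,T)\to 0$) together with Axioms 1 and 2 (monotonicity, fixing signs) to constrain the remaining functions, I would conclude that $\psi(t,T) = -r e^{-k_1 t}\frac{1-e^{-k_2 T}}{k_2}$ (resp. $-r e^{-k_1 t} T$ when $k_2 = 0$) for some $r > 0$, which is exactly the CADI-CADI form with $\delta = k_1$, $\gamma = k_2$.

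The main obstacle I anticipate is the bookkeeping in the converse: the two separated representations of $\psi$ involve a priori arbitrary functions $A(T), B(T), C(t), D(t)$, and one must argue carefully that consistency between them, after imposing the boundary axioms, forces the cross term to be a pure product $e^{-k_1 t} e^{-k_2 T}$ rather than a more general combination. Concretely, from $\psi(t,T) = -\frac{A(T)}{k_1}e^{-k_1 t} + B(T)$ one reads off $\frac{\partial \psi}{\partial T} = -\frac{A'(T)}{k_1}e^{-k_1 t} + B'(T)$; for this to be of the form $C(t)e^{-k_2 T}$ for \emph{all} $t$, the $t$-independent part $B'(T)$ must vanish and $A'(T)$ must be proportional to $e^{-k_2 T}$, so $A(T) = -\frac{r k_1}{k_2}(1 - e^{-k_2 T}) + \text{const}$, with the constant killed by $\psi(t,0)=0$. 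I would then note that Axiom 7 forces $B(T) \equiv 0$, Axiom 1 gives $\gamma = k_2$ can be any real while the coefficient has the right sign, and Axiom 2 gives $\delta = k_1 > 0$, completing the identification. The cases $\gamma \neq 0$ and $\gamma = 0$ should be treated in parallel throughout, the latter obtained either directly or as the $k_2 \to 0$ limit of the former.
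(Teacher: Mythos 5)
Your proposal is correct and follows essentially the same route as the paper: the forward direction is the identical differentiation of $\ln F$, and the converse solves the same first-order ODE for $\partial \ln F/\partial t$ and $\partial \ln F/\partial T$ in each variable, kills the integration "constants" $B(T)$ and $D(t)$ via Axioms 7 and 6 respectively, and concludes by the same separation-of-variables matching of the two representations (with the $k_2=0$ case split off in parallel). The only cosmetic difference is that you work with $\psi=\ln F$ while the paper manipulates $F$ directly through the auxiliary function $z=[\ln\phi]'$.
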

\begin{proof} Let us start by proving that the CADI-CADI discount function

$$\\ F(t,T) = e^{- r e^{-\delta t }\frac{1-e^{- \gamma T}}{\gamma} }$$

has constant Prelec's measures $\lambda_1 (t,T)$ and $\lambda_2 (t,T)$. Indeed, since:

\begin{equation}\label{Prelec_2}
\frac{\partial \ln F(t,T)}{\partial t} =  \delta r\frac{1-e^{- \gamma T}}{\gamma} e^{-\delta t  }\notag 
\end{equation}

and 

\begin{equation}\label{Prelec_3}\notag
\frac{\partial^2 \ln F(t,T)}{\partial t^2} =  -\delta^2 r\frac{1-e^{- \gamma T}}{\gamma} e^{-\delta t  }  
\end{equation}

we get 

\begin{equation}\label{Prelec_4}\notag
\lambda_1 (t,T)=\delta
\end{equation}

Analogously, from

\begin{equation}\label{Prelec5}\notag
\frac{\partial \ln F(t,T)}{\partial T} = -r e^{-\delta t} e^{- \gamma T} 
\end{equation}

and

\begin{equation}\label{Prelec6}\notag
\frac{\partial^2 \ln F(t,T)}{\partial T^2} = \gamma r e^{-\delta t} e^{- \gamma T} 
%\frac{\partial^2 \log F(t,T)}{\partial T^2} = - \frac{r}{\gamma} e^{-\delta t} e^{- \gamma T}
 %(- \gamma)(- \gamma) = -r e^{-\delta t \frac{1-e^{- \gamma T}}{\gamma} } \gamma^2
\end{equation}

we get

\begin{equation}\label{Prelec7}\notag
\lambda_2(t,T) = \gamma
\end{equation}

Now we prove that if Axioms 1,2,6 and 7 hold and the Prelec's measures $\lambda_1 (t,T)$ and $\lambda_2 (t,T)$ are constant, then a discount function $F(t,T)$ belongs to the CADI-CADI family. Let us consider a function $\phi: \mathbb{R}^+\rightarrow \mathbb{R}^+$ and a costant $k>0$ such that
\begin{equation}\label{Prelec_8}
\lambda (x) = - \frac{\big[ln \ \phi(x)\big]''}{\big[\ln \ \phi(x) \big]'}=k
\end{equation}

Let us consider the function $z : \mathbb{R}^+\rightarrow \mathbb{R}^+$ defined as
\begin{equation}\label{Prelec_9}
z(x) = \big[ln \ \phi(x)\big]'
\end{equation}

so that 
\begin{equation}\label{Prelec_10}\notag
\lambda (x) = - \frac{z(x)'}{z(x)}
\end{equation}

and we can rewrite (\ref{Prelec_8}) as
\begin{equation}\label{Prelec_11}\notag
\frac{z(x)'}{z(x)}=-k
\end{equation}

and, consequently
\begin{equation}\label{Prelec_12}
\left[ln \ |z(x)|\right]'=-k \notag
\end{equation}

the solution of which is 
\begin{equation}\label{Prelec_13}
z(x)=a e^{-kx}, \mbox{ or } z(x)=-a e^{-kx}, a>0.
\end{equation}

Taking into account (\ref{Prelec_9}) and (\ref{Prelec_13}) we get
\begin{equation}\label{Prelec_14}
\big[ln \ |\phi(x)|\big]'=a e^{-kx} \mbox{ or } \big[ln \ |\phi(x)|\big]'=-a e^{-kx}  
\end{equation}

Since $\phi(x)>0$, the solution of (\ref{Prelec_14}) is
\begin{equation}\label{Prelec_14bis}
\phi(x)=be^{-\frac{a}{k} e^{-kx}} \mbox{ or } \phi(x)=be^{\frac{a}{k} e^{-kx}}, a>0, b>0
\end{equation}
if $k\neq 0$, and
\begin{equation}\label{Prelec_14ter}
\phi(x)=be^{ax} \mbox{ or } \phi(x)=be^{-ax}, a>0, b>0.
\end{equation}
if $k=0$.

Let us assume that there exists $k_1 \in \mathbb{R}^+$ and $k_2 \in \mathbb{R}$ such that
\begin{equation}\label{Prelec_15}
\lambda_1(t,T)=k_1
\end{equation}

and
\begin{equation}\label{Prelec_16}
\lambda_2(t,T)=k_2
\end{equation}

From (\ref{Prelec_14bis}) and (\ref{Prelec_15}), and remembering that by Axiom 2 $F(t,T)$ is increasing with respect to $t$, we get
\begin{equation}\label{Prelec_17}
F(t,T)=b_1(T)e^{-\frac{a_1(T)}{k_1} e^{-k_1t}}
\end{equation} 

For $k_2 \neq 0$, from (\ref{Prelec_14bis}) and (\ref{Prelec_16}), and remembering that by Axiom 1 $F(t,T)$ is decreasing with respect to $T$, we get
 
\begin{equation}\label{Prelec_18} \notag
F(t,T)=b_2(t)e^{\frac{a_2(t)}{k_2} e^{-k_2T}}
\end{equation}

By Axiom 6 we get
\begin{equation}\label{Prelec_19}\notag
F(t,0)=b_2(t)e^{\frac{a_2(t)}{k_2}}=1 
\end{equation}

and, consequently,
\begin{equation}\label{Prelec_20}\notag
b_2(t)=e^{-\frac{a_2(t)}{k_2}}
\end{equation} 

such that
\begin{equation}\label{Prelec_21}
F(t,T)=e^{-\frac{a_2(t)}{k_2}[1-e^{-k_2T}]}
\end{equation} 

Considering (\ref{Prelec_17}) 
\begin{equation}\label{Prelec_22}\notag
lim_{t\rightarrow+\infty}F(t,T)=lim_{t\rightarrow+\infty}b_1(T)e^{-\frac{a_1(T)}{k_1} e^{-k_1t}}=b_1(T)
\end{equation}

so that by Axiom 7 we get
\begin{equation}\label{Prelec_23}\notag
b_1(T)=1
\end{equation}

and consequently, considering again (\ref{Prelec_17}),
\begin{equation}\label{Prelec_24}
F(t,T)=e^{-\frac{a_1(T)}{k_1} e^{-k_1t}}
\end{equation}

From (\ref{Prelec_21}) and (\ref{Prelec_24}) we get
\begin{equation}\label{Prelec_25}\notag
e^{-\frac{a_1(T)}{k_1} e^{-k_1t}}=e^{-\frac{a_2(t)}{k_2}[1- e^{-k_2T}]}
\end{equation}

and, consequently, we obtain
\begin{equation}\label{Prelec_26}
\frac{a_1(T)}{k_1} e^{-k_1t}=\frac{a_2(t)}{k_2}[1- e^{-k_2T}]\notag
\end{equation}

that can be satisfied only if there exists $c \in \mathbb{R}^+$ such that
\begin{equation}\label{Prelec_27}
\frac{a_1(T)}{1- e^{-k_2T}}=\frac{a_2(t)}{k_2}{k_1} e^{k_1t}=c \notag
\end{equation} 

and thus 
\begin{equation}\label{Prelec_28} \notag
a_1(T)=c(1- e^{-k_2T})
\end{equation} 
%modifica rigo successivo

so that, substituting in (\ref{Prelec_24}), 
\begin{equation}\label{Prelec_29} 
F(t,T)=e^{-c\frac{(1- e^{-k_2T})}{k_1} e^{-k_1t}}\notag
\end{equation} 

so that, taking $\delta=k_1$, $\gamma=k_2$ and $r=\frac{ck_2}{k_1}$, we re-obtain

$$\\ F(t,T) = e^{- r e^{-\delta t }\frac{1-e^{- \gamma T}}{\gamma} }$$

For $k_2=0$, from (\ref{Prelec_14ter}) and (\ref{Prelec_16}), and remembering that by Axiom 1 $F(t,T)$ is decreasing with respect to $T$, we get
 
\begin{equation}\label{Prelec_18bis} \notag
F(t,T)=b_2(t)e^{-a_2(t)T}
\end{equation}

By Axiom 6 we get
\begin{equation}\label{Prelec_19bis}\notag
F(t,0)=b_2(t)=1 
\end{equation}

and, consequently,

\begin{equation}\label{Prelec_21bis}
F(t,T)=e^{-a_2(t)T}
\end{equation} 

From (\ref{Prelec_21bis}) and (\ref{Prelec_24}) we get
\begin{equation}\label{Prelec_25bis}\notag
e^{-\frac{a_1(T)}{k_1} e^{-k_1t}}=e^{-a_2(t)T}
\end{equation}

and, consequently, we obtain
\begin{equation}\label{Prelec_26bis}
\frac{a_1(T)}{k_1} e^{-k_1t}=a_2(t)T\notag
\end{equation}

that can be satisfied only if there exists $c \in \mathbb{R}^+$ such that
\begin{equation}\label{Prelec_27bis}
\frac{a_1(T)}{T}=\frac{a_2(t)}{e^{-k_1t}}=c \notag
\end{equation} 

and thus 
\begin{equation}\label{Prelec_28bis} \notag
a_1(T)=cT
\end{equation} 
%modifica rigo successivo

so that, substituting in (\ref{Prelec_24}), 
\begin{equation}\label{Prelec_29bis} 
F(t,T)=e^{-ce^{-k_1t}T}\notag
\end{equation} 

so that, taking $\delta=k_1$ and $r=c$, we re-obtain

$$\\ F(t,T) = e^{-re^{-\delta t} T}$$
\end{proof}

%%%%%%%%%%%%%%%%%%%%%%%%%%%%%%%%%%%%%%%%%%%
\subsection{Constant Relative Decreasing Impatience for $t$ and $T$}
%%%%%%%%%%%%%%%%%%%%%%%%%%%%%%%%%%%%%%%%%%%%
In this section we consider the Pratt–Arrow relative degree of convexity of the logarithm of the discount function, namely the Prelec's relative measure known also as \emph{relative decreasing impatience}, that is defined as:

\begin{equation}\label{Prelecr_1}
\mu(\tau) = - \tau\frac{\big[ ln \ \phi(\tau)\big]''}{\big[ ln \ \phi(\tau) \big]'} \notag
\end{equation}

where $\phi: \mathbb{R}^+ \rightarrow \mathbb{R}^+$.

Considering a discounting function $F(t,T)$, we can define Prelec's relative measure to determine %define
the degree of impatience both with respect to the time $t$ and with respect to the interval $T$, as
\begin{equation}\label{Prelecr_t}
\mu_1 (t,T) = - t\frac{\big[\frac{\partial^2 ln \ F(t,T)}{\partial t^2}\big]}{\big[\frac{\partial ln \ F(t,T)}{\partial t}\big]} \notag
\end{equation}

\begin{equation}\label{Prelecr_T}
\mu_2 (t,T) = - T\frac{\big[\frac{\partial^2 ln \ F(t,T)}{\partial T^2}\big]}{\big[\frac{\partial ln \ F(t,T)}{\partial T}\big]} \notag
\end{equation}

%Besides, as one can reasonably expect, the kind of discount function picked matters. In other words, there will be a peculiar Prelec's measure for a certain family of discounting.

%
\subsection{Relative Decreasing Impatience for $t$ and $T$}
%

%%% CRDI - CRDI

%%
% t
%%
\begin{theorem} If Axioms 1,2,6 and 7 hold, then a discount function $F(t,T)$ belongs to the CRDI-CRDI family if and only if  the Prelec's relative measures with respect to $t$ and $T$ are constant, and, in particular, $\mu_1 (t,T)=h_1>1$ and $\mu_2(t,T)=h_2<1$.
\bigskip
\end{theorem}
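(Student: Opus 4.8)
The plan mirrors the structure of the CADI-CADI Prelec theorem above: a direct computation for the ``only if'' part, then a separation-of-variables argument for the ``if'' part.

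For the ``only if'' direction I would take $F(t,T)=e^{-rt^\alpha\frac{T^{\beta+1}}{\beta+1}}$ with $r>0$, $\alpha<0$, $\beta>-1$ (the admissible range, by statement \ref{thm:CRDI_CRDI} of the first theorem). Differentiating $\ln F(t,T)=-rt^\alpha\frac{T^{\beta+1}}{\beta+1}$ one finds $\frac{\partial^2\ln F}{\partial t^2}\big/\frac{\partial\ln F}{\partial t}=\frac{\alpha-1}{t}$, hence $\mu_1(t,T)=1-\alpha$, which is constant and $>1$ since $\alpha<0$; similarly $\frac{\partial^2\ln F}{\partial T^2}\big/\frac{\partial\ln F}{\partial T}=\frac{\beta}{T}$, hence $\mu_2(t,T)=-\beta$, constant and $<1$ since $\beta>-1$. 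So $h_1:=1-\alpha>1$ and $h_2:=-\beta<1$.

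For the ``if'' direction I would first establish the relative analogue of the lemma used in the CADI-CADI proof: if $\phi:\mathbb{R}^+\to\mathbb{R}^+$ satisfies $-x\frac{[\ln\phi(x)]''}{[\ln\phi(x)]'}=h$ with $h\neq 1$, then, writing $z(x)=[\ln\phi(x)]'$, the condition becomes $\frac{z'(x)}{z(x)}=-\frac{h}{x}$, so $z(x)=\pm a x^{-h}$ with $a>0$, and integrating once more $\phi(x)=b\,e^{\pm\frac{a}{1-h}x^{1-h}}$ with $b>0$ (for $h=1$ one gets instead $\phi(x)=bx^{\pm a}$). Applying this in the variable $t$ with $T$ a parameter, and in the variable $T$ with $t$ a parameter, yields $F(t,T)=b_1(T)e^{\pm\frac{a_1(T)}{1-h_1}t^{1-h_1}}$ and $F(t,T)=b_2(t)e^{\pm\frac{a_2(t)}{1-h_2}T^{1-h_2}}$. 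Since $\partial_tF=F\cdot(\pm a_1(T))t^{-h_1}$, Axiom 2 selects the $+$ sign in the first; Axiom 7 with $t^{1-h_1}\to0$ then forces $b_1(T)\equiv1$. Since $\partial_TF=F\cdot(\pm a_2(t))T^{-h_2}$, Axiom 1 selects the $-$ sign in the second; Axiom 6 with $0^{1-h_2}=0$ (using $1-h_2>0$) forces $b_2(t)\equiv1$. The same boundary considerations exclude $h_1\le1$ and $h_2\ge1$, pinning down $h_1>1$, $h_2<1$. Equating $\frac{a_1(T)}{1-h_1}t^{1-h_1}=-\frac{a_2(t)}{1-h_2}T^{1-h_2}$ and separating variables forces $\frac{a_1(T)}{(1-h_1)T^{1-h_2}}$ to equal a constant $c$, necessarily negative; substituting back gives $F(t,T)=e^{c\,t^{1-h_1}T^{1-h_2}}$. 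Reading off $\alpha=1-h_1<0$, $\beta=-h_2>-1$, $r=-c(1-h_2)>0$ recovers $F(t,T)=e^{-rt^\alpha\frac{T^{\beta+1}}{\beta+1}}$, i.e. the CRDI-CRDI family.

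I expect the differentiations to be harmless; the delicate part is the sign/branch bookkeeping in the ``if'' direction: choosing the correct exponent sign from the monotonicity Axioms 1 and 2, using the limits at $t\to+\infty$ and $T\to0^+$ both to eliminate the factors $b_1(T),b_2(t)$ and to rule out the $h=1$ power branch together with the ``wrong'' ranges $h_1\le1$, $h_2\ge1$, and checking that the separation-of-variables step is consistent only when $a_1$ is the displayed power of $T$, so that the parameter constraints $r>0$, $\alpha<0$, $\beta>-1$ emerge automatically.
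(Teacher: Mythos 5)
Your proposal is correct and follows essentially the same route as the paper: the same direct differentiation giving $\mu_1=1-\alpha$ and $\mu_2=-\beta$ for the "only if" part, and for the "if" part the same ODE $-x z'/z=h$ with $z=[\ln\phi]'$, the same use of Axioms 1 and 2 to fix the sign branches, Axioms 6 and 7 to kill $b_2(t)$ and $b_1(T)$, and the same separation-of-variables step to recover $F(t,T)=e^{-rt^\alpha T^{\beta+1}/(\beta+1)}$. The only (immaterial) differences are your choice of normalization for the separation constant, which makes your $c$ negative where the paper's is positive, and your extra remarks ruling out the $h=1$ power branch and the ranges $h_1\le 1$, $h_2\ge 1$, which the paper simply takes as part of the hypothesis.
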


\begin{proof} Let us start by proving that the CRDI-CRDI discount function

$$\\ F(t,T)=e^{-r t^\alpha \frac{T^{\beta+1}}{\beta+1}}$$

has constant Prelec's relative measures $\mu_1 (t,T)=h_1>1$ and $\mu_2(t,T)=h_2<1$. Indeed, since:

\begin{equation}\label{Prelec_CRDI_CRDI_2}
\frac{\partial ln \ F(t,T)}{\partial t} =   -r \alpha t^{\alpha-1} \frac{T^{\beta+1}}{\beta+1} \notag
\end{equation}

with $\alpha<0$ and $\beta>-1$

\begin{equation}\label{Prelec_CRDI_CRDI_3}
\frac{\partial^2 ln \ F(t,T)}{\partial t^2} =  -r \alpha (\alpha-1) t^{\alpha-2} \frac{T^{\beta+1}}{\beta+1}   \notag
\end{equation}

we get 

\begin{equation}\label{Prelec_CRDI_CRDI_4}
\mu_1 (t,T)=1-\alpha >1. \notag
\end{equation}

Analogously, from

\begin{equation}\label{Prelec_CRDI_CRDI_5}
\frac{\partial ln \ F(t,T)}{\partial T} = -r t^\alpha T^\beta \notag
\end{equation}

and

\begin{equation}\label{Prelec_CRDI_CRDI_6}
\frac{\partial^2 ln \ F(t,T)}{\partial T^2} = -r t^\alpha \beta T^{\beta-1} \notag
%\frac{\partial^2 \log F(t,T)}{\partial T^2} = - \frac{r}{\gamma} e^{-\delta t} e^{- \gamma T}
 %(- \gamma)(- \gamma) = -r e^{-\delta t \frac{1-e^{- \gamma T}}{\gamma} } \gamma^2
\end{equation}

we get

\begin{equation}\label{Prelec_CRDI_CRDI_7}
\mu_2(t,T) = -\beta<1. \notag
\end{equation}

Now we prove that if Axioms 1,2,6 and 7 hold and the Prelec's relative measures $\mu_1 (t,T)>1$ and $\mu_2 (t,T)<1$ are constant, then a discount function $F(t,T)$ belongs to the CRDI-CRDI family. Let us consider a function $\phi: \mathbb{R}^+\rightarrow \mathbb{R}^+$ and a constant $h$ such that
\begin{equation}\label{Prelec_CRDI_CRDI_8}
\mu (x) = - x\frac{\big[ln \ \phi(x)\big]''}{\big[ln \ \phi(x) \big]'}=h
\end{equation}
Let us consider again the function $z : \mathbb{R}^+\rightarrow \mathbb{R}$ defined as
\begin{equation}\label{Prelec_CRDI_CRDI_9}
z(x) = \big[ln \ \phi(x)\big]'
\end{equation}
so that 
\begin{equation}\label{Prelec_CRDI_CRDI_10} \notag
\mu (x) = - x\frac{z(x)'}{z(x)}
\end{equation}
and we can rewrite (\ref{Prelec_CRDI_CRDI_8}) as
\begin{equation}\label{Prelecr_11}
x\frac{z(x)'}{z(x)}=-h \notag
\end{equation}
and, consequently
\begin{equation}\label{Prelec_CRDI_CRDI_12}
\left[ ln \ |z(x)|\right]'=-h \ ln(x)' \notag
\end{equation}
the solution of which is
\begin{equation}\label{Prelec_CRDI_CRDI_13}
z(x)=ax^{-h} \mbox{ or } z(x)=-ax^{-h}, a>0.
\end{equation}
Taking into account (\ref{Prelec_CRDI_CRDI_9}) and (\ref{Prelec_CRDI_CRDI_13}) we get
\begin{equation}\label{Prelec_CRDI_CRDI_14}
\big[ln \ \phi(x)\big]'=a x^{-h} \mbox{ or } \big[ln \ \phi(x)\big]'=-a x^{-h}, a>0\notag
\end{equation}
and remembering that $\phi:\mathbb{R}^+\rightarrow \mathbb{R}^+$ is
\begin{equation}\label{Prelec_CRDI_CRDI_14bis}
\phi(x)=b e^{a\frac{x^{-h+1}}{-h+1}} \mbox{ or } \phi(x)=b e^{-a\frac{x^{-h+1}}{-h+1}}, a>0, b>0.
\end{equation}

Taking into account that $\mu_1(t,T)=h_1>1$ and $\mu_2(t,T)=h_2<1$ and Axioms 1 and 2, from (\ref{Prelec_CRDI_CRDI_14bis}) we get 
\begin{equation}\label{Prelec_CRDI_CRDI_17}
F(t,T)=b_1(T)e^{a_1(T)\frac{t^{-h_1+1}}{-h_1+1}}
\end{equation} 
and 
\begin{equation}\label{Prelec_CRDI_CRDI_18} 
F(t,T)=b_2(t)e^{-a_2(t)\frac{T^{-h_2+1}}{-h_2+1}}.
\end{equation} 
Taking into account (\ref{Prelec_CRDI_CRDI_18}), by Axiom 6 we get
\begin{equation}\label{Prelec_CRDI_CRDI_19} 
F(t,0)=b_2(t)=1 \notag
\end{equation} 
%and, consequently,
%\begin{equation}\label{Prelec_CRDI_CRDI_20}
%b_2(t)=e^{\frac{a_2(t)}{k_2}}
%\end{equation} 
such that
\begin{equation}\label{Prelec_CRDI_CRDI_21}
F(t,T)=e^{-a_2(t)\frac{T^{-h_2+1}}{-h_2+1}}
\end{equation} 
Considering (\ref{Prelec_CRDI_CRDI_17}) 
\begin{equation}\label{Prelec_CRDI_CRDI_22}
lim_{t\rightarrow+\infty}F(t,T)=lim_{t\rightarrow+\infty}b_1(T)e^{a_1(T)\frac{t^{-h_1+1}}{-h_1+1}}=b_1(T) \notag
\end{equation} 
so that by Axiom 7 we get
\begin{equation}\label{Prelec_CRDI_CRDI_23}
b_1(T)=1 \notag
\end{equation} 
and, consequently, considering again (\ref{Prelec_CRDI_CRDI_17}),
\begin{equation}\label{Prelec_CRDI_CRDI_24}
F(t,T)=e^{a_1(T)\frac{t^{-h_1+1}}{-h_1+1}}
\end{equation} 
From (\ref{Prelec_CRDI_CRDI_21}) and (\ref{Prelec_CRDI_CRDI_24}) we get
\begin{equation}\label{Prelec_CRDI_CRDI_25}
e^{a_1(T)\frac{t^{-h_1+1}}{-h_1+1}}=e^{-a_2(t)\frac{T^{-h_2+1}}{-h_2+1}} \notag
\end{equation} 
and, thus, we obtain
\begin{equation}\label{Prelec_CRDI_CRDI_26}
a_1(T)\frac{t^{-h_1+1}}{-h_1+1}=-a_2(t)\frac{T^{-h_2+1}}{-h_2+1} \notag
\end{equation} 
that can be satisfied if and only if there exists $c>0$ such that
\begin{equation}\label{Prelec_CRDI_CRDI_27}
a_1(T)\frac{-h_2+1}{T^{-h_2+1}}=-a_2(t)\frac{-h_1+1}{t^{-h_1+1}}=c
\end{equation} 
From (\ref{Prelec_CRDI_CRDI_27}) we get 
\begin{equation}\label{Prelec_CRDI_CRDI_28}
a_1(T)=c\frac{T^{-h_2+1}}{-h_2+1} \notag
\end{equation} 
%modifica rigo successivo
so that, substituting in (\ref{Prelec_CRDI_CRDI_24}), 
\begin{equation}\label{Prelec_CRDI_CRDI_29} \notag
F(t,T)=e^{c\frac{T^{-h_2+1}}{-h_2+1}\frac{t^{-h_1+1}}{-h_1+1}}
\end{equation} 
and, consequently, taking $\alpha=-h_1+1$, $\beta=-h_2$ and $r=-\frac{c}{-h_1+1}$, we re-obtain

$$\\ F(t,T)=e^{-r t^\alpha \frac{T^{\beta+1}}{\beta+1}}$$

\end{proof}

%%%%%%%%%%%%%%%%%%%%%%%%%%%%%%%%%%%%%%%%%%%
\subsection{Constant Absolute Decreasing Impatience for $t$ and Constant Relative Decreasing Impatience for $T$}
%%%%%%%%%%%%%%%%%%%%%%%%%%%%%%%%%%%%%%%%%%%%

%

%%% CADI - CRDI

%%
% t
%%
\begin{theorem} If Axioms 1,2,6 and 7 hold, then a discount function $F(t,T)$ belongs to the CADI-CRDI family if and only if  the Prelec's absolute measure with respect to $t$ and Prelec's relative  measure with respect to $T$ are constant, and, in particular, $\lambda_1 (t,T)=k_1>0$ and $\mu_2(t,T)=h_2<1$.
\bigskip
\end{theorem}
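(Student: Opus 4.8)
The plan is to mirror the two-part argument already used for the CADI-CADI and CRDI-CRDI theorems. For the ``only if'' direction I would start from the explicit CADI-CRDI formula $F(t,T)=e^{-re^{-\delta t}\frac{T^{\beta+1}}{\beta+1}}$, so that $\ln F(t,T)=-re^{-\delta t}\frac{T^{\beta+1}}{\beta+1}$. Differentiating twice with respect to $t$ gives $\partial_t\ln F=\delta r e^{-\delta t}\frac{T^{\beta+1}}{\beta+1}$ and $\partial_t^2\ln F=-\delta^2 r e^{-\delta t}\frac{T^{\beta+1}}{\beta+1}$, whence $\lambda_1(t,T)=\delta$, a positive constant, so $k_1=\delta$. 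Differentiating twice with respect to $T$ gives $\partial_T\ln F=-re^{-\delta t}T^{\beta}$ and $\partial_T^2\ln F=-r\beta e^{-\delta t}T^{\beta-1}$, whence $\mu_2(t,T)=-T\,\frac{\beta T^{\beta-1}}{T^{\beta}}=-\beta$, which is strictly less than $1$ precisely because $\beta>-1$; thus $h_2=-\beta$. This settles that a CADI-CRDI function has the claimed constant Prelec measures.

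For the ``if'' direction I would reuse the two ODE computations carried out inside the proofs of the CADI-CADI and CRDI-CRDI theorems. Constancy of $\lambda_1$ at the value $k_1>0$, via the solution recorded in~(\ref{Prelec_14bis}), together with Axiom~2 (which forces $F$ to be increasing in $t$ and hence selects the branch with $a_1(T)>0$), yields $F(t,T)=b_1(T)\,e^{-\frac{a_1(T)}{k_1}e^{-k_1 t}}$. Constancy of $\mu_2$ at the value $h_2<1$, via~(\ref{Prelec_CRDI_CRDI_14bis}) — note that $h_2\neq 1$ and $-h_2+1>0$, so $T^{-h_2+1}$ is increasing — together with Axiom~1 (which forces $F$ to be decreasing in $T$ and hence selects the branch with $a_2(t)>0$), yields $F(t,T)=b_2(t)\,e^{-a_2(t)\frac{T^{-h_2+1}}{-h_2+1}}$. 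Axiom~6 applied to the second representation gives $b_2(t)=F(t,0)=1$, and Axiom~7 applied to the first gives $b_1(T)=\lim_{t\to+\infty}F(t,T)=1$.

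Equating the two resulting exponents leaves $\frac{a_1(T)}{k_1}e^{-k_1 t}=a_2(t)\frac{T^{-h_2+1}}{-h_2+1}$. Since the left-hand side is a product of a function of $T$ and a function of $t$, and likewise the right-hand side, rearranging gives $\frac{a_1(T)(-h_2+1)}{T^{-h_2+1}}=\frac{k_1 a_2(t)}{e^{-k_1 t}}$; a function of $T$ alone equals a function of $t$ alone, so both equal a constant $c$, and $c>0$ because $a_2(t)>0$, $k_1>0$ and $e^{-k_1 t}>0$. Hence $a_1(T)=c\frac{T^{-h_2+1}}{-h_2+1}$, and substituting back into the first representation gives $F(t,T)=e^{-\frac{c}{k_1}\frac{T^{-h_2+1}}{-h_2+1}e^{-k_1 t}}$. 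Setting $\delta=k_1$, $\beta=-h_2$ and $r=c/k_1$ — so that $r>0$, $\delta>0$ and $\beta>-1$, as required — recovers exactly the CADI-CRDI form~(\ref{cadicrdi2}), completing the argument.

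Since this theorem is essentially a synthesis of the two preceding ones, I do not expect a serious obstacle; the only points needing care are choosing the correct sign branch in each ODE lemma so that Axioms~1 and~2 are honoured, and checking that the separation constant $c$ is strictly positive (which it is, because the branches picked out by monotonicity have $a_1(T)>0$ and $a_2(t)>0$). The separation-of-variables step is the natural crux, but it becomes routine once the representations furnished by~(\ref{Prelec_14bis}) and~(\ref{Prelec_CRDI_CRDI_14bis}) are in hand.
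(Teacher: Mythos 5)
Your proposal matches the paper's proof essentially step for step: the same direct computation giving $\lambda_1=\delta$ and $\mu_2=-\beta<1$ for the forward direction, and the same reuse of the ODE representations already derived in the CADI-CADI and CRDI-CRDI proofs, pinned down by Axioms 6 and 7 and then combined by separation of variables, for the converse. If anything, your bookkeeping of the separation constant (yielding $r=c/k_1$) is slightly more careful than the paper's, which drops a factor of $1/k_1$ in the final substitution.
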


\begin{proof} Let us start by proving that the CADI-CRDI discount function

$$\\ F(t,T)=e^{-r e^{-\delta t} \frac{T^{\beta+1}}{\beta+1}}$$

has constant Prelec's absolute measure $\lambda_1 (t,T)$. Indeed, since:

\begin{equation}\label{Prelec_CADI_CRDI_2}
\frac{\partial \log F(t,T)}{\partial t} =   r \delta \frac{T^{\beta+1}}{\beta+1} e^{-\delta t} \notag
\end{equation}

and 

\begin{equation}\label{Prelec_CADI_CRDI_3}
\frac{\partial^2 \log F(t,T)}{\partial t^2} =   -r \delta^2 \frac{T^{\beta+1}}{\beta+1} e^{-\delta t}  \notag
\end{equation}

we get 

\begin{equation}\label{Prelec_CADI_CRDI_4}
\lambda_1 (t,T)=\delta>0 \notag
\end{equation}

Analogously, from

\begin{equation}\label{Prelec_CADI_CRDI_5}
\frac{\partial \log F(t,T)}{\partial T} = -r e^{-\delta t} T^\beta  \notag
\end{equation}

and

\begin{equation}\label{Prelec_CADI_CRDI_6}
\frac{\partial^2 \log F(t,T)}{\partial T^2} = -r e^{-\delta t} \beta T^{\beta-1} \notag
%\frac{\partial^2 \log F(t,T)}{\partial T^2} = - \frac{r}{\gamma} e^{-\delta t} e^{- \gamma T}
 %(- \gamma)(- \gamma) = -r e^{-\delta t \frac{1-e^{- \gamma T}}{\gamma} } \gamma^2
\end{equation}

we get

\begin{equation}\label{Prelec_CADI_CRDI_7}
\mu_2(t,T) = -\beta <1\notag
\end{equation}

Now we prove that if Axioms 1,2,6 and 7 hold and the Prelec's relative measures $\lambda_1 (t,T)=k_1$ and $\mu_2 (t,T)=h_2$ are constant with $k_1>0$ and $h_2>1$, then a discount function $F(t,T)$ belongs to the CADI-CRDI family. 
%Let us assume that there exists $k_1, h_2 \in \mathbb{R}^+$ such that
%\begin{equation}\label{Prelec_CADI_CRDI_15}
%\lambda_1(t,T)=k_1
%\end{equation}
%and
%\begin{equation}\label{Prelec_CADI_CRDI_16}
%\mu_2(t,T)=h_2
%\end{equation}
From (\ref{Prelec_24}) and (\ref{Prelec_CRDI_CRDI_21}) we get
\begin{equation}\label{Prelec_24_}
F(t,T)=e^{-\frac{a_1(T)}{k_1} e^{-k_1t}}, a_1(T)>0,
\end{equation}
and 
\begin{equation}\label{Prelec_CRDI_CRDI_21_}
F(t,T)=e^{-a_2(t)\frac{T^{-h_2+1}}{-h_2+1}}, a_2(t)>0\notag
\end{equation} 
and, consequently,
\begin{equation}\label{Prelec_CRDI_CRDI_15}
\frac{a_1(T)}{k_1} e^{-k_1t}=a_2(t)\frac{T^{-h_2+1}}{-h_2+1}
\end{equation}
which holds if and only if there exists $c>0$ such that
\begin{equation}\label{Prelec_CRDI_CRDI_27}
a_1(T)\frac{-h_2+1}{T^{-h_2+1}}=a_2(t)\frac{k_1}{e^{-k_1t}}=c
\end{equation} 
From (\ref{Prelec_CRDI_CRDI_27}) we get 
\begin{equation}\label{Prelec_CRDI_CRDI_28}
a_1(T)=c\frac{T^{-h_2+1}}{-h_2+1} \notag
\end{equation} 
%modifica rigo successivo
so that, substituting in (\ref{Prelec_24_}), 
\begin{equation}\label{Prelec_CRDI_CRDI_29} \notag
F(t,T)=e^{-ce^{-k_1t}\frac{T^{-h_2+1}}{-h_2+1}}
\end{equation} 
and, consequently, taking $\delta=k_1$, $\beta=-h_2$ and $r=\frac{c}{-h_2+1}$, we re-obtain

$$\\ F(t,T)=e^{-r e^{-\delta t} \frac{T^{\beta+1}}{\beta+1}}$$

\end{proof}

\subsection{Constant Relative Decreasing Impatience for $t$ and Constant Absolute Decreasing Impatience $T$}
%%%%%%%%%%%%%%%%%%%%%%%%%%%%%%%%%%%%%%%%%%%%

%

%%% CRDI - CADI

%%
% t
%%
\begin{theorem} If Axioms 1,2,6 and 7 hold, then a discount function $F(t,T)$ belongs to the CRDI-CADI family if and only if  the Prelec's relative measure with respect to $t$ and Prelec's absolute  measure with respect to $T$ are constant, that is $\mu_1 (t,T)=h_1$ and $\lambda_2(t,T)=k_2$, with $h_1>1$ and $k_2 \in \mathbb{R}$.
\bigskip
\end{theorem}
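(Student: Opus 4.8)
The plan is to follow the template of the three preceding theorems in this section, splitting the argument into the routine ``only if'' verification and the substantive ``if'' construction.

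For the ``only if'' direction I would start from the closed forms $F(t,T)=e^{-r t^{\alpha}\frac{1-e^{-\gamma T}}{\gamma}}$ (and its limiting case $F(t,T)=e^{-r t^{\alpha}T}$ when $\gamma=0$), write $\ln F(t,T)=-r\,t^{\alpha}\,\frac{1-e^{-\gamma T}}{\gamma}$, and compute the four partials $\partial_t\ln F$, $\partial_t^2\ln F$, $\partial_T\ln F$, $\partial_T^2\ln F$. In the ratio defining $\mu_1$ the powers of $t$ collapse to give $\mu_1(t,T)=1-\alpha$, which is $>1$ exactly because $\alpha<0$; in the ratio defining $\lambda_2$ the common factor $t^{\alpha}e^{-\gamma T}$ cancels, leaving $\lambda_2(t,T)=\gamma\in\mathbb{R}$ (and $\lambda_2=0$ in the $\gamma=0$ case). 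This half is a pure calculation.

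For the ``if'' direction I would reuse, rather than rederive, the ODE integrations already carried out in the proofs of the CADI--CADI and CRDI--CRDI theorems. Constancy of the relative measure $\mu_1(t,T)=h_1>1$ forces, via (\ref{Prelec_CRDI_CRDI_14bis}) and the $t$-monotonicity of Axiom 2, the representation $F(t,T)=b_1(T)e^{a_1(T)\frac{t^{-h_1+1}}{-h_1+1}}$ with $a_1(T)>0$; since $-h_1+1<0$ the exponent vanishes as $t\to\infty$, so Axiom 7 gives $b_1(T)\equiv 1$. Constancy of the absolute measure $\lambda_2(t,T)=k_2$ forces, via (\ref{Prelec_14bis})--(\ref{Prelec_14ter}) and the $T$-monotonicity of Axiom 1, either $F(t,T)=b_2(t)e^{\frac{a_2(t)}{k_2}e^{-k_2T}}$ (when $k_2\neq 0$) or $F(t,T)=b_2(t)e^{-a_2(t)T}$ (when $k_2=0$); in either case Axiom 6 evaluates $b_2(t)$, giving $F(t,T)=e^{-\frac{a_2(t)}{k_2}(1-e^{-k_2T})}$ or $F(t,T)=e^{-a_2(t)T}$ respectively.

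The crux is to equate the two representations and separate variables. Taking logarithms of the equality of the two forms yields, for $k_2\neq 0$, the identity $a_1(T)\frac{t^{-h_1+1}}{-h_1+1}=-\frac{a_2(t)}{k_2}(1-e^{-k_2T})$; since the left-hand side is a function of $T$ times a function of $t$ and likewise the right-hand side, dividing through forces $a_1(T)$ to be a constant multiple of $1-e^{-k_2T}$. Substituting back and renaming $\alpha=-h_1+1<0$, $\gamma=k_2$, and choosing $r$ so that $r/\gamma$ matches the resulting coefficient of $t^{\alpha}$, reproduces $F(t,T)=e^{-rt^{\alpha}\frac{1-e^{-\gamma T}}{\gamma}}$; the parallel computation with $e^{-a_2(t)T}$ gives $F(t,T)=e^{-rt^{\alpha}T}$ in the $\gamma=0$ case. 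The only real obstacle is sign bookkeeping: one must verify that $a_1(T)>0$ (forced by Axiom 2) pins down the sign of the separation constant so that, together with $\alpha<0$, the recovered $r$ is strictly positive and the exponent of $F$ stays nonpositive, i.e.\ $F\in[0,1]$. No analytic input beyond the two earlier proofs is required.
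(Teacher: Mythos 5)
Your proposal matches the paper's proof in both structure and substance: the ``only if'' half is the same direct computation of the four log-partials yielding $\mu_1=1-\alpha>1$ and $\lambda_2=\gamma$, and the ``if'' half reuses the one-variable ODE solutions from the earlier theorems, pins down $b_1(T)\equiv 1$ via Axiom 7 and $b_2(t)$ via Axiom 6, equates the two representations, separates variables, and renames constants, treating $k_2\neq 0$ and $k_2=0$ as separate cases exactly as the paper does. The sign bookkeeping you flag is resolved in the paper by the choice $r=c/(h_1-1)>0$ with $c>0$ and $h_1>1$, which is the verification you describe.
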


\begin{proof} Let us start by proving that the CRDI-CADI discount function

$$\\ F(t,T)=e^{-r t^\alpha \frac{1-e^{-\gamma T}}{\gamma}}$$

has constant Prelec's relative measure $\mu_1 (t,T)$. Indeed, since:

\begin{equation}\label{Prelecar_2}
\frac{\partial \log F(t,T)}{\partial t} =  - r \alpha t^{\alpha-1}   \frac{1-e^{-\gamma T}}{\gamma} \notag
\end{equation}

and 

\begin{equation}\label{Prelecar_3}
\frac{\partial^2 \log F(t,T)}{\partial t^2} =   - r \alpha (\alpha-1) t^{\alpha-2}   \frac{1-e^{-\gamma T}}{\gamma} \notag
\end{equation}

we get 

\begin{equation}\label{Prelecar_4}
\mu_1 (t,T)= 1-\alpha>0  \notag
\end{equation}

Analogously, from

\begin{equation}\label{Prelecar_5}
\frac{\partial \log F(t,T)}{\partial T} = - r t^\alpha  e^{-\gamma T}\notag
\end{equation}

and

\begin{equation}\label{Prelecr6}
\frac{\partial^2 \log F(t,T)}{\partial T^2} = \gamma r t^\alpha  e^{-\gamma T}\notag 
%\frac{\partial^2 \log F(t,T)}{\partial T^2} = - \frac{r}{\gamma} e^{-\delta t} e^{- \gamma T}
 %(- \gamma)(- \gamma) = -r e^{-\delta t \frac{1-e^{- \gamma T}}{\gamma} } \gamma^2
\end{equation}

we get

\begin{equation}\label{Prelecar7}
\lambda_2(t,T) = \gamma \notag
\end{equation}

Now we prove that if Axioms 1,2,6 and 7 hold and the Prelec's relative measures $\mu_1 (t,T)$ and $\lambda_2 (t,T)$ are constant, then a discount function $F(t,T)$ belongs to the CRDI-CADI family. 
Let us assume that there exists $h_1, k_2 \in \mathbb{R}$ such that
\begin{equation}\label{Prelecar_15}
\lambda_1(t,T)=h_1>1
\end{equation}
and
\begin{equation}\label{Prelecar_16}
\mu_2(t,T)=k_2
\end{equation}
First, let us suppose that $k_2 \neq 0$.
Let us remember that by Axioms 1,2,6 and 7 and hypotheses (\ref{Prelecar_15}) and (\ref{Prelecar_16}) we get (\ref{Prelec_CRDI_CRDI_24}) and (\ref{Prelec_21}) that is 
\begin{equation}\label{Prelec_CRDI_CRDI_24_}
F(t,T)=e^{a_1(T)\frac{t^{-h_1+1}}{-h_1+1}}, a_1(T)>0,\notag
\end{equation} 
and 
\begin{equation}\label{Prelec_21_}
F(t,T)=e^{\frac{-a_2(t)}{k_2}[1- e^{-k_2T}]}, a_2(t)>0,\notag
\end{equation} 
and, consequently, we obtain
\begin{equation}\label{Prelecr_26}
a_1(T)\frac{t^{-h_1+1}}{-h_1+1}=-\frac{a_2(t)}{k_2}[1- e^{-k_2T}] 
\end{equation} 
that can be satisfied if and only if there exists $c>0$ such that
\begin{equation}\label{Prelecr_27}
\frac{a_1(T)k_2}{1- e^{-k_2T}}=-a_2(t)\frac{-h_1+1}{t^{-h_1+1}}=c \notag
\end{equation} 
and thus 
\begin{equation}\label{Prelecr_28} \notag
a_1(T)=c\frac{1- e^{-k_2T}}{k_2}
\end{equation} 
%modifica rigo successivo
so that, substituting in (\ref{Prelecr_26}), we obtain
\begin{equation}\label{Prelecr_29}
F(t,T)=e^{c\frac{1- e^{-k_2T}}{k_2}\frac{t^{-h_1+1}}{-h_1+1}} \notag
\end{equation} 
and, taking $\alpha=-h_1+1$, $r=\frac{c}{h_1-1}$ and $\gamma=k_2$, we re-obtain

$$\\ F(t,T)=e^{-r t^\alpha \frac{1-e^{-\gamma T}}{\gamma}}.$$

Suppose now that $k_2 = 0$. By Axioms 1,2,6 and 7, hypotheses (\ref{Prelecar_15}) and $k_2=0$ we get (\ref{Prelec_CRDI_CRDI_24_}) and (\ref{Prelec_21}), that is 
%\begin{equation}\label{Prelec_CRDI_CRDI_24__}
%F(t,T)=e^{a_1(T)\frac{t^{-h_1+1}}{-h_1+1}}, a_1(T)>0,\notag
%\end{equation} 
%and 
\begin{equation}\label{Prelec_21bis_}
F(t,T)=e^{-a_2(t)T}
\end{equation} 
and, consequently, from (\ref{Prelec_CRDI_CRDI_24_}) and (\ref{Prelec_21bis_}) we obtain
\begin{equation}\label{Prelecr_26_}
a_1(T)\frac{t^{-h_1+1}}{-h_1+1}=- a_2(t) T 
\end{equation} 
that can be satisfied if and only if there exists $c>0$ such that
\begin{equation}\label{Prelecr_27_}
\frac{a_1(T)}{T}=-a_2(t)\frac{-h_1+1}{t^{-h_1+1}}=c \notag
\end{equation} 
and thus 
\begin{equation}\label{Prelecr_28_} \notag
a_1(T)=cT
\end{equation} 
%modifica rigo successivo
so that, substituting in (\ref{Prelecr_26_}), we obtain
\begin{equation}\label{Prelecr_29_}
F(t,T)=e^{c\frac{t^{-h_1+1}}{-h_1+1}T} \notag
\end{equation} 
and, taking $\alpha=-h_1+1$ and $r=\frac{c}{h_1-1}$, we re-obtain

$$\\ F(t,T)=e^{-r t^\alpha T}.$$

\end{proof}

%%%%%%%%%%%%%%%%%%%%%%%%%%%%%%%%%%%%%%%%%%%%%%%%%%%%%%%
\newpage
%%%%%%%%%%%%%%%%%%%%%%%%%%%%%%%%%%%%%%%%%%%%%%%%%%%%%%%

%%%%%%%%%%%%%%%%%%%%%%%%%%%%%%%%%%%%%%%%%%%%%%%%%

%%%%%%%%%%%%     EXPERIMENT      %%%%%%%%%%%%%%%%

%%%%%%%%%%%%%%%%%%%%%%%%%%%%%%%%%%%%%%%%%%%%%%%%%

\section{Experiment and discussion}\label{Experiment}

The next section presents an experiment that aims to demonstrate the feasibility of the various CADI and CRDI discount functions for intertemporal choice. 

\subsection{Method}

We conducted a time discounting experiment with undergraduate students from Department of Economics and Business at University of Catania.

\begin{itemize}
    \item[1)] First wave (May 2022). Cohort of students from the course of Financial Mathematics, 121 participants.

    \item[2)] Second wave (January 2023). Cohort of students from the course of General Mathematics, 161 participants.
\end{itemize}

Here we describe the experiment procedure considering the first wave. For the second wave, results are reported in the appendix.
%In May 2022, we conducted a Time Discounting experiment with undergraduate students from Department of Economics and Business at University of Catania.
One hundred twenty-one subjects took part in the experiment done in May 2022, 64 female e 57 male. The analyses were conducted on the results from 117 participants. Four results were dropped: those subjects exhibited inconsistency as they preferred the later reward to the earlier one even if the amount of the two rewards was the same.

We stimulated subjects by a non-monetary reward.
Students participating in the experiment were given the possibility to split the exam of Financial Mathematics in multiple parts. They may also take that exam anytime they want.

The experiment was run by computer. In Microsoft Forms we set the questionnaire and a link was sent to students that decide to take part in our experiment. 
Participants were informed that they would choose between pairs of amounts of money available at different times.

All payoffs in the stimuli were hypothetical. 
Studies have shown differences between real and hypothetical choice. Nevertheless most of them have concluded that the behavioral patterns are the same for cognitively quite simple tasks such as those in the experiment of this work.

Specifically, students were asked to make 43 choices between smaller amounts of money available soon and larger amounts of money available at specified times in the future. 

\begin{table}[H]
\centering
\caption{Questionnaire format}
\label{tab:Prova}
\begin{threeparttable}
\begin{tabular}{lcc}
\hline
\hline
\\[-1em]
 & Option 1 & Option 2 \\
\\[-1em]
\hline
\\[-1em]
You receive & \euro 100 & \euro 120 \\
\\[-1em]
When: & today  & 1 week from today\\
\\[-1em]
Your choice: & $\square$ & $\square$ \\
\\[-1em]
\hline
\hline
\end{tabular}
% \begin{tablenotes}
%            \item \emph{Notes:} As the sooner reward is available today, $t=0$ holds. \\
%            For the later payment the delay $T=7$, thus $t+T=7$.
% \end{tablenotes}
\end{threeparttable}
\end{table}

%% Versione alternativa alla tabella
%\begin{tcolorbox}
%\textbf{Questionnaire format}
%\emph{For each of the following monetary trade-off, select the one you would like to receive:}
%\begin{center}
%\medskip
%\begin{varwidth}{\textwidth}
%\begin{easylist}[checklist]
%&& 100 \euro \ today
%&& 120 \euro \ in 1 week
%\end{easylist}
%\end{varwidth}
%\end{center}
%\end{tcolorbox}

\medskip

The amount of money varies between 80 to 300 euros, the time $t$ varies between today and one year and the time delay $T$ varies between one week and one year\footnote{See the Appendix for the full-text questionnaire.}.

\subsection{Estimation of Discounting Parameters}

We use the approach adopted by Tanaka, Camerer and Nguyen \cite{tanaka2016risk} to test CADI-CADI, CRDI-CRDI, CADI-CRDI and CRDI-CADI models. As in their study, the questionnaire for the experiment is based on binary choices between monetary trade-offs. Such a format, allow us to consider a logistic function and, therefore, to run a non-linear regression in order to estimate the model parameters.

Let us denote by $x$ the immediate reward and by $y$ the delayed monetary outcome. The former is available in $t$ days, whereas the latter in $t+T$ days\footnote{For example, in Table \ref{tab:Prova} the sooner reward is available today, so $t=0$ holds. The later payment is available in $t+T=7$, as the delay is $T=7$.
}. Let $P \big( x > (y,t,T) \big)$ be the probability of choosing $x$ at $t$ time over $y$ at $t+T$, we describe this relation as\footnote{Inside the square brackets, Camerer et. al. considered also a noise parameter $\mu$, having $[- \mu \big(x-y \cdot F(t) \big) ]$.}:

\begin{eqnarray}\label{logistic_1}
P \big( x > (y,t,T) \big) = \frac{1}{1+exp \Big[- \big(x-y \cdot F(t,T) \big) \Big]}
\end{eqnarray}

Thus, we estimate for each type of discount function its peculiar parameters. In place of the discount function $F(t,T)$ we consider, from time to time, CADI-CADI, CRDI-CRDI, CADI-CRDI and finally CRDI-CADI discount function.
Table \ref{tab:Variables} shows estimation results comparing each discount function.
Hyperbolic and exponential discounting parameters were estimated as well: we adopted, respectively, the general hyperbola formulated by Loewestein and Prelec and the constant discounting proposed by Samuelson. By a nonlinear least-squares regression procedure, we fitted the logistic function (\ref{logistic_1}).

\begin{table}[H]
\centering
\caption{Comparison of Discounting models: first wave (117 subject)}
\label{tab:Variables}
\begin{threeparttable}
\begin{tabular}{lcccccc}
\hline
\hline
\\[-1em]
 & \small{CADI-CADI} & \small{CRDI-CRDI} & \small{CADI-CRDI} & \small{CRDI-CADI} & Hyperbolic & Exponential\\
\\[-1em]
\hline
\\[-1em]
$r$ & 0.0076*** & 0.0320*** & 0.0122*** & 0.0200***\\
& \smaller{(0.0001)} \ \ \ \ \ & \smaller{(0.0005)} \ \ \ \ \ & \smaller{(0.0002)} \ \ \ \ \  & \smaller{(0.0004)} \ \ \ \ \ \\
\\[-1em]
$\delta$ $\small{(\times 10^{-1})}$ & 0.0017*** & & 0.0017***\\
& \smaller{(0.00003)} \ \ \ \ \ & & \smaller{(0.00003)} \ \ \ \ \ \\
\\[-1em]
$\gamma$ & 0.0124*** & & & 0.0548*** \\
& \smaller{(0.0003)}  \ \ \ \ \ & & & \smaller{(0.0011)}  \ \ \ \ \ \\
\\[-1em]
$\alpha$ & & -0.1344***    & & 0.0635***    & 0.0167***\\
& & \smaller{(0.0037)}   \ \ \ \ \   & & \smaller{(0.0010)}    \ \ \ \ \    & \smaller{(0.0003)} \ \ \ \ \ \\
\\[-1em]
$\beta$ & & -0.4446***  & -0.2966*** & & 0.0255*** & 0.0587***\\
& & \smaller{(0.0051)}   \ \ \ \ \  &  \smaller{(0.0051)}  \ \ \ \ \  & & \smaller{(0.0004)} \ \ \ \ \  & \smaller{(0.0002)}  \ \ \ \ \  \\
& \\
\small{Observations} & 5,031 & 5,031 & 5,031 & 5,031 & 5,031 & 5,031 \\
\small{Adjusted $R^2$} & 0.6004 & 0.5383 & 0.6005 & 0.5092 & 0.5441 & 0.3058\\
\\[-1em]
\hline
\hline
\end{tabular}
 \begin{tablenotes}
            \item \emph{Notes:} Robust standard errors are in parentheses. CRDI-CRDI and CRDI-CADI discount function are adjusted with $\epsilon = 0.001$ for $t=0$.\\
            *** Significant at the 1 percent level.
 \end{tablenotes}
\end{threeparttable}
\end{table}

For the CADI-CADI model, the estimated values of $(r,\delta,\gamma)$ are $(0.0076, 0.00017,0.0124)$. This implies individuals should trade 95.03 euros today for 100 euros in a week, 66.2 euros today for 100 euros in three months, 57.86 euros today for 100 euros in six months and 54.5 euros today for 100 euros in one year. As one can see, the discount factor dramatically drops for periods close to the present. However, for moments perceived as far from today the discount factor flattens out: indeed, the latter is almost the same for six months and for one year, despite being these two periods more distant from each other than from today.

Switching to the CRDI-CRDI model, parameters $(r,\alpha,\beta)$ have values $(0.0320, -0.1344, -0.4446)$. Namely, a subject should trade 84.38 tomorrow for 100 euros in one week, 49.59 euros tomorrow for 100 euros in three months and 21.73 euros today for 100 euros in one year. In such a case, the discount structure declines more quickly for distant periods if compared to the CADI-CADI discount function.

Let us notice that the full-CRDI and the CRDI-CADI discount functions are defined for $t>0$. Since some of the binary trade-off involves the present time, i.e. $t=0$ (such in the case of the trade-off between today and one week), as regards those models and for $t=0$, we adjusted it by adding to the delay $t$ an amount $\epsilon = 0.001$, small but positive.

The full-CADI and the CADI-CRDI turn out to be the models with best performances\footnote{Considering the estimates in the second wave, from the cohort of students from General Mathematics, the models with the best performance are, again, the CADI-CRDI and the CADI-CADI discount functions.}. Either way, parameter $\delta$, that is a decay rate of the initial discount rate with respect to $t$, is significant and close to zero.
Results show parameter $r$ to be significant in each model. 
The parameter $\alpha$, that captures the degree of nonlinear scaling future time perception with respect to t, reveals to be significant in both models, but in fact it is negative in the full-CRDI model but positive in the CRDI-CADI model. Time delay $T$ parameters for CADI and CRDI, respectively $\gamma$ and $\beta$ are significant. The latter is negative in all the cases.

\begin{table}[H]
\centering
\caption{Comparison of Discounting models: second wave (158 subject)}
\label{tab:Variables_second_wave}
\begin{threeparttable}
\begin{tabular}{lcccccc}
\hline
\hline
\\[-1em]
 & \small{CADI-CADI} & \small{CRDI-CRDI} & \small{CADI-CRDI} & \small{CRDI-CADI} & Hyperbolic & Exponential \\
\\[-1em]
\hline
\\[-1em]
$r$ & 0.0152*** & 0.0323*** & 0.0122*** & 0.1097***\\
& \smaller{(0.0001)} \ \ \ \ \ & \smaller{(0.0005)} \ \ \ \ \ & \smaller{(0.0002)} \ \ \ \ \  & \smaller{(0.0051)} \ \ \ \ \ \\
\\[-1em]
$\delta$ & 0.0018*** & & 0.00018***\\
& \smaller{(0.0000)} \ \ \ \ \ & & \smaller{(0.00003)} \ \ \ \ \ \\
\\[-1em]
$\gamma$ & 0.0190*** & & & 0.3164*** \\
& \smaller{(0.0002)}  \ \ \ \ \ & & & \smaller{(0.0147)}  \ \ \ \ \ \\
\\[-1em]
$\alpha$ & & -0.1381***  & & 0.0879*** & 0.0089***\\
& & \smaller{(0.0035)}   \ \ \ \ \   & & \smaller{(0.0005)}    \ \ \ \ \    & \smaller{(0.0001)} \ \ \ \ \ \\
\\[-1em]
$\beta$ & & -0.4365***  & -0.2936*** & & 0.0262*** & 0.0590***\\
& & \smaller{(0.0046)}   \ \ \ \ \  &  \smaller{(0.0045)}  \ \ \ \ \  & & \smaller{(0.0003)} \ \ \ \ \  & \smaller{(0.0002)}  \ \ \ \ \  \\
& \\
\small{Observations} & 6,794 & 6,794 & 6,794 & 6,794 & 6,794 & 6,794 \\
\small{Adjusted $R^2$} & 0.6026 & 0.5829 & 0.6205 & 0.4216 & 0.5849 & 0.3369\\
\\[-1em]
\hline
\hline
\end{tabular}
 \begin{tablenotes}
            \item \emph{Notes:} CRDI-CRDI and CRDI-CADI adjusted with $\epsilon =0.001$ for $t=0$. Robust standard errors are in parentheses. \\
            *** Significant at the 1 percent level.%\\
%            ** Significant at the 5 percent level.\\
%            * Significant at the 10 percent level.
\end{tablenotes}
\end{threeparttable}
\end{table}

Considering both the first and the second wave, full-CADI and CADI-CRDI models overrule longstanding discounting models, as the Loewestein and Prelec's two-parameter hyperbolic model and the classical exponential model of Samuelson. Thus, we focus attention on full-CADI and CADI-CRDI models. 
The two above-mentioned models show high level of $R^2$, improving it in comparison with the hyperbolic and the exponential model.

\subsection{Personality traits and Time Discounting}

We then estimate a logistic function in order to analyze the relationship, if any, between how people discount time and their personality traits. 
There is a part of both economic and psychological literature that has investigated the connection between time preference and individual traits related to personality. 
Research has focused particularly on two economic traits, risk aversion and time preference, analyzing how relate with cognitive factors, such as numeracy and IQ, and psychological traits, measurable with a variety of indicators.

Results from evidence appear to be ambiguous so far. On the one hand, Dohmen et al. \cite{dohmen2010risk} for example found no significant relation between personality traits, of the Big Five type, and measures of time preference. On the other, Becker et al. \cite{becker2012relationship} finds a low association between economic preferences and psychological measures of personality. Furthermore, Borghans et al. \citep{borghans2008economics} claimed that time preference is related to conscientiousness, one component of the so-called “Big Five” personality taxonomy, mainly capturing self-discipline attitude and diligence.

In order to elicit individual personality traits, we elicited for each participant the Ten-Item Personality Inventory, \emph{TIPI}, developed by Gosling, Rentfrow, and Swann \cite{gosling2003very}. Each item consists of two descriptors that participants must rate on a 7-point Likert scale ranging from 1 (disagree strongly) to 7 (agree strongly).

The TIPI is a very short measure of the Big Five (or Five-Factor Model) dimensions. It assumes that individual differences in adult personality characteristics can be organized in terms of five broad trait domains: \emph{Extraversion (E)}, \emph{Agreeableness (A)}, \emph{Conscientiousness (C)}, \emph{Neuroticism (N)}, namely the level of emotional stability, and \emph{Openness to Experience (O)}. 
We adopted the italian revised version of the TIPI that has been proposed by Chiorri et al. \cite{chiorri2015psychometric}.

\begin{table}[H]
\centering
\caption{The Ten-Item Personality Inventory developed by Gosling, Rentfrow, Swann from data of the first wave.}
\label{tab:TIPI}
\begin{threeparttable}
\begin{tabular}{lllcc}
\hline
\hline
\\[-1em]
No. & Item & Scale & Mean & SD \\
%\cmidrule(lr){1-1}\cmidrule(lr){2-2}\cmidrule(lr){3-3}\cmidrule(lr){4-4}\cmidrule(lr){5-5}
\hline
\\[-1em]
1 & Extraverted, enthusiastic & E & 3.46 & 1.71 \\
2 & Critical, quarrelsome & A (r) & 2.52 & 1.50 \\
3 & Dependable, self-disciplined & C & 5.65 & 1.28 \\
4 & Anxious, easily upset & N & 4.75 & 1.77 \\
5 & Open to new experiences, complex & O & 5.15 & 1.33 \\
6 & Reserved, quiet & E (r) & 4.13 & 1.87 \\
7 & Sympathetic, warm & A & 5.15 & 1.57 \\
8 & Disorganized, careless & C (r) & 2.92 & 1.69 \\
9 & Calm, emotionally stable & N (r) & 4.40 & 1.70 \\
10 & Conventional, uncreative & O (r) & 4.18 & 1.42 \\
\\[-1em]
\hline
\hline
\end{tabular}
 \begin{tablenotes}
            \item \emph{Notes:} SD = Standard deviation, $(r)$ = reverse item.  C = Conscientiousness, A = Agreeableness, E = Extraversion, O = Openness, N = Neuroticism.
 \end{tablenotes}
\end{threeparttable}
\end{table}

The Big-Five factor representation was originally discovered by Tupes and Christal \cite{tupes1961recurrent} and, later, developed by Goldberg \cite{goldberg1992development}.
An elegant definition of the Big-Five domains is provided by Benet-Martinez and P. John \cite{benet1998cinco}:
\begin{quote}
Extraversion summarizes traits related to activity and energy, dominance, sociability, expressiveness, and positive emotions.
Agreeableness contrasts a prosocial orientation toward others with antagonism and includes traits such as altruism, tendermindedness, trust, and modesty. Conscientiousness describes socially prescribed impulse control that facilitates task- and goal-directed behavior. Neuroticism contrasts emotional stability with a broad range of negative affects, including anxiety, sadness, irritability, and nervous tension. Openness describes the breadth, depth, and complexity of an individual's mental and experiential life.
\end{quote}

Besides the Big Five traits, subjects were asked additional questions considering some of the behavioral questions of Rohde \cite{rohde2019measuring}.
Subjects specified their gender and whether they smoke. We get two dummy variables, equal to 1, respectively, if male and if smoker.

Next, we asked the number of hours per week devoted to physical activity and the average consumption of alcohol per week in order to elicit, respectively, the variables \emph{Sportweek} and \emph{Alcoholweek}\footnote{Alcohol week is given by the average number of glasses of wine and cans of beer consumed in a week.}.
%\bigskip

\begin{table}[H]
\centering
\caption{Summary Statistics of Demographic and Behavioral Variables from data of the first wave}
\label{tab:var_regression}
\begin{threeparttable}
\begin{tabular}{p{4cm} c p{3cm} c p{3cm}}
\hline
\hline
\\[-1em]
Variable & OLS regressor & Mean \\
\hline
\\[-1em]
Gender & \checkmark & 47.01\% male\\
Age & & 21.4 \\
Sportweek & \checkmark & 3.48 \\
Alcoholweek & \checkmark & 1.6 \\
Smoker & \checkmark & 17.95\% \\
Monthly savings & & 29.73\% \\ 
Conscientiousness & \checkmark & 3.67 \\
Agreeableness & \checkmark & 5.32 \\
Extroversion & \checkmark & 5.38 \\
Openness & \checkmark & 3.83 \\
Neuroticism & \checkmark & 4.48 \\
\\[-1em]
\hline
\hline
\end{tabular}
% \begin{tablenotes}
%            \item \emph{Notes:} SD = Standard deviation, $(r)$ = reverse item.  C = Conscientiousness, A = Agreeableness, E = Extraversion, O = Openness, N = Neuroticism.
% \end{tablenotes}
\end{threeparttable}
\end{table}

In the case of the CADI-CADI model, we estimated the following logistic function:

\begin{eqnarray}\label{logistic_2}
P \big( x > (y,t,T) \big) = \frac{1}{1+exp \Big[- \big(x-y \cdot (e^{-r e^{-\delta t} (1-e^{-\gamma T}) / \gamma}) \big) \Big]}
\end{eqnarray}

where $r = r_0 + \sum r_i X_i$, $ \delta = \delta_0 + \sum \delta_i X_i$, $ \gamma = \gamma_0 + \sum \gamma_i X_i$, and the variables and their coefficients of the personality traits are given by $X_i, r_i, \delta_i$ and $\gamma_i$.

In Table \ref{tab:Reg_personality} are shown the results from regressing estimates of the CADI-CADI discount function model, allowing r, $\delta$ and $\gamma$ to depend on \emph{behavioral} variables.

Results in table \ref{tab:Reg_personality} show that, among the Big Five domains,  conscientiousness, agreeableness and openness are \emph{positively} related with patience (low $r$, initial discount rate). Extroverted people tend to exhibit more patience. They indeed have a slightly negative initial discount rate and, as times goes by, they are more prone to wait.
Apparently, neuroticism  is not related neither with impatience nor with the decay rate of time interval.
Men exhibit a little more patience at the beginning but, as time goes by, turn out to be more impatient.

\begin{table}[H]
\centering
\caption{Correlations with Time Discounting parameters (OLS)}
\label{tab:Reg_personality}
\begin{threeparttable}
\begin{tabular}{lccc}
\hline
\hline
\\[-1em]
& & CADI-CADI & \\
\cmidrule(lr){2-2}\cmidrule(lr){3-3}\cmidrule(lr){4-4}
\\[-1em]
 & $r$ & $\delta$ & $\gamma$ \\
\\[-1em]
\hline
\\[-1em]
\small{Constant $(r_0,\delta_0, \gamma_0$)} & 0.011*** & -0.164*** & 0.044***\\
& \smaller{(0.002)} & \smaller{(0.011)} & \smaller{(0.010)} \\
\\[-1em]
Gender & -0.006*** & -0.757*** & 0.273***\\
& \smaller{(0.001)} & \smaller{(0.219)} & \smaller{(0.043)}\\
\\[-1em]
Sportweek & 0.001*** & 0.086*** &  0.075*** \\
& \smaller{(0.00)} & \smaller{(0.038)} & \smaller{(0.009)} \\
\\[-1em]
Alcoholweek & -0.003*** & 1.31*** &  -0.144*** \\
& \smaller{(0.00)} & \smaller{(0.034)} &  \smaller{(0.009)}\\
\\[-1em]
Smoker & 0.050*** & -6.962***  & 2.685*** \\
& \smaller{(0.008)} & \smaller{(0.540)} &  \smaller{(0.416)}\\
\\[-1em]
Conscientiousness & 0.003*** & 0.554*** & 0.193*** \\
& \smaller{(0.00)} & \smaller{(0.116)} &  \smaller{(0.021)}\\
\\[-1em]
Agreeableness & 0.003*** & -0.381*** & 0.140*** \\
& \smaller{(0.00)} & \smaller{(0.116)} &  \smaller{(0.018)}\\
\\[-1em]
Extroversion & -0.006*** & 0.727*** & -0.341*** \\
& \smaller{(0.00)} & \smaller{(0.071)} &  \smaller{(0.031)}\\
\\[-1em]
Openness & 0.001*** & 0.037 & 0.112*** \\
& \smaller{(0.00)} & \smaller{(0.116)} &  \smaller{(0.017)}\\
\\[-1em]
Neuroticism & 0.000 & 0.086 & 0.019 \\
& \smaller{(0.00)} & \smaller{(0.079)} &  \smaller{(0.017)}\\
& \\
Observations & 5,031 & 5,031 & 5,031\\
Adjusted $R^2$ & 0.41 & 0.41 & 0.41\\
\\[-1em]
\hline
\hline
\end{tabular}
 \begin{tablenotes}
            \item \emph{Notes:} Robust standard errors are in parentheses.\\
            *** Significant at the 1 percent level.
 \end{tablenotes}
\end{threeparttable}
\end{table}

Table \ref{tab:Reg_personality} shows that being a smoker matters is related with impatience.
The economic theory of smoking suggests that individuals with a higher discount rate tend to put less weight on future disutility of addiction relative to present satisfaction from smoking, and therefore tend to smoke more \citep{kang2014time}.
According to the CADI-CADI model, those who appreciate cigarettes have a significant starting value, $r = 5\%$, of the discount rate, higher than non-smoking. Besides, this framework predict parameter delta $\delta = -6.96$, a large value but negative, revealing the smoker to be more and more heavily \emph{impatient} as time goes by. 
However, our framework does not say something about a potential causality relation between high discount rate and being a smoker.

%\newpage

%%%%%%%%%%%%%%%%%%%%%%%%
\section{Conclusions}\label{conclusion}

We presented a family of discount function able to accommodate both time delay and time interval. From a theoretical perspective, the discounting setting here presented extends the formulation of CADI and CRDI discount functions proposed by Bleichrodt Rohde and Wakker, making discounting a function of two variables. Indeed, while the framework of Bleichrodt et al. accounts only for the delay, here we consider also time interval.

We show that for each of these two time components a discount function can exhibit constant absolute decreasing impatience (CADI) and constant relative decreasing impatience (CRDI). Therefore, we presented here four new two-variable discount function: CADI-CADI, CRDI-CRDI, CADI-CRDI and CRDI-CADI. In order to characterize such a discounting framework, we introduced a solid axiomatization setup, proving that a discount function belongs to a specific class of the CADI/CRDI family if and only if those axioms that describe such a class hold.

Furthermore, we derive for the class of the two-variable CADI/CRDI discounting the Prelec's measure, able to identify the dynamic inconsistencies of individual in intertemporal choice.

The family of CADI/CRDI discount functions is here tested by a field experiment we conducted on undergraduate students. Through a questionnaire based on a list of binary monetary trade-offs, we estimated the time preference parameters for each CADI/CRDI family of discounting models.
Then, we run a nonlinear regression by a logistic function in order to analyze the correlation between how people discount time and their personality. In order to elicit individual personality we elicited for each participant involved in the experiment the TIPI index. Our result suggest a strong connection between some specific personality traits and time preference.

%%%%%%%%%%%%%%%%%%%%%%%%
%%%%%%%%%%%%%%%%%%%%%%%%%%%%%
\newpage
%%%%%%%%%%%%%%%%%%%%%%%%%%%%%
\section*{Acknowledgments}%%%
%%%%%%%%%%%%%%%%%%%%%%%%%%%%%
\noindent 

%%%%%%%%%%%%%%%%%%%%%%%%
%\section*{References}%%%
%%%%%%%%%%%%%%%%%%%%%%%%
\bibliographystyle{plain}
\bibliography{Full_bibliography}

\begin{thebibliography}{10}

\bibitem{abdellaoui2010intertemporal}
Mohammed Abdellaoui, Arthur~E Attema, and Han Bleichrodt.
\newblock Intertemporal tradeoffs for gains and losses: An experimental
  measurement of discounted utility.
\newblock {\em The Economic Journal}, 120(545):845--866, 2010.

\bibitem{ainslie1992hyperbolic}
George Ainslie and Nick Haslam.
\newblock Hyperbolic discounting.
\newblock {\em In G. Loewestein and J. Elster Choice over time}, 1992.

\bibitem{becker2012relationship}
Anke Becker, Thomas Deckers, Thomas~J Dohmen, Armin Falk, and Fabian Kosse.
\newblock The relationship between economic preferences and psychological
  personality measures.
\newblock {\em Available at SSRN 2199369}, 2012.

\bibitem{benet1998cinco}
Ver{\'o}nica Benet-Mart{\'\i}nez and Oliver~P John.
\newblock Los cinco grandes across cultures and ethnic groups:
  Multitrait-multimethod analyses of the big five in spanish and english.
\newblock {\em Journal of personality and social psychology}, 75(3):729, 1998.

\bibitem{bleichrodt2009non}
Han Bleichrodt, Kirsten~IM Rohde, and Peter~P Wakker.
\newblock Non-hyperbolic time inconsistency.
\newblock {\em Games and Economic Behavior}, 66(1):27--38, 2009.

\bibitem{borghans2008economics}
Lex Borghans, Angela~Lee Duckworth, James~J Heckman, and Bas Ter~Weel.
\newblock The economics and psychology of personality traits.
\newblock {\em Journal of human Resources}, 43(4):972--1059, 2008.

\bibitem{chiorri2015psychometric}
Carlo Chiorri, Fabrizio Bracco, Tommaso Piccinno, Cinzia Modafferi, and Valeria
  Battini.
\newblock Psychometric properties of a revised version of the ten item
  personality inventory.
\newblock {\em European Journal of Psychological Assessment}, 31(2):109, 2015.

\bibitem{cohen2020measuring}
Jonathan Cohen, Keith~Marzilli Ericson, David Laibson, and John~Myles White.
\newblock Measuring time preferences.
\newblock {\em Journal of Economic Literature}, 58(2):299--347, 2020.

\bibitem{dohmen2010risk}
Thomas Dohmen, Armin Falk, David Huffman, and Uwe Sunde.
\newblock Are risk aversion and impatience related to cognitive ability?
\newblock {\em American Economic Review}, 100(3):1238--60, 2010.

\bibitem{doyle2013survey}
John~R Doyle.
\newblock Survey of time preference, delay discounting models.
\newblock {\em Judgment and Decision Making}, 8(2):116--135, 2013.

\bibitem{frederick2002time}
Shane Frederick, George Loewenstein, and Ted O'Donoghue.
\newblock Time discounting and time preference: A critical review.
\newblock {\em Journal of economic literature}, 40(2):351--401, 2002.

\bibitem{goldberg1992development}
Lewis~R Goldberg.
\newblock The development of markers for the big-five factor structure.
\newblock {\em Psychological assessment}, 4(1):26, 1992.

\bibitem{gosling2003very}
Samuel~D Gosling, Peter~J Rentfrow, and William~B Swann~Jr.
\newblock A very brief measure of the big-five personality domains.
\newblock {\em Journal of Research in personality}, 37(6):504--528, 2003.

\bibitem{kang2014time}
Myong-Il Kang and Shinsuke Ikeda.
\newblock Time discounting and smoking behavior: evidence from a panel survey.
\newblock {\em Health economics}, 23(12):1443--1464, 2014.

\bibitem{kim2013can}
B~Kyu Kim and Gal Zauberman.
\newblock Can victoria's secret change the future? a subjective time perception
  account of sexual-cue effects on impatience.
\newblock {\em Journal of Experimental Psychology: General}, 142(2):328, 2013.

\bibitem{laibson1997golden}
David Laibson.
\newblock Golden eggs and hyperbolic discounting.
\newblock {\em The Quarterly Journal of Economics}, 112(2):443--478, 1997.

\bibitem{loewenstein1992anomalies}
George Loewenstein and Drazen Prelec.
\newblock Anomalies in intertemporal choice: Evidence and an interpretation.
\newblock {\em The Quarterly Journal of Economics}, 107(2):573--597, 1992.

\bibitem{miyamoto1983axiomatization}
John~M Miyamoto.
\newblock An axiomatization of the ratio/difference representation.
\newblock {\em Journal of mathematical psychology}, 27(4):439--455, 1983.

\bibitem{ok2007theory}
Efe~A Ok and Yusufcan Masatlioglu.
\newblock A theory of (relative) discounting.
\newblock {\em Journal of Economic Theory}, 137(1):214--245, 2007.

\bibitem{phelps1968second}
Edmund~S Phelps and Robert~A Pollak.
\newblock On second-best national saving and game-equilibrium growth.
\newblock {\em The Review of Economic Studies}, 35(2):185--199, 1968.

\bibitem{pratt1978risk}
John~W Pratt.
\newblock Risk aversion in the small and in the large.
\newblock In {\em Uncertainty in economics}, pages 59--79. Elsevier, 1978.

\bibitem{prelec2004decreasing}
Drazen Prelec.
\newblock Decreasing impatience: a criterion for non-stationary time preference
  and “hyperbolic” discounting.
\newblock {\em Scandinavian Journal of Economics}, 106(3):511--532, 2004.

\bibitem{read2001time}
Daniel Read.
\newblock Is time-discounting hyperbolic or subadditive?
\newblock {\em Journal of risk and uncertainty}, 23(1):5--32, 2001.

\bibitem{rohde2019measuring}
Kirsten~IM Rohde.
\newblock Measuring decreasing and increasing impatience.
\newblock {\em Management Science}, 65(4):1700--1716, 2019.

\bibitem{samuelson1937note}
Paul~A Samuelson.
\newblock A note on measurement of utility.
\newblock {\em The review of economic studies}, 4(2):155--161, 1937.

\bibitem{scholten2006discounting}
Marc Scholten and Daniel Read.
\newblock Discounting by intervals: A generalized model of intertemporal
  choice.
\newblock {\em Management science}, 52(9):1424--1436, 2006.

\bibitem{tanaka2016risk}
Tomomi Tanaka, Colin~F Camerer, and Quang Nguyen.
\newblock Risk and time preferences: Linking experimental and household survey
  data from vietnam.
\newblock In {\em Behavioral economics of preferences, choices, and happiness},
  pages 3--25. Springer, 2016.

\bibitem{tupes1961recurrent}
E.C. Tupes and R.E. Christal.
\newblock Recurrent personality factors based on trait ratings.
\newblock {\em Lackland Air Force Base, TX: Aeronautical System Division,
  Personnel Laboratory}, 1961.

\bibitem{zauberman2009discounting}
Gal Zauberman, B~Kyu Kim, Selin~A Malkoc, and James~R Bettman.
\newblock Discounting time and time discounting: Subjective time perception and
  intertemporal preferences.
\newblock {\em Journal of Marketing Research}, 46(4):543--556, 2009.

\end{thebibliography}

%%%%%%%%%%%%%%%%%%%%%%%%%%%%%
\newpage
%%%%%%%%%%%%%%%%%%%%%%%%%%%%%
\section*{Appendix}%%%

\begin{table}[H]
\centering
\caption{Spearman's rank correlations between Big Five scales.}
\label{tab:Spearman_1}
\begin{threeparttable}
\begin{tabular}{r | rcccc}
\hline
\hline
\\[-1em]
 & Extroversion & Conscientiousness & Agreeableness & Neuroticism & Openness \\
%\cmidrule(lr){1-1}\cmidrule(lr){2-2}\cmidrule(lr){3-3}\cmidrule(lr){4-4}\cmidrule(lr){5-5}
\hline
\\[-1em]
Extroversion & -  \ \ \ \ \ \ \ & & & \\
Conscientiousness & -0.058 \ \ \ \ \ & - & & \\
Agreeableness & -0.239*** & 0.144  & - &  \\
Neuroticism & 0.091 \ \ \ \ \ & -0.035 & 0.031 & - \\
Openness & 0.271***  & 0.050 & 0.145 & 0.085 & - \\
\hline
\hline
\end{tabular}
 \begin{tablenotes}
            \item \emph{Notes:} *** Significant at the 1 percent level.
 \end{tablenotes}
\end{threeparttable}
\end{table}

%%%%%%%% TABLES WITH REGRESSIONS FOR EACH MODEL

%%%%%%%%%%%%%%%%%%%%%%%%%%%%%%%%%%%%%%%%%%%%%%%%%
%%%%%%%%%%%%%%%%%%%%%%%%%%%%%%%%%%%%%%%%%%%%%%%%%
%%%%%%%%             CADI CRDI            %%%%%%%
%%%%%%%%%%%%%%%%%%%%%%%%%%%%%%%%%%%%%%%%%%%%%%%%%
%%%%%%%%%%%%%%%%%%%%%%%%%%%%%%%%%%%%%%%%%%%%%%%%%

\begin{table}[H]
\centering
\caption{Correlations with Time Discounting parameters (OLS)}
\label{tab:Reg_personality_CADI-CRDI}
\begin{threeparttable}
\begin{tabular}{lccc}
\hline
\hline
\\[-1em]
& & CADI-CRDI & \\
\cmidrule(lr){2-2}\cmidrule(lr){3-3}\cmidrule(lr){4-4}
\\[-1em]
 & $r$ & $\delta$ & $\beta$ \\
\\[-1em]
\hline
\\[-1em]
\small{Constant $(r_0,\delta_0, \beta_0$)} & 0.023*** & 0.031*** & -0.050***\\
& \smaller{(0.003)} & \smaller{(0.002)} & \smaller{(0.056)} \\
\\[-1em]
Gender & -0.005*** & 0.065*** & 10.371***\\
& \smaller{(0.001)} & \smaller{(0.019)} & \smaller{(1.174)}\\
\\[-1em]
Sportweek & 0.0001*** & 0.0771*** &  -0.016 \\
& \smaller{(0.000)} & \smaller{(0.008)} & \smaller{(0.229)} \\
\\[-1em]
Alcoholweek & -0.003*** & -0.117*** &  1.463*** \\
& \smaller{(0.000)} & \smaller{(0.013)} &  \smaller{(0.311)}\\
\\[-1em]
Smoker & 0.021*** & 30.981***  & 308.188*** \\
& \smaller{(0.007)} & \smaller{(3.684)} &  \smaller{(39.334)}\\
\\[-1em]
Conscientiousness & -0.0002 & -0.293*** & -2.859*** \\
& \smaller{(0.003)} & \smaller{(0.011)} &  \smaller{(0.418)}\\
\\[-1em]
Agreeableness & 0.0001 & -0.392*** & -4.096*** \\
& \smaller{(0.000)} & \smaller{(0.026)} &  \smaller{(0.018)}\\
\\[-1em]
Extroversion & 0.001*** & 0.275*** & 0.766 \\
& \smaller{(0.000)} & \smaller{(0.030)} &  \smaller{(0.581)}\\
\\[-1em]
Openness & 0.002*** & -0.0004 & -4.168*** \\
& \smaller{(0.000)} & \smaller{(0.013)} &  \smaller{(0.578)}\\
\\[-1em]
Neuroticism & -0.001*** & 0.439*** & 5.009*** \\
& \smaller{(0.000)} & \smaller{(0.015)} &  \smaller{(0.482)}\\
& \\
Observations & 5,031 & 5,031 & 5,031\\
Adjusted $R^2$ & 0.48 & 0.48 & 0.48\\
\\[-1em]
\hline
\hline
\end{tabular}
 \begin{tablenotes}
            \item \emph{Notes:} Robust standard errors are in parentheses.\\
            *** Significant at the 1 percent level.
 \end{tablenotes}
\end{threeparttable}
\end{table}

\newpage
%%%%%%%%%%%%%%%%%%%%%%%%%%%%%%%%%%%%%%%%%%%%%%%%%
%%%%%%%%%%%%%%%%%%%%%%%%%%%%%%%%%%%%%%%%%%%%%%%%%
%%%%%%%%           CRDI CRDI             %%%%%%%%
%%%%%%%%%%%%%%%%%%%%%%%%%%%%%%%%%%%%%%%%%%%%%%%%%
%%%%%%%%%%%%%%%%%%%%%%%%%%%%%%%%%%%%%%%%%%%%%%%%%

\begin{table}[H]
\centering
\caption{Correlations with Time Discounting parameters (OLS)}
\label{tab:Reg_personality_CRDI-CRDI}
\begin{threeparttable}
\begin{tabular}{lccc}
\hline
\hline
\\[-1em]
& & CRDI-CRDI & \\
\cmidrule(lr){2-2}\cmidrule(lr){3-3}\cmidrule(lr){4-4}
\\[-1em]
 & $r$ & $\alpha$ & $\beta$ \\
\\[-1em]
\hline
\\[-1em]
\small{Constant $(r_0,\alpha_0, \beta_0$)} & 0.018*** & 0.002 & -0.534***\\
& \smaller{(0.002)} & \smaller{(0.093)} & \smaller{(0.054)} \\
\\[-1em]
Gender & 0.089** & 0.423*** & -0.037***\\
& \smaller{(0.038)} & \smaller{(0.142)} & \smaller{(0.013)}\\
\\[-1em]
Sportweek &  -0.028*** & -0.050* &  0.011*** \\
& \smaller{(0.007)} & \smaller{(0.027)} & \smaller{(0.002)} \\
\\[-1em]
Alcoholweek & 0.026*** & 0.149*** &  -0.019*** \\
& \smaller{(0.009)} & \smaller{(0.048)} &  \smaller{(0.003)}\\
\\[-1em]
Smoker & -0.352*** & -0.124  & 0.205*** \\
& \smaller{(0.072)} & \smaller{(0.285)} &  \smaller{(0.041)}\\
\\[-1em]
Conscientiousness & -0.015 & 0.079 & 0.008 \\
& \smaller{(0.014)} & \smaller{(0.048)} &  \smaller{(0.005)}\\
\\[-1em]
Agreeableness & -0.044*** & -0.104 & 0.022*** \\
& \smaller{(0.015)} & \smaller{(0.065)} &  \smaller{(0.005)}\\
\\[-1em]
Extroversion & -0.076*** & -0.054 & 0.032*** \\
& \smaller{(0.013)} & \smaller{(0.041)} &  \smaller{(0.005)}\\
\\[-1em]
Openness & -0.018 & 0.0128 & 0.004 \\
& \smaller{(0.021)} & \smaller{(0.071)} &  \smaller{(0.007)}\\
\\[-1em]
Neuroticism & 0.006 & -0.055 & -0.005 \\
& \smaller{(0.014)} & \smaller{(0.046)} &  \smaller{(0.005)}\\
& \\
Observations & 5,031 & 5,031 & 5,031\\
Adjusted $R^2$ & 0.60 & 0.60 & 0.60\\
\\[-1em]
\hline
\hline
\end{tabular}
 \begin{tablenotes}
            \item \emph{Notes:} Robust standard errors are in parentheses. Discount function adjusted with $\epsilon=0.001$ for $t=0$. \\
            *** Significant at the 1 percent level. \\
            ** Significant at the 5 percent level. \\
            * Significant at the 10 percent level. \\
 \end{tablenotes}
\end{threeparttable}
\end{table}

\newpage
%%%%%%%%%%%%%%%%%%%%%%%%%%%%%%%%%%%%%%%%%%%%%%%%%
%%%%%%%%%%%%%%%%%%%%%%%%%%%%%%%%%%%%%%%%%%%%%%%%%
%%%            Regression personality CRDI-CADI
%%%%%%%%%%%%%%%%%%%%%%%%%%%%%%%%%%%%%%%%%%%%%%%%%
%%%%%%%%%%%%%%%%%%%%%%%%%%%%%%%%%%%%%%%%%%%%%%%%%

\begin{table}[H]
\centering
\caption{Correlations with Time Discounting parameters (OLS)}
\label{tab:Reg_personality_CRDI-CADI}
\begin{threeparttable}
\begin{tabular}{lccc}
\hline
\hline
\\[-1em]
& & CRDI-CADI & \\
\cmidrule(lr){2-2}\cmidrule(lr){3-3}\cmidrule(lr){4-4}
\\[-1em]
 & $r$ & $\alpha$ & $\gamma$ \\
\\[-1em]
\hline
\\[-1em]
\small{Constant $(r_0,\alpha_0, \gamma_0$)} & 0.057*** & 0.032*** & 0.302**\\
& \smaller{(0.005)} & \smaller{(0.008)} & \smaller{(0.023)} \\
\\[-1em]
Gender & -0.015 & 0.041* & -0.030\\
& \smaller{(0.015)} & \smaller{(0.024)} & \smaller{(0.070)}\\
\\[-1em]
Sportweek & 0.017*** & 0.029*** &  0.109*** \\
& \smaller{(0.004)} & \smaller{(0.005)} & \smaller{(0.017)} \\
\\[-1em]
Alcoholweek & -0.043*** & 0.034*** &  -0.172*** \\
& \smaller{(0.004)} & \smaller{(0.007)} &  \smaller{(0.019)}\\
\\[-1em]
Smoker & 0.304*** & -0.094***  & 1.143*** \\
& \smaller{(0.033)} & \smaller{(0.027)} &  \smaller{(0.130)}\\
\\[-1em]
Conscientiousness & -0.023*** & 0.036*** & -0.105*** \\
& \smaller{(0.006)} & \smaller{(0.007)} &  \smaller{(0.029)}\\
\\[-1em]
Agreeableness & 0.007 & -0.108*** & -0.032 \\
& \smaller{(0.006)} & \smaller{(0.013)} &  \smaller{(0.026)}\\
\\[-1em]
Extroversion & -0.007 & -0.135*** & -0.119*** \\
& \smaller{(0.004)} & \smaller{(0.009)} &  \smaller{(0.019)}\\
\\[-1em]
Openness & -0.002 & 0.059*** & 0.036 \\
& \smaller{(0.009)} & \smaller{(0.016)} &  \smaller{(0.039)}\\
\\[-1em]
Neuroticism & -0.035*** & -0.069*** & -0.205*** \\
& \smaller{(0.006)} & \smaller{(0.010)} &  \smaller{(0.028)}\\
& \\
Observations & 5,031 & 5,031 & 5,031\\
Adjusted $R^2$ & 0.44 & 0.44 & 0.44\\
\\[-1em]
\hline
\hline
\end{tabular}
 \begin{tablenotes}
            \item \emph{Notes:} Robust standard errors are in parentheses. Discount function adjusted with $\epsilon=0.001$ for $t=0$. \\
            *** Significant at the 1 percent level.\\
            ** Significant at the 5 percent level.\\
            * Significant at the 10 percent level.
 \end{tablenotes}
\end{threeparttable}
\end{table}

\newpage

%%%%%%%%%%%%%%%%%%%%%%%%%%%%%%%%%%%%%%%%%%%%%%%%%
%%%%%%%%%%%%%%%%%%%%%%%%%%%%%%%%%%%%%%%%%%%%%%%%%
%%% Regression personality Loewenstein and Prelec
%%%%%%%%%%%%%%%%%%%%%%%%%%%%%%%%%%%%%%%%%%%%%%%%%
%%%%%%%%%%%%%%%%%%%%%%%%%%%%%%%%%%%%%%%%%%%%%%%%%

\bigskip

\begin{itemize}
    \item Loewenstein and Prelec's model as function of two variables:
\begin{eqnarray}\label{LoewPrelec} \notag
F(t,T) = \Bigg[ \frac{1+\alpha t}{1+ \alpha(t+T)} \Bigg]^{\frac{\beta}{\alpha}}
\end{eqnarray}
\end{itemize}

\begin{table}[H]
\centering
\caption{Correlations with Time Discounting parameters (OLS)}
\label{tab:Reg_personality_Loewenstein_Prelec}
\begin{threeparttable}
\begin{tabular}{lcc}
\hline
\hline
\\[-1em]
& Loewenstein and Prelec model & \\
\cmidrule(lr){2-2}\cmidrule(lr){3-3}
\\[-1em]
 & $\alpha$ & $\beta$  \\
\\[-1em]
\hline
\\[-1em]
\small{Constant $(\alpha_0, \beta_0$)} & -0.035*** & 0.014***\\
& \smaller{(0.006)} & \smaller{(0.001)} \\
\\[-1em]
Gender & -0.020*** & -0.005***\\
& \smaller{(0.003)} & \smaller{(0.001)}\\
\\[-1em]
Sportweek (x$10^{-3}$) & 0.021 & -0.378*** \\
& \smaller{(0.000)} & \smaller{(0.000)} \\
\\[-1em]
Alcoholweek & -0.002*** & -0.001*** \\
& \smaller{(0.000)} & \smaller{(0.000)}\\
\\[-1em]
Smoker & -0.019*** & 0.007***\\
& \smaller{(0.003)} & \smaller{(0.001)} \\
\\[-1em]
Conscientiousness & 0.011*** & 0.003*** \\
& \smaller{(0.001)} & \smaller{(0.000)} \\
\\[-1em]
Agreeableness & 0.006*** & 0.001***  \\
& \smaller{(0.001)} & \smaller{(0.000)}\\
\\[-1em]
Extroversion & 0.005*** & -0.001***\\
& \smaller{(0.001)} & \smaller{(0.000)}\\
\\[-1em]
Openness & -0.013*** & -0.003*** \\
& \smaller{(0.001)} & \smaller{(0.000)}\\
\\[-1em]
Neuroticism & 0.022*** & 0.001*** \\
& \smaller{(0.001)} & \smaller{(0.000)}\\
& \\
Observations & 5,031 & 5,031\\
Adjusted $R^2$ & 0.43 & 0.43\\
\\[-1em]
\hline
\hline
\end{tabular}
 \begin{tablenotes}
            \item \emph{Notes:} Robust standard errors are in parentheses.\\
            *** Significant at the 1 percent level.\\
            ** Significant at the 5 percent level.\\
            * Significant at the 10 percent level.
 \end{tablenotes}
\end{threeparttable}
\end{table}

%%%%%%%%%%%%%%%%%%%%%%%%%%%%%%%%%%%%%%%%%%%%%%%%%
%%%%%%%%%%%%%%%%%%%%%%%%%%%%%%%%%%%%%%%%%%%%%%%%%
%%%%%%%%%%%%%%        Samuelson
%%%%%%%%%%%%%%%%%%%%%%%%%%%%%%%%%%%%%%%%%%%%%%%%%
%%%%%%%%%%%%%%%%%%%%%%%%%%%%%%%%%%%%%%%%%%%%%%%%%

\newpage

\bigskip

\begin{itemize}
    \item Samuelson's model:
\begin{eqnarray}\label{Samuelson} \notag
F(t) = e^{- \beta  t}
\end{eqnarray}
\end{itemize}

\begin{table}[H]
\centering
\caption{Correlations with Time Discounting parameters (OLS)}
\label{tab:Reg_personality_Samuelson}
\begin{threeparttable}
\begin{tabular}{lcc}
\hline
\hline
\\[-1em]
& Samuelson model (one parameter) \\
\cmidrule(lr){2-2}
\\[-1em]
 & $\beta$  \\
\\[-1em]
\hline
\\[-1em]
Constant $(\beta_0$) & 0.030*** \\
& \smaller{(0.001)} \\
\\[-1em]
Gender (x$10^{-4}$) & -1.198\\
& \smaller{(0.000)} \\
\\[-1em]
Sportweek (x$10^{-3}$) & -0.185 \\
& \smaller{(0.000)}  \\
\\[-1em]
Alcoholweek & -0.001***  \\
& \smaller{(0.000)} \\
\\[-1em]
Smoker & 0.033*** \\
& \smaller{(0.001)}  \\
\\[-1em]
Conscientiousness (x$10^{-4}$) & 1.714  \\
& \smaller{(0.000)}  \\
\\[-1em]
Agreeableness (x$10^{-4}$) & 1.995*  \\
& \smaller{(0.001)} \\
\\[-1em]
Extroversion (x$10^{-4}$) & -3.272*** \\
& \smaller{(0.000)} \\
\\[-1em]
Openness (x$10^{-4}$) & -2.227*   \\
& \smaller{(0.000)} \\
\\[-1em]
Neuroticism (x$10^{-4}$) & -0.133  \\
& \smaller{(0.000)} \\
& \\
Observations & 5,031  \\
Adjusted $R^2$ & 0.30 \\
\\[-1em]
\hline
\hline
\end{tabular}
 \begin{tablenotes}
            \item \emph{Notes:} Robust standard errors are in parentheses.\\
            *** Significant at the 1 percent level.\\
            ** Significant at the 5 percent level.\\
            * Significant at the 10 percent level.
 \end{tablenotes}
\end{threeparttable}
\end{table}

\newpage
%%%%%%%%%%%%%%%%%%%%%%%%%%%%%%%%%%%%%%%%%%%%%%%%%
%%%%%%%%%%%%%%%%%%%%%%%%%%%%%%%%%%%%%%%%%%%%%%%%%
%%% Parameters of students attending the General Mathematics' course
%%%%%%%%%%%%%%%%%%%%%%%%%%%%%%%%%%%%%%%%%%%%%%%%%
%%%%%%%%%%%%%%%%%%%%%%%%%%%%%%%%%%%%%%%%%%%%%%%%%
\bigskip

\begin{itemize}
    \item Parameters of students attending the General Mathematics' course.
\end{itemize}

%%%%%%%% TABLES WITH REGRESSIONS FOR EACH MODEL  (general mathematics)

%%%%%%%%%%%%%%%%%%%%%%%%%%%%%%%%%%%%%%%%%%%%%%%%%
%%%%%%%%%%%%%%%%%%%%%%%%%%%%%%%%%%%%%%%%%%%%%%%%%
%%%%%%%%%%%      CADI CADI  (general mathematics)
%%%%%%%%%%%%%%%%%%%%%%%%%%%%%%%%%%%%%%%%%%%%%%%%%
%%%%%%%%%%%%%%%%%%%%%%%%%%%%%%%%%%%%%%%%%%%%%%%%%

\begin{table}[H]
\centering
\caption{Correlations with Time Discounting parameters (OLS)}
\label{tab:Reg_personality_General_CADI-CADI}
\begin{threeparttable}
\begin{tabular}{lccc}
\hline
\hline
\\[-1em]
& & CADI-CADI & \\
\cmidrule(lr){2-2}\cmidrule(lr){3-3}\cmidrule(lr){4-4}
\\[-1em]
 & $r$ & $\delta$ (x$10^{-3}$) & $\gamma$ \\
\\[-1em]
\hline
\\[-1em]
\small{Constant $(r_0,\delta_0, \gamma_0$)} & 0.018*** & -0.036 & 0.045***\\
& \smaller{(0.002)} & \smaller{(0.000)} & \smaller{(0.005)} \\
\\[-1em]
Gender & 0.001*** & 0.0691*** & 0.022**\\
& \smaller{(0.001)} & \smaller{(0.000)} & \smaller{(0.011)}\\
\\[-1em]
Sportweek & -0.001*** & 20.822** &  -0.021*** \\
& \smaller{(0.000)} & \smaller{(0.012)} & \smaller{(0.002)} \\
\\[-1em]
Alcoholweek & 0.002*** & -0.033*** &  0.044*** \\
& \smaller{(0.000)} & \smaller{(0.016)} &  \smaller{(0.004)}\\
\\[-1em]
Smoker & 0.001 & 0.0495**  & 0.008 \\
& \smaller{(0.001)} & \smaller{(0.000)} &  \smaller{(0.013)}\\
\\[-1em]
Conscientiousness & 0.001*** & 0.011 & 0.026*** \\
& \smaller{(0.0001)} & \smaller{(0.000)} &  \smaller{(0.004)}\\
\\[-1em]
Agreeableness & -0.002*** & 3.693 & -0.039*** \\
& \smaller{(0.000)} & \smaller{(0.023)} &  \smaller{(0.005)}\\
\\[-1em]
Extroversion & 0.002*** & -7.783 & 0.037*** \\
& \smaller{(0.000)} & \smaller{(0.021)} &  \smaller{(0.005)}\\
\\[-1em]
Openness & -0.0003 & -0.013 & -0.006 \\
& \smaller{(0.000)} & \smaller{(0.024)} &  \smaller{(0.0.005)}\\
\\[-1em]
Neuroticism & -0.001*** & 0.027*** & -0.024*** \\
& \smaller{(0.000)} & \smaller{(0.020)} &  \smaller{(0.004)}\\
& \\
Observations & 6,794 & 6,794 & 6,794\\
Adjusted $R^2$ & 0.60 & 0.60 & 0.60\\
\\[-1em]
\hline
\hline
\end{tabular}
 \begin{tablenotes}
            \item \emph{Notes:} Robust standard errors are in parentheses.\\
            *** Significant at the 1 percent level.\\
            ** Significant at the 5 percent level.\\
            * Significant at the 10 percent level.
 \end{tablenotes}
\end{threeparttable}
\end{table}

\newpage
%%%%%%%%%%%%%%%%%%%%%%%%%%%%%%%%%%%%%%%%%%%%%%%%%
%%%%%%%%%%%%%%%%%%%%%%%%%%%%%%%%%%%%%%%%%%%%%%%%%
%%%%%%%%%%     CADI CRDI    (general mathematics)
%%%%%%%%%%%%%%%%%%%%%%%%%%%%%%%%%%%%%%%%%%%%%%%%%
%%%%%%%%%%%%%%%%%%%%%%%%%%%%%%%%%%%%%%%%%%%%%%%%%

\begin{table}[H]
\centering
\caption{Correlations with Time Discounting parameters (OLS)}
\label{tab:Reg_personality_General_CADI-CRDI}
\begin{threeparttable}
\begin{tabular}{lccc}
\hline
\hline
\\[-1em]
& & CADI-CRDI & \\
\cmidrule(lr){2-2}\cmidrule(lr){3-3}\cmidrule(lr){4-4}
\\[-1em]
 & $r$ & $\delta$ & $\beta$ \\
\\[-1em]
\hline
\\[-1em]
\small{Constant $(r_0,\delta_0, \beta_0$)} & 0.026*** & -0.017*** & -0.806***\\
& \smaller{(0.001)} & \smaller{(0.004)} & \smaller{(0.009)} \\
\\[-1em]
Gender & -0.354*** & -1.041*** & -0.029***\\
& \smaller{(0.034)} & \smaller{(0.155)} & \smaller{(0.003)} \\
\\[-1em]
Sportweek & 0.013** & 0.023 &  0.001*** \\
& \smaller{(0.005)} & \smaller{(0.019)} & \smaller{(0.000)} \\
\\[-1em]
Alcoholweek & -0.067*** & -0.106** &  0.005*** \\
& \smaller{(0.006)} & \smaller{(0.047)} &  \smaller{(0.001)}\\
\\[-1em]
Smoker & 0.808*** & 1.219***  & 0.074*** \\
& \smaller{(0.032)} & \smaller{(0.169)} &  \smaller{(0.005)}\\
\\[-1em]
Conscientiousness & -0.112*** & -0.196*** & -0.010*** \\
& \smaller{(0.010)} & \smaller{(0.048)} &  \smaller{(0.001)}\\
\\[-1em]
Agreeableness & -0.180*** & -0.261*** & -0.013*** \\
& \smaller{(0.014)} & \smaller{(0.056)} &  \smaller{(0.001)}\\
\\[-1em]
Extroversion & 0.095*** & 0.048 & 0.007*** \\
& \smaller{(0.015)} & \smaller{(0.054)} &  \smaller{(0.001)}\\
\\[-1em]
Openness & 0.054*** & 0.059 & 0.005*** \\
& \smaller{(0.013)} & \smaller{(0.058)} &  \smaller{(0.001)}\\
\\[-1em]
Neuroticism & -0.134*** & -0.218*** & -0.010*** \\
& \smaller{(0.011)} & \smaller{(0.040)} &  \smaller{(0.001)}\\
& \\
Observations & 6,794 & 6,794 & 6,794 \\
Adjusted $R^2$ & 0.40 & 0.40 & 0.40  \\
\\[-1em]
\hline
\hline
\end{tabular}
 \begin{tablenotes}
            \item \emph{Notes:} Robust standard errors are in parentheses.\\
            *** Significant at the 1 percent level.
 \end{tablenotes}
\end{threeparttable}
\end{table}

\newpage
%%%%%%%%%%%%%%%%%%%%%%%%%%%%%%%%%%%%%%%%%%%%%%%%%
%%%%%%%%%%%%%%%%%%%%%%%%%%%%%%%%%%%%%%%%%%%%%%%%%
%%%%%%%%%%%     CRDI CRDI   (general mathematics)
%%%%%%%%%%%%%%%%%%%%%%%%%%%%%%%%%%%%%%%%%%%%%%%%%
%%%%%%%%%%%%%%%%%%%%%%%%%%%%%%%%%%%%%%%%%%%%%%%%%

\begin{table}[H]
\centering
\caption{Correlations with Time Discounting parameters (OLS)}
\label{tab:Reg_personality_General_CRDI-CRDI}
\begin{threeparttable}
\begin{tabular}{lccc}
\hline
\hline
\\[-1em]
& & CRDI-CRDI & \\
\cmidrule(lr){2-2}\cmidrule(lr){3-3}\cmidrule(lr){4-4}
\\[-1em]
 & $r$ ($\times 10^{-2}$) & $\alpha$ ($\times 10^{-2}$) & $\beta$ \\
\\[-1em]
\hline
\\[-1em]
\small{Constant $(r_0,\alpha_0, \beta_0$)} & 1.366*** & 0.964** & -0.332***\\
& \smaller{(0.001)} & \smaller{(0.004)} & \smaller{(0.035)} \\
\\[-1em]
Gender & -0.015 & -0.321 & 0.008\\
& \smaller{(0.000)} & \smaller{(0.001)} & \smaller{(0.009)}\\
\\[-1em]
Sportweek & 0.003 & 0.019 &  -0.002 \\
& \smaller{(0.000)} & \smaller{(0.000)} & \smaller{(0.001)} \\
\\[-1em]
Alcoholweek & 0.010 & -0.025 &  -0.003 \\
& \smaller{(0.000)} & \smaller{(0.000)} &  \smaller{(0.002)}\\
\\[-1em]
Smoker & 0.020 & -0.164  & -0.001 \\
& \smaller{(0.000)} & \smaller{(0.001)} &  \smaller{(0.009)}\\
\\[-1em]
Conscientiousness & -0.006 & -0.083** & 0.002 \\
& \smaller{(0.000)} & \smaller{(0.000)} &  \smaller{(0.004)}\\
\\[-1em]
Agreeableness & -0.018 & -0.036 & 0.003 \\
& \smaller{(0.000)} & \smaller{(0.000)} &  \smaller{(0.004)}\\
\\[-1em]
Extroversion & 0.000 & -0.095** & 0.003 \\
& \smaller{(0.000)} & \smaller{(0.000)} &  \smaller{(0.004)}\\
\\[-1em]
Openness & -0.007 & 0.024 & 0.001 \\
& \smaller{(0.000)} & \smaller{(0.001)} &  \smaller{(0.004)}\\
\\[-1em]
Neuroticism & -0.019 & -0.003 & 0.007** \\
& \smaller{(0.000)} & \smaller{(0.000)} &  \smaller{(0.003)}\\
& \\
Observations & 6,794 & 6,794 & 6,794\\
Adjusted $R^2$ & 0.62 & 0.62 & 0.62\\
\\[-1em]
\hline
\hline
\end{tabular}
 \begin{tablenotes}
            \item \emph{Notes:} Robust standard errors are in parentheses. Discount function adjusted with $\epsilon = 0.001$ for $t = 0$.\\
            *** Significant at the 1 percent level. \\
            ** Significant at the 5 percent level. \\
            * Significant at the 10 percent level.
 \end{tablenotes}
\end{threeparttable}
\end{table}

\newpage

%%%%%%%%%%%%%%%%%%%%%%%%%%%%%%%%%%%%%%%%%%%%%%%%%
%%%%%%%%%%%%%%%%%%%%%%%%%%%%%%%%%%%%%%%%%%%%%%%%%
%%%  CRDI-CADI (general mathematics)
%%%%%%%%%%%%%%%%%%%%%%%%%%%%%%%%%%%%%%%%%%%%%%%%%
%%%%%%%%%%%%%%%%%%%%%%%%%%%%%%%%%%%%%%%%%%%%%%%%%

\begin{table}[H]
\centering
\caption{Correlations with Time Discounting parameters (OLS)}
\label{tab:Reg_personality_General_CRDI-CADI}
\begin{threeparttable}
\begin{tabular}{lccc}
\hline
\hline
\\[-1em]
& & CRDI-CADI & \\
\cmidrule(lr){2-2}\cmidrule(lr){3-3}\cmidrule(lr){4-4}
\\[-1em]
 & $r$ & $\alpha$ ($\times 10^{-2}$) & $\gamma$ \\
\\[-1em]
\hline
\\[-1em]
\small{Constant $(r_0,\alpha_0, \gamma_0$)} & 0.101*** & -0.317 & 0.529***\\
& \smaller{(0.015)} & \smaller{(0.004)} & \smaller{(0.079)} \\
\\[-1em]
Gender & 0.011*** & -0.011 & 0.523***\\
& \smaller{(0.003)} & \smaller{(0.001)} & \smaller{(0.019)}\\
\\[-1em]
Sportweek & -0.003*** & 0.021 &  -0.017*** \\
& \smaller{(0.001)} & \smaller{(0.000)} & \smaller{(0.003)} \\
\\[-1em]
Alcoholweek & 0.004*** & 0.014 &  0.020*** \\
& \smaller{(0.001)} & \smaller{(0.000)} &  \smaller{(0.007)}\\
\\[-1em]
Smoker & -0.011*** & -0.337***  & -0.065*** \\
& \smaller{(0.003)} & \smaller{(0.001)} &  \smaller{(0.018)}\\
\\[-1em]
Conscientiousness & -0.010*** & -0.085** & -0.054*** \\
& \smaller{(0.002)} & \smaller{(0.000)} &  \smaller{(0.009)}\\
\\[-1em]
Agreeableness & 0.000 & 0.005 & 0.0003 \\
& \smaller{(0.001)} & \smaller{(0.000)} &  \smaller{(0.007)}\\
\\[-1em]
Extroversion & -0.002 & -0.036 & -0.014* \\
& \smaller{(0.002)} & \smaller{(0.000)} &  \smaller{(0.008)}\\
\\[-1em]
Openness & 0.003** & 0.182*** & 0.021** \\
& \smaller{(0.002)} & \smaller{(0.001)} &  \smaller{(0.008)}\\
\\[-1em]
Neuroticism & 0.002** & -0.059* & 0.011* \\
& \smaller{(0.001)} & \smaller{(0.003)} &  \smaller{(0.006)}\\
& \\
Observations & 6,794 & 6,794 & 6,794 \\
Adjusted $R^2$ & 0.48 & 0.48 & 0.48\\
\\[-1em]
\hline
\hline
\end{tabular}
 \begin{tablenotes}
            \item \emph{Notes:} Robust standard errors are in parentheses. Discount function adjusted with $\epsilon = 0.001$ for $t = 0$.\\
            *** Significant at the 1 percent level.\\
            ** Significant at the 5 percent level.\\
            * Significant at the 10 percent level.
 \end{tablenotes}
\end{threeparttable}
\end{table}

\newpage
%%%%%%%%%%%%%%%%%%%%%%%%%%%%%%%%%%%%%%%%%%%%%%%%%
%%%%%%%%%%%%%%%%%%%%%%%%%%%%%%%%%%%%%%%%%%%%%%%%%
%%%  Loewestein e Prelec (general mathematics)
%%%%%%%%%%%%%%%%%%%%%%%%%%%%%%%%%%%%%%%%%%%%%%%%%
%%%%%%%%%%%%%%%%%%%%%%%%%%%%%%%%%%%%%%%%%%%%%%%%%

\begin{table}[H]
\centering
\caption{Correlations with Time Discounting parameters (OLS)}
\label{tab:Reg_personality_General_Loewenstein_Prelec}
\begin{threeparttable}
\begin{tabular}{lcc}
\hline
\hline
\\[-1em]
& Loewenstein and Prelec model & \\
\cmidrule(lr){2-2}\cmidrule(lr){3-3}
\\[-1em]
 & $\alpha$ & $\beta$  \\
\\[-1em]
\hline
\\[-1em]
\small{Constant $(\alpha_0, \beta_0$)} & 0.021*** & 0.036***\\
& \smaller{(0.002)} & \smaller{(0.002)} \\
\\[-1em]
Gender & 0.003*** & 0.004***\\
& \smaller{(0.001)} & \smaller{(0.001)}\\
\\[-1em]
Sportweek (x$10^{-3}$) & -0.713*** & -0.824*** \\
& \smaller{(0.000)} & \smaller{(0.000)} \\
\\[-1em]
Alcoholweek & 0.001*** & 0.002*** \\
& \smaller{(0.000)} & \smaller{(0.000)}\\
\\[-1em]
Smoker & 0.005*** & 0.006***\\
& \smaller{(0.006)} & \smaller{(0.001)} \\
\\[-1em]
Conscientiousness & -0.003*** & 0.004*** \\
& \smaller{(0.000)} & \smaller{(0.000)} \\
\\[-1em]
Agreeableness & 0.001*** & 0.001***  \\
& \smaller{(0.000)} & \smaller{(0.000)}\\
\\[-1em]
Extroversion (x$10^{-3}$) & 0.039 & 0.083\\
& \smaller{(0.000)} & \smaller{(0.000)}\\
\\[-1em]
Openness & -0.001*** & -0.002*** \\
& \smaller{(0.000)} & \smaller{(0.000)}\\
\\[-1em]
Neuroticism & 0.001*** & 0.002*** \\
& \smaller{(0.000)} & \smaller{(0.000)}\\
& \\
Observations & 6,794 & 6,794\\
Adjusted $R^2$ & 0.60 & 0.60\\
\\[-1em]
\hline
\hline
\end{tabular}
 \begin{tablenotes}
            \item \emph{Notes:} Robust standard errors are in parentheses.\\
            *** Significant at the 1 percent level.\\
            ** Significant at the 5 percent level.\\
            * Significant at the 10 percent level.
 \end{tablenotes}
\end{threeparttable}
\end{table}

\newpage
%%%%%%%%%%%%%%%%%%%%%%%%%%%%%%%%%%%%%%%%%%%%%%%%%
%%%%%%%%%%%%%%%%%%%%%%%%%%%%%%%%%%%%%%%%%%%%%%%%%
%%%  Samuelson (general mathematics)
%%%%%%%%%%%%%%%%%%%%%%%%%%%%%%%%%%%%%%%%%%%%%%%%%
%%%%%%%%%%%%%%%%%%%%%%%%%%%%%%%%%%%%%%%%%%%%%%%%%

\begin{table}[H]
\centering
\caption{Correlations with Time Discounting parameters (OLS)}
\label{tab:Reg_personality_Genearal_Samuelson}
\begin{threeparttable}
\begin{tabular}{lcc}
\hline
\hline
\\[-1em]
& Samuelson model (one parameter) \\
\cmidrule(lr){2-2}
\\[-1em]
 & $\beta$ (x$10^{-3}$) \\
\\[-1em]
\hline
\\[-1em]
Constant $(\beta_0$) & 58.386*** \\
& \smaller{(0.001)} \\
\\[-1em]
Gender & 0.286 \\
& \smaller{(0.000)} \\
\\[-1em]
Sportweek & -0.054 \\
& \smaller{(0.000)}  \\
\\[-1em]
Alcoholweek & 0.122  \\
& \smaller{(0.000)} \\
\\[-1em]
Smoker & 0.263 \\
& \smaller{(0.000)}  \\
\\[-1em]
Conscientiousness & 0.066  \\
& \smaller{(0.000)}  \\
\\[-1em]
Agreeableness & -0.059  \\
& \smaller{(0.000)} \\
\\[-1em]
Extroversion & 0.174 \\
& \smaller{(0.000)} \\
\\[-1em]
Openness  & -0.181   \\
& \smaller{(0.000)} \\
\\[-1em]
Neuroticism & 0.123  \\
& \smaller{(0.000)} \\
& \\
Observations & 6,794  \\
Adjusted $R^2$ & 0.34 \\
\\[-1em]
\hline
\hline
\end{tabular}
 \begin{tablenotes}
            \item \emph{Notes:} Robust standard errors are in parentheses.\\
            *** Significant at the 1 percent level.\\
            ** Significant at the 5 percent level.\\
            * Significant at the 10 percent level.
 \end{tablenotes}
\end{threeparttable}
\end{table}

\newpage

\end{document}